\theoremstyle{definition} 
\newtheorem{Def}{Definition}[section]
\newtheorem{Theo}[Def]{Theorem}
\newtheorem{Prop}[Def]{Proposition}
\newtheorem{Lem}[Def]{Lemma}
\newtheorem{Cor}[Def]{Corollary}
\newtheorem{Rem}[Def]{Remark}
\newcommand\blfootnote[1]{%
  \begingroup
  \renewcommand\thefootnote{}\footnote{#1}%
  \addtocounter{footnote}{-1}%
  \endgroup
}
\newcommand{\LS}{\mathcal{LS}}
\title{How far is my network from being edge-based? Proximity measures for edge-basedness of unrooted phylogenetic networks}
\author[1,$\ast$]{Mareike Fischer}
\author[1]{Tom Niklas Hamann}
\author[2]{Kristina Wicke}
\affil[1]{Institute of Mathematics and Computer Science, University of Greifswald, Greifswald, Germany}
\affil[2]{Department of Mathematics, The Ohio State University, Columbus OH, USA}
\date{}
\begin{document}

\maketitle

\begin{abstract}
Phylogenetic networks which are, as opposed to trees, suitable to describe  processes like hybridization and horizontal gene transfer, play a substantial role in evolutionary research. However, while non-treelike events need to be taken into account, they are relatively rare, which implies that biologically relevant networks are often assumed to be similar to trees in the sense that they can be obtained by taking a tree and adding some additional edges. This observation led to the concept of so-called tree-based networks, which recently gained substantial interest in the literature. Unfortunately, though, identifying such networks in the unrooted case is an NP-complete problem. Therefore, classes of networks for which tree-basedness can be guaranteed are of the utmost interest.

The most prominent such class is formed by so-called edge-based networks, which have a close relationship to generalized series-parallel graphs known from graph theory. They can be identified in linear time and are in some regards biologically more plausible than general tree-based networks. While concerning the latter proximity measures for general networks have already been introduced, such measures are not yet available for edge-basedness. This means that for an arbitrary unrooted network, the \enquote{distance} to the nearest edge-based network could so far not be determined. The present manuscript fills this gap by introducing two classes of proximity measures for edge-basedness, one based on the given network itself and one based on its so-called leaf shrink graph ($\LS$ graph). Both classes contain four different proximity measures, whose similarities and differences we study subsequently.
\end{abstract}

{\it Keywords:} phylogenetic network; tree-based network; edge-based network; GSP graph; $K_4$-minor free graph \\

\blfootnote{$^\ast$Corresponding author\\ \textit{Email address:} \url{email@mareikefischer.de}}

\section{Introduction}
Traditionally, phylogenetic trees were used to describe the relationships between different species. But unfortunately, trees cannot be used to describe  evolutionary events like horizontal gene transfer or hybridization  \citep{Amaral2014,Cui2013,Fontaine2015}, as these introduce cycles in the underlying graph. Therefore, phylogenetic networks were introduced and are nowadays widely acknowledged to be better descriptors of evolution than trees.

On the other hand, though, non-treelike events (so-called reticulation events) are for most species relatively rare, which implies that networks with fewer reticulations are usually biologically more plausible than networks which contain plenty of them. Therefore, researchers have been suggesting different concepts for \enquote{tree-like} networks, i.e., networks, which are still very similar to trees. Such concepts include, but are not limited to, level-$1$ networks (also known as galled trees) \citep{Choy2005,Gusfield2003,Jansson2006}, tree-child networks \citep{Cardona2009}, and tree-based networks \citep{Francis2015, Francis2018}. Particularly the latter have gained significant interest in recent literature
\citep{Fischer2020, FISCHER2021a, Francis2018a, Hendriksen2018, Jetten2018}.

Tree-based networks can be thought of as trees to which a few edges have been added, so they contain a so-called  \enquote{support tree}. The degree-1 vertices (\enquote{leaves}) of such a network coincide with the leaves of said support tree, meaning that the support tree is simply a spanning tree with the same number of leaves as the network. A network is called tree-based if it contains such a spanning tree. Such networks are interesting for the above mentioned biological reasons, but they are also of high interest from a graph theoretical point of view, for instance because finding spanning trees with as few leaves as possible is a graph theoretic problem appearing also in other contexts \citep{Bousquet2020,Salamon2008} and because their identification can be shown to be NP-complete by a reduction from the Hamiltonian path problem \citep{Francis2018}.

However, due to the NP-completeness of tree-basedness, this concept's value for practical applications seems limited. This subsequently led to the study of classes of networks for which tree-basedness can be guaranteed \citep{Fischer2020}. The most prominent such class are edge-based networks. When they were first introduced, their appeal seemed twofold: first, they are guaranteed to be tree-based, and second, they are so far the only known such class of networks which can be identified in linear time (thanks to their close relationship to  so-called generalized series-parallel (GSP) graphs, a well-known concept from graph theory \citep{Ho1999}).

Notwithstanding their merits, edge-based networks are often viewed as a mere subclass of the biologically relevant class of tree-based networks. However, in the present manuscript we argue that this view might be flawed. In fact, edge-basedness might make networks even more biologically plausible than tree-basedness. This is due to the fact that adding more and more edges makes a network more likely to be tree-based (as it increases the chance of containing a spanning tree with few leaves \citep{Fischer2020a}), whereas -- as we will show -- deleting edges can make a non-edge-based network edge-based (and adding edges can never achieve that). Regarding the above mentioned original intention of introducing tree-based networks, which was to develop a concept that would capture networks that are essentially similar to trees (and should thus, intuitively, contain only few reticulations), edge-based networks might thus be the more suitable concept.

Hence, edge-based networks are both of mathematical and biological interest. But many networks are not edge-based, and it is therefore interesting to determine their \enquote{distance} from the nearest edge-based network. In the present manuscript, we introduce two classes of proximity measures which can be used for this purpose: while the first class acts on the network itself, the second class acts on its so-called $\LS$ graph. We subsequently discuss the similarities and differences between the measures introduced in this study and analyze the computational complexity of computing them. We conclude our manuscript with a brief discussion of our results and by highlighting various possible directions of  future research.

\section{Preliminaries}

In this section, we introduce all concepts relevant for the present manuscript. We start with some general definitions.

\subsection{Definitions and notation}
\paragraph{Basic graph-theoretical concepts.}
Throughout this paper, $G=(V(G),E(G))$ (or $G=(V,E)$ for brevity) will denote a finite graph with vertex set $V(G)$ and edge set $E(G)$.
Note that graphs in this manuscript may contain parallel edges and loops. Whenever we require graphs without parallel edges and/or loops, we will refer to them as \emph{simple graphs}, and whenever parallel edges are allowed but loops are not, we will use the term \emph{loopless} graphs. In this manuscript, the \emph{order} of a graph $G$ is defined as $\lvert V(G)\rvert$. 

If $G=(V,E)$ is a graph and $V' \subseteq V$ is a subset of its vertices, then the \emph{induced subgraph} $G[V']$ is the graph whose vertex set is $V'$ and whose edge set consists of all edges of $E$ with both endpoints in $V'$.

In the following, it will be useful to consider decompositions of connected graphs. Therefore, let $G=(V,E)$ be a connected graph. A \emph{cut edge}, or \emph{bridge}, of $G$ is an edge $e$ whose removal disconnects the graph. Similarly, a vertex $v$ is a \emph{cut vertex}, or \emph{articulation}, if deleting $v$ and its incident edges disconnects the graph. A \emph{blob} in a connected graph $G$ is a maximal connected subgraph of $G$ that has no cut edge. Note, however, that a blob may contain cut vertices. If a blob consists only of one vertex, we call the blob \emph{trivial}.
A \emph{block}, on the other hand, is a maximal biconnected subgraph of $G$, i.e., a maximal induced subgraph of $G$ that remains connected if any of its vertices is removed. In particular, a block does not contain cut vertices. Note that for technical reasons the complete graph $K_2$, i.e., a single edge, is considered to be a biconnected graph as well.

Another graph-theoretical concept relevant for the present manuscript is the notion of minors. A graph $G'=(V',E')$ is a \emph{minor} of a graph $G=(V,E)$ if $G'$ can be obtained from $G$ by a series of vertex deletions, edge deletions, and/or edge contractions. Moreover, $G'$ is called a \emph{topological minor} of $G$ if a subdivision of $G'$ is isomorphic to a subgraph of $G$. Here, a \emph{subdivision} of a graph $G$ is a graph resulting from subdividing edges of $G$, where \emph{subdividing} an edge $e$, $\{u,v\}$ say, refers to the process of deleting $e$, adding a new vertex $w$, and adding the edges $\{u,w\}$ and $\{w,v\}$. Note that every topological minor is also a minor \cite[Proposition 1.7.3]{Diestel2017}, whereas the converse is not true in general. If a graph $G$ does not contain $G'$ as a (topological) minor, we say that $G$ is \emph{$G'$-(topological) minor free}.

\paragraph{Phylogenetic networks and related concepts.}
Let $X$ be a non-empty finite set (e.g., of taxa or species). An \emph{unrooted phylogenetic network on $X$} is a connected simple graph $N=(V,E)$ without degree-2 vertices, where the set of degree-1 vertices (referred to as the \emph{leaves}) is bijectively labeled by and identified with $X$. Such an unrooted phylogenetic network is called an \emph{unrooted phylogenetic tree} if the underlying graph structure is a tree.
Note that we do not impose any additional constraints on the non-leaf vertices of $N$ in this manuscript; in particular, we do not restrict the analysis to unrooted binary phylogenetic networks in which each interior vertex $v \in V \setminus X$ has degree precisely 3.  For the remainder of the paper, we will refer to unrooted phylogenetic networks and unrooted phylogenetic trees as phylogenetic networks and phylogenetic trees, respectively, as we only consider unrooted ones.

Following \cite{FISCHER2021a}, we call a phylogenetic network $N$ \emph{proper} if the removal of any cut edge or cut vertex present in the network leads to connected components containing at least one element of $X$ each.

Moreover, following \cite{Fischer2020a}, we say that a phylogenetic network $N$ has \emph{tier $k$}, if $k$ is the minimum number of edges of $N$ whose deletion turns $N$ into a tree. Note that the tier does not depend on $N$ being a phylogenetic network and can be defined analogously for connected graphs. A related concept is the \emph{level} of a phylogenetic network. More precisely, $N$ is said to have \emph{level $k$} or to be a \emph{level-$k$} network if at most $k$ edges have to be removed from each blob of $N$ to obtain a tree. In other words, $N$ is a level-$k$ network, if the maximal tier of the blobs of $N$ is $k$. In particular, for any network $N$, level$(N)\leq$ tier$(N)$.

Finally, a phylogenetic network $N=(V,E)$ on $X$ is called \emph{tree-based} if there is a spanning tree $T=(V,E')$ in $N$ (with $E' \subseteq E$) whose leaf set is equal to $X$. 
Note that not every phylogenetic network is tree-based and deciding whether an unrooted phylogenetic network is tree-based is an NP-complete problem \cite{Francis2018}. However, a necessary condition for a network $N$ to be tree-based is that $N$ is proper \cite{FISCHER2021a}. Examples of three proper phylogenetic networks, two of them tree-based, one of them non-tree-based, are shown in Figure \ref{Fig_3Networks}.

\paragraph{Edge-based phylogenetic networks and (generalized) series-parallel graphs.}
We now recall the so-called \emph{leaf-shrinking procedure} from \cite{Fischer2020,Fischer2020Correction}. Let $G=(V,E)$ be a connected graph with at least two vertices, i.e., $|V(G)| \geq 2$. Then, the \emph{leaf shrink graph} ($\LS$ graph for short) $\LS(G)=(V_\LS,E_\LS)$ is the unique graph obtained from $G$ by performing an arbitrary sequence of the following operations until no further reduction is possible and such that each intermediate (and the final) graph has at least two vertices:
\begin{enumerate}[(i)]
    \item Delete a leaf (i.e., a degree-1 vertex) and its incident edge;
    \item Suppress a degree-2 vertex;
    \item Delete one copy of a multiple (also called parallel) edge, i.e., if $e_1 = e_2 \in E(G)$, delete $e_2$;
    \item Delete a loop, i.e., if $e=\{u,u\} \in E(G)$, delete $e$.
\end{enumerate}
Note that the smallest graph (in terms of the number of vertices and the number of edges) a graph $G$ may be reduced to is the complete graph on two vertices $K_2$, i.e., a single edge. This motivates the following definition adapted from \cite{Fischer2020,Fischer2020Correction}.

\begin{Def}
    Let $G$ be a connected graph with $|V(G)|\geq 2$. If the leaf shrink graph $\LS(G)$ of $G$ is a single edge, $G$ is called \emph{edge-based}.
    If $G=N$ is a phylogenetic network on $X$ with $|X| \geq 2$, and $\LS(N)$ is a single edge, $N$ is called an \emph{edge-based} network.
\end{Def}

\noindent Note that the original definition of edge-based networks given in \cite{Fischer2020, Fischer2020Correction} required the network to be proper; however, we will later on show that every edge-based network is proper (cf. Corollary \ref{cor_eb=proper}).

\begin{figure}
    \centering
    \includegraphics[scale=0.4]{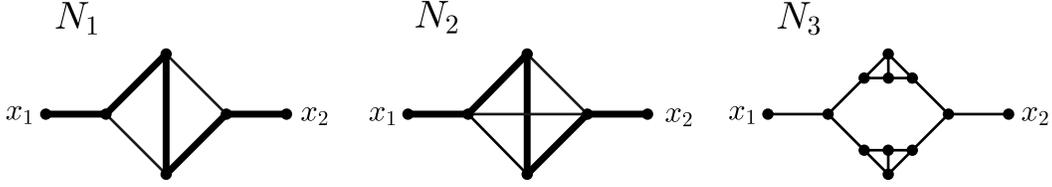}
    \caption{Three proper phylogenetic networks $N_1, N_2$, and $N_3$ on $X = \{x_1,x_2\}$. Network $N_1$ is edge-based and tree-based, whereas $N_2$ is tree-based but not edge-based (in both cases, a spanning tree $T$ with leaf set equal to $X$ is highlighted in bold). Network $N_3$ is not tree-based and thus in particular not edge-based.}
    \label{Fig_3Networks}
\end{figure}

Moreover, note that edge-based networks are closely related to a well-known family of graphs, namely the family of generalized series-parallel graphs.
Therefore, recall that a connected and loopless graph $G$ is called a \emph{generalized series-parallel} (GSP) graph if it can be reduced to $K_2$ by a series of operations of types (i), (ii), and (iii), i.e., by deleting degree-1 vertices, suppressing degree-2 vertices, and deleting copies of parallel edges. If $G$ can be reduced to $K_2$ by only using operations of types (ii) and (iii), it is called a \emph{series-parallel} (SP) graph. Note that there is the following close connection between GSP and SP graphs:

\begin{Lem}[{adapted from \cite[Lemma 3.2]{Ho1999}}] \label{lem:GSP_blocks_SP}
A connected graph $G$ is a GSP graph if and only if each block of $G$ is an SP graph.
\end{Lem}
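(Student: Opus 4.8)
The plan is to prove both directions of the equivalence separately, relying on the structural decomposition of a connected graph into its blocks, which are glued together at cut vertices. The key observation is that the reduction operations (i)--(iii) defining GSP graphs interact nicely with this block structure: deleting a degree-1 vertex or suppressing a degree-2 vertex affects only the block containing that vertex, and deleting a parallel edge is likewise local to a single block. Hence the global reduction of $G$ to $K_2$ can be analyzed block by block.

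For the direction \enquote{$G$ is GSP $\Rightarrow$ each block is SP}, I would argue that whenever $G$ is GSP, consider a fixed block $B$ of $G$. Since $B$ is biconnected, it contains no degree-1 vertex internally, so operation (i) can never be applied \emph{inside} $B$ until $B$ has been reduced to interact with the rest of the graph through its cut vertices. More precisely, I would track how the reduction sequence restricts to $B$: leaf-deletions and degree-2 suppressions performed in other blocks progressively \enquote{peel away} the pendant structure attached at the cut vertices of $B$, but within $B$ itself only operations (ii) and (iii) are ever needed, since a biconnected graph with at least three vertices has no degree-1 vertex and remains biconnected (hence leaf-free) until it collapses. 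This shows $B$ reduces to $K_2$ using only (ii) and (iii), i.e., $B$ is SP.

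For the converse \enquote{each block SP $\Rightarrow$ $G$ is GSP}, I would proceed by induction on the number of blocks. If $G$ consists of a single block, then $G$ itself is SP, hence in particular GSP. Otherwise, pick a \emph{leaf block}, i.e., a block $B$ containing exactly one cut vertex $c$ of $G$ (such a block exists by considering the block-cut tree of $G$). Since $B$ is SP, I can reduce $B$ using operations (ii) and (iii) down toward $K_2$; the point is that this reduction can be carried out so as to leave $c$ in place, effectively shrinking $B$ to a single edge incident to $c$ or, after further reduction, to a pendant edge at $c$. Deleting this pendant edge via operation (i) removes the block $B$ from $G$, yielding a graph $G'$ with one fewer block, each of whose blocks is still SP. By the induction hypothesis $G'$ is GSP, and prepending the reduction of $B$ shows $G$ is GSP.

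The main obstacle I anticipate is making the block-by-block reduction argument rigorous, in particular verifying that the reduction operations can always be \emph{localized} to a single block and sequenced without interference. The subtle point is that a cut vertex belongs to several blocks simultaneously, so I must check that reducing one block does not illegitimately change the degree of a cut vertex in a way that blocks the reduction of the remaining graph, and that the intermediate graphs never drop below two vertices as required by the leaf-shrinking constraints. I would handle this by always reducing leaf blocks first and by noting that suppressing degree-2 vertices and deleting parallel edges inside a block leaves the cut vertex with its connections to the other blocks intact until the final pendant-edge deletion. Since this lemma is cited as adapted from \cite[Lemma 3.2]{Ho1999}, I expect the argument to follow the established treatment of SP and GSP graphs, so the emphasis would be on cleanly presenting the block-cut-tree induction rather than on any genuinely novel difficulty.
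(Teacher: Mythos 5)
The paper itself gives no proof of Lemma~\ref{lem:GSP_blocks_SP} -- it is imported from \cite[Lemma 3.2]{Ho1999} -- so your attempt has to be judged against what a complete proof requires. Your overall block-by-block strategy is the natural one, and the forward direction (\enquote{$G$ GSP $\Rightarrow$ each block SP}) is essentially sound as a sketch: a block with at least three vertices is biconnected, operations (ii) and (iii) preserve biconnectivity, so no vertex of a block can become a leaf before that block has collapsed to $K_2$; moreover, maximality of blocks (there is no path between two distinct vertices of a block that runs outside the block) is what guarantees that suppressions and parallel-edge deletions never mix edges of distinct non-bridge blocks. Spelling out that maximality argument, by induction along the reduction sequence, is what turns your \enquote{locality} claim into a proof; the only cross-block interactions are suppressions of degree-2 cut vertices, which merge two bridges, and these are harmless since bridges are trivially SP.

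The genuine gap is in the converse direction, at the exact point you flag as \enquote{the point}: you assert, but do not prove, that the leaf block $B$, being SP, can be reduced \emph{so as to leave the cut vertex $c$ in place}, i.e., down to a single edge incident with $c$. This does not follow from the definition of SP. A reduction sequence witnessing that $B$ is SP may suppress $c$ at some stage, and that step cannot be replayed inside $G$, where $c$'s degree is inflated by the edges of the other blocks; nor is it obvious that the offending suppression can be avoided or postponed. What you need is a terminal-choosing statement for SP graphs, and it is a real lemma in its own right. Two ways to close the gap: (a) invoke Theorem~\ref{thm_edgechooseGSP} (Ho's Theorem 4.1) for an edge of $B$ incident with $c$, observing that a GSP reduction of a biconnected graph never encounters a degree-1 vertex and therefore uses only operations (ii) and (iii) -- but then you must check that in \cite{Ho1999} that theorem is not itself derived from Lemma 3.2, or your proof is circular; or (b) use the classical fact that every simple biconnected $K_4$-minor-free graph on at least three vertices has at least two vertices of degree 2, which lets you always suppress a degree-2 vertex distinct from $c$ (after first deleting parallel copies, which never involves $c$). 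Without one of these ingredients, the induction step of your converse direction does not go through.
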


Comparing the definitions of GSP graphs and edge-based graphs, there is seemingly a slight difference between the two classes. Specifically, both can be reduced to $K_2$ by certain restriction operations; however, the deletion of loops is a valid restriction operation in the case of edge-based graphs, but not in the case of GSP graphs. Nevertheless, there is a direct relationship between both classes of graphs.

\begin{Lem} [{\cite[Corollary 1]{Fischer2020}}] \label{lem:GSP_edgebased}
Let $G$ be a connected graph. Then $G$ is a GSP graph if and only if it is loopless and edge-based.
\end{Lem}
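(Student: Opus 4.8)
The plan is to prove both implications by exploiting the uniqueness of the leaf shrink graph $\LS(G)$, which is built into its definition: regardless of the order in which the operations (i)--(iv) are applied, the procedure terminates in the same graph. Granting this, the forward implication is almost immediate, while the reverse implication requires showing that, for a loopless graph, the loop-deletion operation (iv) never provides any extra reductive power. The one genuine subtlety, and the main obstacle, lies in this reverse direction.

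For the forward direction, suppose $G$ is a GSP graph. By definition GSP graphs are loopless, so it remains to show that $G$ is edge-based. By definition $G$ can be reduced to $K_2$ by a sequence of operations of types (i), (ii), and (iii). These three operations are exactly a subset of the operations (i)--(iv) permitted in the leaf-shrinking procedure. One checks that this reduction sequence is itself a valid leaf-shrinking sequence: every intermediate graph has at least two vertices (since the operations only decrease the vertex count and the sequence ends at the two-vertex graph $K_2$), and $K_2$ admits no further operation under the constraint that at least two vertices must remain (it has no parallel edges, no loop, no degree-2 vertex, and deleting one of its two leaves would leave a single vertex). By uniqueness of the leaf shrink graph we conclude $\LS(G)=K_2$, i.e., $G$ is edge-based.

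For the reverse direction, suppose $G$ is loopless and edge-based, so that $\LS(G)=K_2$; we must reduce $G$ to $K_2$ using only operations (i)--(iii). The key observation is that, starting from a loopless graph, the only operation that can create a loop is operation (ii): suppressing a degree-2 vertex $v$ whose two incident edges both join $v$ to the same neighbor $u$ produces a loop at $u$. Whenever such a suppression is called for, I would instead perform operation (iii) to delete one of the two parallel edges between $u$ and $v$, turning $v$ into a leaf, and then operation (i) to delete $v$; this yields exactly the graph one would obtain from the loop-creating suppression followed by loop deletion, but without ever introducing a loop. Carrying out the reduction in this loop-avoiding manner produces a sequence using only operations (i)--(iii) in which every intermediate graph remains loopless. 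Since each operation strictly decreases $\lvert V\rvert+\lvert E\rvert$, the sequence terminates in some graph $H$ to which none of (i)--(iii) applies; as $H$ is loopless, operation (iv) does not apply either, so $H$ is terminal for the full leaf-shrinking procedure. By uniqueness, $H=\LS(G)=K_2$, and hence $G$ is GSP.

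The crux of the argument is this reverse direction: one must reconcile the fact that the definition of edge-basedness permits loop deletion whereas that of GSP graphs forbids it. The substitution above shows that for loopless inputs this difference is immaterial, but the conclusion that the loop-avoiding reduction still lands on $K_2$ (rather than getting stuck at some other loopless terminal graph) is not self-evident and is precisely where the uniqueness of the leaf shrink graph is indispensable. An alternative route would pass through Lemma~\ref{lem:GSP_blocks_SP} and a block-by-block analysis, but the direct loop-avoidance argument seems more economical and self-contained.
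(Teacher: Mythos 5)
The paper offers no proof of this lemma to compare against --- it is imported verbatim from \cite[Corollary 1]{Fischer2020}. Evaluated on its own, your argument is correct relative to the definitions as this paper states them. The forward direction works: a GSP reduction uses only operations (i)--(iii), vertex counts never increase along it, so every intermediate graph has at least two vertices, and $K_2$ is terminal; uniqueness of the leaf shrink graph then gives $\LS(G)=K_2$. The reverse direction is also correct, and your key observation is the right one: the only loop-creating operation is suppression of a degree-2 vertex whose two edges run to the same neighbor, and whenever that move is available, operation (iii) is available too, so a loop-avoiding maximal reduction by (i)--(iii) alone always exists; its terminal graph is loopless, hence terminal for the full four-operation procedure, and uniqueness forces it to equal $\LS(G)=K_2$. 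Do be aware, however, of how much weight the word \enquote{unique} in the paper's definition of $\LS(G)$ carries in your proof: confluence of the leaf-shrinking procedure (every maximal reduction sequence from $G$ ends at the same graph) is itself a nontrivial fact, established in \cite{Fischer2020,Fischer2020Correction}, and both of your directions are short precisely because they invoke it. Granting the definition as written, this is perfectly legitimate; just recognize that your proof relocates the difficulty into that cited uniqueness rather than eliminating it, and an argument not presupposing confluence (e.g., a block-by-block analysis via Lemma~\ref{lem:GSP_blocks_SP}) would have to work considerably harder.
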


As every phylogenetic network is loopless by definition, this in particular implies that every edge-based phylogenetic network is a GSP graph. As GSP graphs can be recognized in linear time \cite{Ho1999}, this in turn implies that edge-based phylogenetic networks can be recognized in linear time. Finally, as shown in \cite[Theorem 3]{Fischer2020}, every edge-based phylogenetic network is tree-based, and thus edge-based phylogenetic networks constitute a class of tree-based networks that can be recognized in linear time.

\subsection{Known results}
\noindent Next we state some results known in or easily derived from the literature which we need for the present manuscript. On the one hand, these concern properties and characterizations of GSP and edge-based graphs and networks. Note that for technical reasons, we state some of these results concerning edge-basedness for general graphs rather than phylogenetic networks.
On the other hand, we recall some results concerning the computational complexity of determining the minimum number of graph operations (e.g., vertex deletions, edge deletions, or edge contractions) that need to be performed to turn a given graph $G$ into a graph $G'$ satisfying a certain property. The latter will allow us to assess the computational complexity of computing the proximity measures for edge-basedness of phylogenetic networks introduced subsequently.

\subsubsection{Properties of GSP graphs and edge-based graphs and networks}
We begin by giving an alternative characterization of edge-based graphs.

\begin{Prop}\label{prop:K4-minorfree}
Let $G=(V,E)$ be a connected graph with $|V| \geq 2$. Then, $G$ is edge-based if and only if $G$ is $K_4$-minor free.
\end{Prop}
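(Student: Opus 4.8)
The plan is to establish the equivalence by combining two well-known structural facts: the characterization of edge-based graphs via the leaf-shrinking procedure (together with Lemma~\ref{lem:GSP_edgebased} relating edge-basedness to GSP graphs), and the classical theorem that series-parallel graphs are exactly the $K_4$-minor free graphs. The overall strategy is to reduce the statement to a block-by-block analysis, since both edge-basedness and $K_4$-minor freeness are, in a suitable sense, local to the biconnected components of $G$.

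\medskip

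\noindent\textbf{Forward direction ($G$ edge-based $\Rightarrow$ $K_4$-minor free).} First I would argue by contrapositive: suppose $G$ contains $K_4$ as a minor. I would show that the four leaf-shrinking operations (i)--(iv) cannot destroy a $K_4$-minor, so that $\LS(G)$ would still contain $K_4$ as a minor and hence could not be the single edge $K_2$, contradicting edge-basedness. Concretely, each reduction operation --- deleting a leaf, suppressing a degree-2 vertex, removing a parallel edge, or deleting a loop --- corresponds to either a vertex/edge deletion or an edge contraction (or is irrelevant to a minor). Since the minor relation is preserved under these operations in the appropriate direction (if $H$ is a minor of the reduced graph, it is a minor of the original, and the operations only remove structure that a $K_4$-minor does not require), a $K_4$-minor present in $G$ survives to $\LS(G)$. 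As $K_2$ has no $K_4$-minor, $G$ cannot be edge-based.

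\medskip

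\noindent\textbf{Reverse direction ($K_4$-minor free $\Rightarrow$ edge-based).} Here I would invoke the standard theorem (Duffin's characterization) that a graph is series-parallel if and only if it is $K_4$-minor free, applied block-wise. If $G$ is $K_4$-minor free, then every block of $G$ is $K_4$-minor free (a block is a subgraph, and minors of subgraphs are minors of the whole), hence every block is an SP graph. By Lemma~\ref{lem:GSP_blocks_SP}, $G$ is therefore a GSP graph, provided $G$ is loopless. To handle possible loops, I would note that deleting loops (operation (iv)) is permitted in the leaf-shrinking procedure and does not affect the $K_4$-minor status; so after removing loops, the resulting loopless graph is GSP, and by Lemma~\ref{lem:GSP_edgebased} it is edge-based, which in turn means the original $G$ reduces to $K_2$ under operations (i)--(iv) and is edge-based.

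\medskip

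\noindent\textbf{The main obstacle} I anticipate is making precise the interplay between loops and the two notions. GSP graphs are defined to be loopless, whereas edge-based graphs explicitly allow loop-deletion; bridging this gap cleanly via Lemma~\ref{lem:GSP_edgebased} is the delicate point, since I must ensure that adding or removing loops changes neither edge-basedness nor $K_4$-minor freeness (loops are irrelevant to both). A secondary technical point is verifying that the well-definedness of $\LS(G)$ (its uniqueness, asserted in the preliminaries) lets me speak of "the" leaf-shrink graph without worrying about the order of operations when tracking the persistence of a $K_4$-minor. Once these bookkeeping issues are settled, the equivalence follows by chaining edge-based $\Leftrightarrow$ (loopless and) GSP $\Leftrightarrow$ every block SP $\Leftrightarrow$ every block $K_4$-minor free $\Leftrightarrow$ $G$ is $K_4$-minor free.
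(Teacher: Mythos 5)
Your proof is correct, and its second half takes a genuinely different route from the paper's. The forward direction is essentially the paper's argument: both of you observe that the leaf-shrinking operations cannot destroy a $K_4$-minor, the paper making this concrete by passing to a topological minor via Lemma~\ref{lem:k4minor_top}, fixing a subdivision of $K_4$, and noting that at most its degree-2 vertices can ever be suppressed. (Your parenthetical justification actually states minor-preservation in the unneeded direction --- what you must argue is that a $K_4$-minor of $G$ survives into $\LS(G)$, not that minors of $\LS(G)$ lift to $G$; the subdivision-tracking argument is the clean way to nail this down.) For the reverse direction, however, the paper converts \enquote{$K_4$-minor free} into \enquote{no $K_4$ topological minor} (Lemma~\ref{lem:k4minor_top}) and then simply cites Proposition~\ref{prop:K4_GSP} to get that $G$ is GSP, hence edge-based by Lemma~\ref{lem:GSP_edgebased}. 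You instead decompose $G$ into blocks, apply Duffin's classical theorem to each block, and reassemble with Lemma~\ref{lem:GSP_blocks_SP}; this mirrors the paper's own proof of the implication (iii)$\Rightarrow$(i) in Proposition~\ref{prop:blobs_blocks_edgebased} and buys two things: it replaces the ad hoc citation of \cite{Bernshteyn2021} with a textbook result, and it forces you to treat loops explicitly --- a point the paper's proof silently glosses over, since GSP graphs are loopless by definition, so Proposition~\ref{prop:K4_GSP} cannot literally apply to a graph with loops, whereas the Proposition as stated allows them. One precision point: Duffin's equivalence \enquote{SP $\Leftrightarrow$ $K_4$-minor free} holds for $2$-connected graphs (and $K_2$), not for arbitrary graphs (the star $K_{1,3}$ is $K_4$-minor free but not SP under the paper's definition); your block-wise application is the correct form, but your opening sentence states the theorem too broadly, so make sure to invoke only the biconnected version.
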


In order to prove this proposition, we require the following two statements from \cite{Bernshteyn2021} and \cite{Diestel2017}, respectively.

\begin{Prop}[{adapted from \cite[Corollary 8.5]{Bernshteyn2021}}] \label{prop:K4_GSP}
Let $G$ be a connected graph that does not contain $K_4$ as a topological minor. Then, $G$ is a GSP graph.
\end{Prop}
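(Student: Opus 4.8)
The plan is to deduce the statement from the local, block-by-block structure of $G$, reducing the global claim to a statement about $2$-connected graphs. By Lemma~\ref{lem:GSP_blocks_SP}, $G$ is a GSP graph if and only if each of its blocks is an SP graph, so it suffices to show that every block of $G$ is SP. (Here we tacitly take $G$ to be loopless, as is implicit in the definition of a GSP graph: a loop neither produces a $K_4$ topological minor nor is permitted in a GSP graph, so the statement is only meaningful for loopless $G$.) Each block $B$ is, by definition, a maximal biconnected subgraph of $G$, hence either a single edge $K_2$ (which is trivially SP) or a genuinely $2$-connected graph; and since $B$ is a subgraph of $G$, it inherits the absence of a $K_4$ topological minor, because any subdivision of $K_4$ contained in $B$ would already be contained in $G$. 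Thus the task reduces to the following purely local claim: every $2$-connected graph that contains no subdivision of $K_4$ is an SP graph.

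I would prove this claim by induction on the number of edges, exhibiting an explicit reduction to $K_2$ using only the operations (ii) (suppressing a degree-$2$ vertex) and (iii) (deleting a copy of a parallel edge). For the inductive step, let $B$ be $2$-connected with no $K_4$ subdivision and with at least three vertices. First, whenever $B$ carries a pair of parallel edges I delete one copy; this strictly decreases the number of edges, keeps the graph $2$-connected, and cannot introduce a $K_4$ subdivision, the result being a subgraph of $B$. Once $B$ is simple, I claim it must contain a vertex of degree exactly $2$: being $2$-connected it has minimum degree at least $2$, and if its minimum degree were at least $3$ it would contain a subdivision of $K_4$, contradicting the hypothesis. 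Suppressing such a degree-$2$ vertex yields a graph $B'$ with one vertex fewer that is again $2$-connected and $K_4$-subdivision-free, the latter because $B$ is a subdivision of $B'$ at the suppressed vertex, so a $K_4$ subdivision in $B'$ would lift to one in $B$. By the induction hypothesis $B'$ is SP, and since $B$ reduces to $B'$ by a single operation of type (ii), $B$ is SP as well. Feeding this back through Lemma~\ref{lem:GSP_blocks_SP} shows that $G$ is GSP.

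The main obstacle, and the one genuinely nontrivial ingredient, is the assertion used in the inductive step that a simple graph of minimum degree at least $3$ necessarily contains a subdivision of $K_4$. This is a classical theorem (due to Dirac) and is almost certainly the second auxiliary statement the authors draw from \cite{Diestel2017}; everything else in the argument is elementary bookkeeping. That bookkeeping, however, still requires some care: one must check that both reduction operations preserve $2$-connectivity (so that the induction hypothesis, which is stated for $2$-connected graphs, actually applies) and preserve the absence of a $K_4$ subdivision (so that the reduced graph still satisfies the hypothesis), paying attention to the small cases — for example the triangle $K_3$, where suppressing a degree-$2$ vertex produces a double edge — which must be absorbed into the base of the induction. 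Once the degree-$3$ lemma is granted, the equivalence between the absence of a $K_4$ topological minor and SP-ness of each block, and hence GSP-ness of $G$, follows directly.
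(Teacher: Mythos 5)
Your proof is correct, but there is nothing in the paper to compare it against line by line: the paper never proves this proposition at all. As the bracketed attribution indicates, it is imported as a black box from Corollary 8.5 of \cite{Bernshteyn2021} and is used only as an ingredient in the proof of Proposition \ref{prop:K4-minorfree}. Your argument therefore supplies a self-contained proof of something the paper delegates to the literature, and it is sound: you reduce to blocks via Lemma \ref{lem:GSP_blocks_SP}, note that each block inherits the absence of a $K_4$ subdivision, and induct on the number of edges, where deleting one copy of a parallel edge and suppressing a degree-$2$ vertex both preserve biconnectivity and cannot create a $K_4$ subdivision, and where the existence of a degree-$2$ vertex in a simple block rests on Dirac's classical theorem that every simple graph of minimum degree at least three contains a subdivision of $K_4$. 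That theorem is indeed the one non-elementary ingredient; the bookkeeping you flag (including the triangle, whose suppression yields a double edge, and the two-vertex blocks with parallel edges) does go through. What your route buys is that the manuscript would become self-contained, using only Lemma \ref{lem:GSP_blocks_SP} (already stated in the paper) plus one classical result, at the cost of a page of case analysis that the authors avoid by citing. Two side remarks: your guess that Dirac's theorem is the second auxiliary statement the authors draw from \cite{Diestel2017} is off --- the paper's borrowings from \cite{Diestel2017} are Proposition 1.7.3 (behind Lemma \ref{lem:k4minor_top}) and Corollary 7.3.2 (behind Lemma \ref{lem:K4free_maxedges}). On the other hand, your loopless caveat is genuinely needed and well spotted: a loop never creates a $K_4$ topological minor, yet a graph with a loop is never GSP under the paper's definition, so the proposition as stated tacitly assumes looplessness, both in your proof and in the paper's subsequent application of it.
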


\begin{Lem}[{direct consequence of \cite[Proposition 1.7.3]{Diestel2017}}]\label{lem:k4minor_top}
Let $G$ be a graph. Then, $G$ contains $K_4$ as a minor if and only if $G$ contains $K_4$ as a topological minor.
\end{Lem}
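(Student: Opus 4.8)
The plan is to prove the two implications separately, noting that one of them is immediate and the other carries the substance. For the forward direction (topological minor $\Rightarrow$ minor) there is nothing new to do: every topological minor is a minor, which is exactly the statement already recorded in the preliminaries and attributed there to \cite[Proposition 1.7.3]{Diestel2017}. Hence, if $G$ contains a subdivision of $K_4$ as a subgraph, then $K_4$ is in particular a minor of $G$.

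The content lies in the reverse direction (minor $\Rightarrow$ topological minor). Here I would invoke the general principle that a graph $H$ with maximum degree $\Delta(H)\le 3$ occurs as a minor of a graph $G$ if and only if it occurs as a topological minor of $G$; this is the remaining half of \cite[Proposition 1.7.3]{Diestel2017}. Since $K_4$ is $3$-regular, we have $\Delta(K_4)=3$, so the hypothesis of that result is met with $H=K_4$, and the asserted equivalence follows at once. This is precisely why the lemma can be phrased as a \enquote{direct consequence}.

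For completeness, and to make the role of the degree bound transparent, I would also indicate how the nontrivial direction is obtained from scratch. Assume $K_4$ is a minor of $G$, witnessed by four pairwise disjoint branch sets $V_1,\dots,V_4\subseteq V(G)$, each inducing a connected subgraph, such that for every pair $\{i,j\}$ there is at least one edge of $G$ joining $V_i$ and $V_j$. Fix a spanning tree $T_i$ of $G[V_i]$ for each $i$, and for the three indices $j\ne i$ let $a^i_j$ denote the endpoint in $V_i$ of a chosen connecting edge to $V_j$; these are at most three vertices of $T_i$. I would then take the branch vertex $w_i$ to be the \emph{median} of $a^i_{j_1},a^i_{j_2},a^i_{j_3}$ in $T_i$, i.e.\ the unique vertex lying on all three pairwise tree-paths. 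The three paths in $T_i$ from $w_i$ to the $a^i_j$ then meet only in $w_i$, hence are internally disjoint. Using the four branch vertices $w_1,\dots,w_4$ together with the six connecting edges and these in-tree paths, one assembles six paths that pairwise meet only at their endpoints $w_i$; this is exactly a subdivision of $K_4$ in $G$, that is, a topological minor.

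The single point requiring care — and the reason the statement is special to graphs of maximum degree at most $3$ — is the internal disjointness of the routing paths inside each branch set. With only three attachment points, the median of a tree furnishes three paths sharing no internal vertex, which is precisely what a degree-$3$ branch vertex of $K_4$ demands. Were a vertex of $H$ to have degree $4$ or more, four or more attachment points could in general not be routed to a single branch vertex by internally disjoint tree-paths, and a minor would fail to upgrade to a topological minor. This is the obstacle that the hypothesis $\Delta(K_4)=3$ circumvents, and it is the only nontrivial ingredient in the argument.
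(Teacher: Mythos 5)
Your proof is correct and follows essentially the same route as the paper, which likewise justifies the lemma purely as a direct consequence of \cite[Proposition~1.7.3]{Diestel2017}: the forward direction is the general fact that topological minors are minors, and the reverse direction is the $\Delta(H)\le 3$ half of that proposition applied to the $3$-regular graph $K_4$. Your additional median-of-a-tree argument is a correct and welcome unpacking of the Diestel result, but it does not change the underlying approach.
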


We are now in the position to prove Proposition \ref{prop:K4-minorfree}.
\begin{proof}[Proof of Proposition \ref{prop:K4-minorfree}]
First, suppose that $G$ is edge-based. Assume for the sake of a contradiction that $G$ contains $K_4$ as a minor. Then, by Lemma \ref{lem:k4minor_top}, $G$ contains $K_4$ also as a topological minor. This implies that $G$ contains a subgraph that is a subdivision of $K_4$. During the leaf-shrinking procedure, it might be possible to suppress all vertices that have degree 2 in this subdivision. However, the resulting subgraph $K_4$ cannot be further reduced and hence $G$ cannot be edge-based; a contradiction.

Now, assume that $G$ is $K_4$-minor free. Then, by Lemma \ref{lem:k4minor_top}, $G$ does not contain $K_4$ as a topological minor. Moreover, by assumption $G$ is connected. Thus, by Proposition \ref{prop:K4_GSP}, $G$ is a GSP graph, which by Lemma \ref{lem:GSP_edgebased} implies that $G$ is edge-based. This completes the proof.
\end{proof}

\begin{Rem}\label{Rem_subgraph_edgebased}
An immediate consequence of Proposition~\ref{prop:K4-minorfree} is that if $G$ is an edge-based graph and $H$ is a connected subgraph of $G$ with at least two vertices, then $H$ is edge-based, too.
\end{Rem}

We proceed by recalling a statement from \cite{Diestel2017} concerning the maximum number of edges a $K_4$-minor free graph can have.
\begin{Lem}[{adapted from \cite[Corollary 7.3.2]{Diestel2017}}] \label{lem:K4free_maxedges}
Every edge-maximal graph $G=(V,E)$ without $K_4$ as a minor has $2|V|-3$ edges.
\end{Lem}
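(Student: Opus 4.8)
The plan is to prove the statement by induction on $n=|V|$, assuming throughout that $|V|\ge 2$ (the formula is meaningless for $|V|=1$). For the base case $n=2$ the only simple graph is $K_2$, which is trivially edge-maximal and $K_4$-minor free and has $1=2\cdot 2-3$ edge. For the inductive step, the goal is to locate in an edge-maximal $K_4$-minor free graph $G$ on $n\ge 3$ vertices a vertex $v$ of degree exactly $2$ whose two neighbours $u,w$ are adjacent, to show that $G-v$ is again edge-maximal and $K_4$-minor free, and to conclude via $|E(G)|=|E(G-v)|+2=(2(n-1)-3)+2=2n-3$, since deleting $v$ removes precisely the two edges $\{u,v\}$ and $\{v,w\}$.

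The technical heart is the following \emph{rerouting observation}: if a graph $H$ has a vertex $v$ of degree $2$ whose neighbours $u,w$ satisfy $\{u,w\}\in E(H)$, then $H$ is $K_4$-minor free if and only if $H-v$ is. The direction $H\text{ free}\Rightarrow H-v\text{ free}$ is immediate from Remark~\ref{Rem_subgraph_edgebased} (a connected subgraph of an edge-based graph is edge-based) together with Proposition~\ref{prop:K4-minorfree}. For the reverse direction I would argue the contrapositive: by Lemma~\ref{lem:k4minor_top} a $K_4$ minor of $H$ yields a subdivision of $K_4$ in $H$, in which the four branch vertices have degree $3$; since $\deg_H v=2$, the vertex $v$ can only occur as an internal subdivision vertex of one path, necessarily traversing the segment $u-v-w$. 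As the neighbours are joined by $\{u,w\}$, this segment can be rerouted along that edge to produce a subdivision of $K_4$ avoiding $v$, hence one contained in $H-v$. I expect the one delicate point — and the main obstacle — to be checking that $\{u,w\}$ is not simultaneously required elsewhere in the subdivision; this cannot happen, because it would force $u$ or $w$ to lie internally on two distinct paths of the subdivision, contradicting that those paths are internally disjoint.

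Granting the rerouting observation, the structural steps fall into place. First I would show $G$ is connected: otherwise an added edge between two components is a bridge, hence lies in its own block, and since $K_4$ is $2$-connected and a $2$-connected minor is confined to a single block, this edge cannot create a $K_4$ minor, contradicting edge-maximality. Next, $G$ has no vertex of degree $\le 1$: for a putative leaf $v$ with neighbour $u$, connectedness and $n\ge 3$ give some $w\in N(u)\setminus\{v\}$, and adding the non-edge $\{v,w\}$ makes $v$ a degree-$2$ vertex with adjacent neighbours $u,w$; by the rerouting observation $G+\{v,w\}$ is $K_4$-minor free iff $(G+\{v,w\})-v=G-v$ is, which it is as a subgraph of $G$, contradicting edge-maximality. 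With minimum degree $\ge 2$ and $n\ge 3$, I locate a degree-$2$ vertex using the paper's own machinery: by Proposition~\ref{prop:K4-minorfree} and Lemma~\ref{lem:GSP_edgebased} the (simple, hence loopless) graph $G$ is GSP and thus reducible to $K_2$ by operations (i)--(iii); since $G\ne K_2$ has no leaves and no parallel edges, the first applicable reduction must be the suppression of a degree-$2$ vertex $v$, which therefore exists. Finally, edge-maximality forces its neighbours $u,w$ to be adjacent: if not, then $(G-v)+\{u,w\}$ is obtained from $G$ by contracting $\{u,v\}$ and is thus a minor of $G$, hence $K_4$-minor free, so the rerouting observation applied to $v$ in $G+\{u,w\}$ would make $G+\{u,w\}$ $K_4$-minor free, again contradicting maximality.

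It then remains to verify that $G-v$ is edge-maximal $K_4$-minor free, so that the induction applies. It is $K_4$-minor free as a subgraph of $G$. For maximality, any non-edge $\{a,b\}$ of $G-v$ (so $a,b\neq v$) is also a non-edge of $G$, so by maximality of $G$ the graph $G+\{a,b\}$ has a $K_4$ minor; since $v$ still has degree $2$ with adjacent neighbours in $G+\{a,b\}$, the rerouting observation gives a $K_4$ minor already in $(G+\{a,b\})-v=(G-v)+\{a,b\}$. Hence $G-v$ is edge-maximal $K_4$-minor free on $n-1$ vertices, the inductive hypothesis yields $|E(G-v)|=2(n-1)-3$, and restoring the two edges at $v$ completes the count. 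Apart from the rerouting observation, the only anticipated subtleties are the bookkeeping of the base case and the standing assumption $|V|\ge 2$.
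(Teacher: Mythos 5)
Your proof is correct in its overall architecture, but be aware that it takes a genuinely different route from the paper: the paper does not prove this lemma at all, it imports it from Diestel (Corollary 7.3.2), where the edge count is a consequence of the structure theorem (Proposition 7.3.1 in Diestel) that the edge-maximal $K_4$-minor-free graphs are exactly those obtained recursively from triangles by pasting along edges ($K_2$'s); the count $2|V|-3$ then follows by an easy induction over that construction. You instead run a direct induction on $|V|$: extract a degree-2 vertex $v$ whose neighbours are adjacent, show that $G-v$ inherits edge-maximality via a rerouting argument, and count. Your route is more elementary and self-contained, and it is a nice touch that you reuse the paper's own machinery (Proposition \ref{prop:K4-minorfree}, Lemma \ref{lem:GSP_edgebased}, and the GSP reduction operations) to produce the degree-2 vertex; what it forgoes, relative to Diestel's route, is the structural characterization of the extremal graphs, which the counting argument alone does not yield.

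One spot needs patching. In the rerouting observation, your argument that the edge $\{u,w\}$ cannot already be used elsewhere in the $K_4$-subdivision covers only the case where $u$ or $w$ lies \emph{internally} on the path $P$ through $v$: if both $u$ and $w$ are branch vertices, neither is internal to any path, and internal disjointness gives no contradiction. That case must be excluded separately: there, $P$ is exactly the path $u$--$v$--$w$, and a second path $Q$ containing the edge $\{u,w\}$ would have $u$ and $w$ as its endpoints (branch vertices cannot occur internally on a path of the subdivision), so the subdivision would contain two distinct paths joining the same pair of branch vertices---impossible, since a subdivision of the simple graph $K_4$ contains exactly one path per pair of adjacent branch vertices. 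With that one-line addition (and the cosmetic fix that $K_2$ is the only edge-maximal, rather than the only, graph on two vertices), the remaining steps---connectedness, minimum degree at least $2$, adjacency of the neighbours via the contraction argument, heredity of edge-maximality for $G-v$, and the count $|E(G)|=(2(n-1)-3)+2=2n-3$---are all sound.
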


We now state a sufficient property for edge-basedness.
\begin{Prop}[{adapted from \cite[Theorem 4.8]{Hamann2021}}]\label{prop:tier_edgebased}
Let $G=(V,E)$ be a connected graph with $|V| \geq 2$ and tier$(G)\leq 2$. Then, $G$ is edge-based.
\end{Prop}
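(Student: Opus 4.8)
The plan is to prove the contrapositive using the $K_4$-minor characterization from Proposition~\ref{prop:K4-minorfree}. That is, I would show that if $G$ is \emph{not} edge-based, then $\text{tier}(G) \geq 3$. By Proposition~\ref{prop:K4-minorfree}, a connected graph with at least two vertices fails to be edge-based precisely when it contains $K_4$ as a minor, and by Lemma~\ref{lem:k4minor_top} this is equivalent to containing $K_4$ as a topological minor. So the goal becomes: if $G$ contains a subdivision of $K_4$ as a subgraph, then at least three edges must be removed to turn $G$ into a tree.

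First I would fix a subgraph $H \subseteq G$ that is a subdivision of $K_4$. The key structural fact is that $K_4$ has $6$ edges and $4$ vertices, so its tier (the cyclomatic number, i.e. the number of independent cycles) is $6 - 4 + 1 = 3$. Subdividing edges does not change this cyclomatic number, since each subdivision adds one vertex and one edge simultaneously; hence $H$ itself has tier exactly $3$. The heart of the argument is the monotonicity of tier under taking connected subgraphs: the tier of a connected graph equals $|E| - |V| + 1$, and I would argue that deleting fewer than three edges from $G$ cannot break all the independent cycles, because the three independent cycles witnessed inside the $K_4$-subdivision $H$ must all be destroyed, and each edge deletion can reduce the cycle rank by at most one. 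More carefully, I would relate $\text{tier}(G)$ to $\text{tier}(H)$: since $H$ is a subgraph of $G$ on the same or fewer vertices within the relevant component, any edge set whose deletion makes $G$ acyclic must in particular make $H$ acyclic, and making a graph of cyclomatic number $3$ acyclic requires deleting at least $3$ edges.

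The cleanest way to carry this out is via the cyclomatic-number formulation of tier. For a connected graph $G=(V,E)$, removing a minimal set of edges to obtain a spanning tree removes exactly $|E| - (|V|-1) = |E|-|V|+1$ edges, so $\text{tier}(G) = |E|-|V|+1$. I would then observe that if $S \subseteq E(G)$ is any set of edges whose deletion turns $G$ into a tree (hence acyclic), then $S \cap E(H)$ must turn $H$ into a forest, which forces $|S \cap E(H)| \geq \text{tier}(H) = 3$, and therefore $|S| \geq 3$, giving $\text{tier}(G) \geq 3$. Contrapositively, $\text{tier}(G) \leq 2$ implies $G$ is $K_4$-minor free and hence edge-based.

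I expect the main obstacle to be making the subgraph-monotonicity step fully rigorous, namely the claim that restricting an acyclifying edge set of $G$ to $H$ still acyclifies $H$ and that this requires at least the cyclomatic number of $H$ many edges. This is intuitively clear but benefits from the clean linear-algebraic fact that the minimum number of edges whose removal destroys all cycles in a graph equals its cyclomatic number $|E|-|V|+c$ (with $c$ the number of connected components), together with the fact that any subdivision of $K_4$ has cyclomatic number $3$. Once that invariance under subdivision and the monotonicity are pinned down, the rest is bookkeeping.
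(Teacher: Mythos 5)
Your proof is correct and takes essentially the same route as the paper's: both pass to the contrapositive, use Proposition~\ref{prop:K4-minorfree} together with Lemma~\ref{lem:k4minor_top} to extract a subdivision of $K_4$ from a non-edge-based graph, and then argue that at least three edge deletions are needed to destroy its cycles, forcing $\text{tier}(G)\geq 3$. The only difference is that your cyclomatic-number bookkeeping rigorously fills in the step the paper leaves as \enquote{easily checked}.
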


\begin{proof}
Let $G=(V,E)$ be a connected graph with $|V|\geq 2$ and tier$(G) \leq 2$. Assume for the sake of a contradiction that $G$ is not edge-based. Then, by Proposition \ref{prop:K4-minorfree} and Lemma \ref{lem:k4minor_top}, $G$ contains $K_4$ as a topological minor. It is now easily checked that at least 3 edges need to be removed from this subdivision of $K_4$ to obtain a tree. Thus, tier$(G) \geq 3$; a contradiction. This completes the proof.
\end{proof}

Next, we want to show that every edge-based phylogenetic network is automatically proper; an observation that was already stated in \cite[Theorem 4.14]{Hamann2021}. In order to prove this statement, we need the following theorem concerning GSP graphs, which basically implies that GSP graphs can be reduced to any one of their edges by operations of types (i), (ii), and (iii).

\begin{Theo}[{\cite[Theorem 4.1]{Ho1999}}]
\label{thm_edgechooseGSP} Let $G$ be a GSP graph. Then, for any edge $e = \{u, v\}$ of $G$, $G$ is a GSP graph with terminals $u$ and $v$.
\end{Theo}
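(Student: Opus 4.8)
The plan is to give a self-contained argument (the statement is cited from Ho, but it sits naturally inside our framework). First I would unwind the phrase \enquote{$G$ is a GSP graph with terminals $u$ and $v$}: it means that $G$ can be reduced to the single edge $\{u,v\}$ by a sequence of operations of types (i)--(iii) during which $u$ and $v$ are never deleted or suppressed, so that they are exactly the two vertices surviving in the final $K_2$. (No loop deletions occur, since $G$ is loopless.) Phrased this way, I would prove the statement by induction on $|E(G)|$, treating $u$ and $v$ as \emph{protected} terminals throughout, and carrying along the invariant that the current graph is connected and GSP and still contains an edge $\{u,v\}$.

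The base case $G=K_2$ is immediate. For the inductive step I must exhibit one valid reduction operation that strictly decreases $|E(G)|$, preserves the invariant, and touches neither $u$ nor $v$; applying the induction hypothesis to the resulting graph then finishes. Since $G$ is GSP and $G\ne K_2$, \emph{some} operation of type (i), (ii), or (iii) is applicable (otherwise the reduction to $K_2$ could not proceed), i.e.\ $G$ has a leaf, a degree-$2$ vertex, or a pair of parallel edges. If parallel edges are present, I delete one copy by operation (iii), retaining at least one edge between $u$ and $v$. If instead $G$ is simple and has a leaf or a degree-$2$ vertex $w\notin\{u,v\}$, I remove $w$ by operation (i) (if $\deg w\le 1$) or suppress it by operation (ii) (if $\deg w=2$); in the latter case this may create a parallel edge, which is admissible for a later step, and in all cases $\{u,v\}$ and both terminals are left intact. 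Note that $G$ being a connected subgraph of an edge-based graph is again edge-based (Remark~\ref{Rem_subgraph_edgebased}) and loopless, hence GSP by Lemma~\ref{lem:GSP_edgebased}, so the invariant is maintained.

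The main obstacle is the remaining configuration, in which the \emph{only} applicable operations act on the protected terminals themselves (for instance $G$ is simple, $2$-connected, and $u,v$ are its only degree-$2$ vertices). To reach it I would first pass to the block containing $e$: by Lemma~\ref{lem:GSP_blocks_SP} every block of $G$ is series-parallel, and every part of $G$ lying outside the block $B$ that contains $e$ hangs off $B$ at cut vertices. Such pendant GSP pieces can be peeled away from the outside inward — each is GSP and distinct from $K_2$ near its free end, so it exposes a removable leaf or degree-$2$ vertex different from the terminals — until only the block $B$ remains. It therefore suffices to reduce a $2$-connected series-parallel graph $B$ to any one of its edges $\{u,v\}$.

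This last reduction is the technical heart, and is exactly where I expect the difficulty to concentrate. Here I would invoke the recursive series-parallel structure of $B$: its decomposition contains only series and parallel nodes and no rigid, $3$-connected nodes — precisely the defining feature of series-parallel graphs — which is what permits the decomposition to be \emph{re-rooted} at the leaf corresponding to any prescribed edge. Concretely, I would argue by induction on $|E(B)|$ that a $2$-connected series-parallel $B\ne K_2$ arises as a series or parallel composition of strictly smaller two-terminal series-parallel graphs with $\{u,v\}$ contained in one factor; I then reduce the factor \emph{not} containing $\{u,v\}$ to a single edge placed in series or in parallel with $\{u,v\}$, eliminate it by operation (ii) or (iii), and recurse on the factor containing $\{u,v\}$. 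Establishing that every edge can indeed be realized as such a distinguished leaf of the decomposition — the terminal-freedom of $2$-connected series-parallel graphs — is the step requiring the most care, and it is the crux on which Ho's original proof rests.
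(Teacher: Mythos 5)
You should first be aware that the paper does not prove this statement at all: it is imported verbatim from Ho, Hsieh and Chen (\cite[Theorem 4.1]{Ho1999}) and used as a black box in the proof of Corollary~\ref{cor_eb=proper}, so there is no in-paper argument to compare against. Your unwinding of \enquote{GSP graph with terminals $u$ and $v$} (reducibility to $K_2$ by operations (i)--(iii) that never delete or suppress $u$ or $v$) does match the way the paper uses the theorem, and your outer scaffolding is sound: handling parallel edges, removing non-terminal leaves and degree-$2$ vertices, and peeling pendant blocks off the block $B$ containing $e$ via Lemma~\ref{lem:GSP_blocks_SP} can all be made rigorous. One small repair: your appeal to Remark~\ref{Rem_subgraph_edgebased} to preserve the invariant does not cover operation (ii), since suppressing a degree-$2$ vertex does not yield a subgraph; you need that $K_4$-minor-freeness is closed under minors (suppression is a contraction), i.e., Proposition~\ref{prop:K4-minorfree}.

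The genuine gap is exactly the step you yourself flag as the crux: reducing a $2$-connected series-parallel graph $B$ to an arbitrary prescribed edge $\{u,v\}$. This is not a technicality your induction disposes of; it \emph{is} the theorem (terminal re-rooting), and the sketch you give of it has a concrete hole. After you reduce the factor not containing $\{u,v\}$ to a single edge, that edge cannot in general be \enquote{eliminated by operation (ii) or (iii)}: plain edge deletion is not an allowed operation --- only deletion of a copy of a parallel edge is --- and the leftover edge joining the old terminals $s,t$ need not be parallel to any edge of the factor containing $\{u,v\}$ (this happens whenever $\{u,v\}\neq\{s,t\}$ and that factor has no terminal edge). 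Repairing this needs a stronger induction hypothesis, e.g., proving simultaneously that for every two-terminal SP graph $H$ with terminals $s,t$ and every edge $e$ of $H$, both $H$ and $H$ augmented by an extra parallel edge $\{s,t\}$ reduce to $e$ without touching $e$'s endpoints; the extra edge is then absorbed later, either by a parallel deletion or by a suppression once a series composition collapses. Moreover, your argument silently assumes the equivalence of the paper's definition of SP graphs (reducibility to $K_2$ by (ii) and (iii)) with the recursive series/parallel two-terminal composition definition, a classical but nontrivial fact that would itself require proof or citation. As written, the proposal correctly reduces the theorem to its hard kernel but does not prove that kernel.
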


We are now in a position to prove that every edge-based network is automatically proper.

\begin{Cor}\label{cor_eb=proper} Let $N$ be an edge-based phylogenetic network. Then, $N$ is proper.
\end{Cor}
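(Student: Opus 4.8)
The plan is to prove the contrapositive via Theorem~\ref{thm_edgechooseGSP}. Since $N$ is a phylogenetic network it is loopless, so by Lemma~\ref{lem:GSP_edgebased} an edge-based $N$ is a GSP graph. Hence, by Theorem~\ref{thm_edgechooseGSP}, for \emph{any} edge $e=\{s,t\}$ of $N$ we may reduce $N$ all the way down to the single edge $\{s,t\}$ by operations of types (i)--(iii), where $s$ and $t$ are kept as terminals and are thus never deleted or suppressed. I would assume, for contradiction, that $N$ is not proper, i.e., that removing some cut edge or some cut vertex produces a connected component $P$ containing no leaf of $N$. The goal is then to exhibit an edge of $N$ whose endpoints serve as terminals but to which $N$ cannot be reduced, contradicting GSP-ness.

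The heart of the argument is the observation that a leafless component is \emph{frozen} under the reduction. Concretely, because $N$ has no degree-2 vertices and $P$ contains no leaf, every vertex of $P$ has degree at least $3$ in $N$; moreover $N[P]$ is simple, and---this is the crucial point---the only neighbours of $P$-vertices lying outside $P$ are the cut vertex, respectively the endpoint of the cut edge, through which $P$ is attached to the rest of $N$. I would therefore choose the terminals so that these outside neighbours are exactly the terminals. I then claim that no reduction operation can ever touch $P$: no vertex of $P$ is a leaf (degree $\geq 3$) or a degree-2 vertex, and $N[P]$ contains neither parallel edges nor loops; and since every neighbour of $P$ outside $P$ is a terminal, no neighbour of a $P$-vertex is ever deleted or suppressed, so the $P$-vertices never lose an edge and their degree stays $\geq 3$ throughout. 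Likewise no suppression can introduce a chord or a parallel edge inside $P$, since a suppressed (non-terminal, degree-2) vertex would have to be a neighbour of $P$, of which none exist outside $P$ other than the terminals. Hence all of $P$ survives in the final graph.

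It then remains to pick the terminals in each case so that $P$ is \emph{required} to vanish. If $P=B$ arises from deleting a cut edge $e=\{u,v\}$ with $v\in B$, I take $u,v$ as terminals: the reduction must end at the single edge $\{u,v\}$, so every vertex of $B$ other than $v$ must disappear. But $B$ is leafless, hence $|B|\geq 2$ (a single-vertex component would itself be a leaf), so $B\setminus\{v\}$ is a nonempty frozen set---a contradiction. If $P=C_i$ arises from deleting a cut vertex $v$, then since $v$ separates $N$ into at least two components there is a vertex $z\notin C_i$; taking $v,z$ as terminals forces the reduction to end at $\{v,z\}$, so the nonempty set $C_i$ (which is disjoint from $\{v,z\}$) must disappear, again contradicting that $C_i$ is frozen. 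In both cases we reach the desired contradiction, so $N$ must be proper.

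The step I expect to be the main obstacle is making the \emph{frozen} claim fully rigorous, i.e.\ verifying by induction on the length of the reduction sequence that the subgraph induced on $P$ is never altered. The delicate part is not the vertices of $P$ themselves---their degree bound immediately forbids applying operations (i)--(iii) to them---but ruling out that an operation performed \emph{elsewhere}, in particular suppressing a degree-2 vertex, could create a new edge incident to $P$ or delete an edge between $P$ and a terminal. This is precisely where the hypotheses that $N$ has no degree-2 vertices and that the only external neighbours of $P$ are the chosen terminals are indispensable.
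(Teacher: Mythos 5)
Your proof is correct in substance, but it executes the argument differently from the paper, and the comparison is instructive. Both proofs rest on the same two pillars -- Lemma~\ref{lem:GSP_edgebased} (loopless edge-based $\Rightarrow$ GSP) and Theorem~\ref{thm_edgechooseGSP} (a GSP graph reduces to any prescribed edge) -- together with the observation that a leafless component has minimum degree $3$ because $N$ has no degree-1 or degree-2 vertices there. The paper, however, \emph{localizes}: it passes to the leafless component $\mathcal{C}$ (with the cut vertex re-attached), invokes Remark~\ref{Rem_subgraph_edgebased} (edge-basedness is inherited by connected subgraphs, via the $K_4$-minor characterization of Proposition~\ref{prop:K4-minorfree}) to conclude that $\mathcal{C}$ itself is GSP, and then applies Theorem~\ref{thm_edgechooseGSP} \emph{inside} $\mathcal{C}$; the contradiction is then immediate, since no reduction operation can be applied to $\mathcal{C}$ at all (no leaves, no degree-2 vertices other than the terminals, no parallel edges or loops). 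You instead apply the terminal theorem \emph{globally} to $N$ and must therefore prove your ``frozen component'' invariant by induction along the whole reduction sequence, ruling out side effects of operations performed elsewhere. Your invariant argument is sound -- the key point, which you identify, is that any operation creating or destroying an edge incident to $P$ would have to delete or suppress a neighbour of a $P$-vertex, and all such neighbours are either $P$-vertices of degree $\geq 3$ or terminals -- but it is strictly more delicate than the paper's local argument. What your route buys is that you never need the hereditary property of edge-basedness (Remark~\ref{Rem_subgraph_edgebased}), so your proof works directly from the GSP machinery; what the paper's route buys is brevity, since restricting to the component makes the reduction unable to even start.

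One small repair is needed in your cut-vertex case: Theorem~\ref{thm_edgechooseGSP} requires the terminals to be the endpoints of an \emph{edge} of $N$, so you cannot take an arbitrary vertex $z \notin C_i$; you must take $z$ to be a \emph{neighbour} of $v$ lying outside $C_i$. Such a $z$ exists because $N$ is connected, so every component of $N - v$ other than $C_i$ contains a neighbour of $v$. With that choice the rest of your argument goes through unchanged.
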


\begin{proof}
First, suppose that there is a cut vertex, $u$ say, whose removal disconnects $N$ in a way that one remaining component $\mathcal{C}$ contains no leaf of $N$. Note that this in particular implies that $N\neq K_2$. We now re-introduce $u$ to $\mathcal{C}$ and consider $\mathcal{C}$ a bit more in-depth. As $\mathcal{C}$ contains no leaf of $N$, all vertices in $\mathcal{C}$  (possibly except for $u$) have degree at least three (as $N$ has no degree-2 vertices and as $\mathcal{C}$ has no leaves other than possibly $u$). Note that $\mathcal{C}$ is connected and contains at least two vertices, so $u$ has at least one neighbor $w$. However, as $N$ is edge-based, by Remark \ref{Rem_subgraph_edgebased}, so is $\mathcal{C}$, and by Lemma \ref{lem:GSP_edgebased}, $\mathcal{C}$ is a GSP graph, which, by Theorem \ref{thm_edgechooseGSP}, we can reduce to edge $\{u,w\}$. In order to do so, $u$ cannot be deleted (even if it is a leaf) or suppressed (even if it has degree 2), and as $\mathcal{C}$ has no parallel edges or loops (as $N$ is a phylogenetic network), the reduction must start with suppressing another degree-2 vertex -- but this contradicts the fact that all vertices in $\mathcal{C}$ other than possibly $u$ have degree at least three. So this is not possible.

Similarly, if there is a cut edge $e=\{u,v\}$ whose removal disconnects $N$ in a way that one remaining component $\mathcal{C}$, say the one containing vertex $u$, contains no leaf of $N$, we can repeat the same argument (with the exception that $u$ cannot be a leaf in $\mathcal{C}$) to derive a contradiction. 
This completes the proof.
\end{proof}

Finally, we state a result concerning the decomposition of edge-based networks.
\begin{Prop}\label{prop:blobs_blocks_edgebased}
Let $N$ be a phylogenetic network on $X$ with $|X| \geq 2$. Then, the following are equivalent:
\begin{enumerate}[(i)]
    \item $N$ is edge-based;
    \item every non-trivial blob of $N$ is edge-based;
    \item every block of $N$ is edge-based.
\end{enumerate}
\end{Prop}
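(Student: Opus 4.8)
The plan is to prove the equivalence by establishing the chain of implications (i) $\Rightarrow$ (iii) $\Rightarrow$ (ii) $\Rightarrow$ (i), leaning heavily on the characterizations already proved in the excerpt. The key tool is Proposition~\ref{prop:K4-minorfree}, which lets me translate ``edge-based'' into ``$K_4$-minor free'' at every step, together with Remark~\ref{Rem_subgraph_edgebased}, which guarantees that connected subgraphs (with at least two vertices) of edge-based graphs remain edge-based. The structural backbone is the standard graph-theoretic fact that a connected graph decomposes into its blocks along cut vertices, and into its (non-trivial) blobs along cut edges, so that a $K_4$-minor, being $2$-connected, must live entirely inside a single block (and hence inside a single non-trivial blob).

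First I would prove (i) $\Rightarrow$ (iii). Assume $N$ is edge-based. Every block $B$ of $N$ is a connected subgraph of $N$ with at least two vertices (recall $K_2$ counts as a block), so by Remark~\ref{Rem_subgraph_edgebased} each block is edge-based. Next, for (iii) $\Rightarrow$ (ii), I would observe that every non-trivial blob, being connected and $2$-edge-connected, is itself composed of the blocks it contains, glued at cut vertices; since each such block is edge-based and hence $K_4$-minor free, and since any $K_4$-minor of the blob would have to be contained in one of its blocks (a $K_4$-minor arises from a $K_4$-subdivision by Lemma~\ref{lem:k4minor_top}, and a subdivision of $K_4$ is $2$-connected, so it cannot be split across a cut vertex), the blob is $K_4$-minor free and thus edge-based by Proposition~\ref{prop:K4-minorfree}.

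The main work is the final implication (ii) $\Rightarrow$ (i), which I would prove in the contrapositive: if $N$ is not edge-based, then by Proposition~\ref{prop:K4-minorfree} and Lemma~\ref{lem:k4minor_top} it contains a subdivision of $K_4$ as a subgraph. This subdivision is $2$-connected, so it cannot be separated by any cut vertex or cut edge of $N$; hence it is entirely contained within a single block of $N$, and that block sits inside a single non-trivial blob $\mathcal{B}$ of $N$. Consequently $\mathcal{B}$ contains $K_4$ as a topological minor, so by Lemma~\ref{lem:k4minor_top} and Proposition~\ref{prop:K4-minorfree} the blob $\mathcal{B}$ is not edge-based, contradicting (ii).

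The step I expect to be the main obstacle is making precise the claim that a $K_4$-subdivision cannot straddle a cut vertex or cut edge, i.e.\ that it lives inside one block/blob. The cleanest justification is the $2$-connectedness of any subdivision of $K_4$: removing a single vertex from a $K_4$-subdivision leaves it connected, so the subdivision cannot meet two distinct sides of a cut vertex, and a fortiori cannot cross a cut edge (whose endpoints act as a separating pair). I would therefore state this localization as a short lemma or inline observation and use it uniformly in both (iii) $\Rightarrow$ (ii) and (ii) $\Rightarrow$ (i). A minor technical point to handle carefully is the degenerate case $N = K_2$, where $N$ has no non-trivial blobs at all; here both (i) and (ii) hold vacuously/directly, so the equivalence is not broken.
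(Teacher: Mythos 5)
Your proof is correct, but it takes a genuinely different route from the paper's. The paper proves the equivalence of (i) and (ii) by citing \cite[Proposition~1]{Fischer2020} (contradiction in one direction, induction on the number of non-trivial blobs in the other), proves (i) $\Rightarrow$ (iii) exactly as you do (via Proposition~\ref{prop:K4-minorfree} and Remark~\ref{Rem_subgraph_edgebased}), but establishes (iii) $\Rightarrow$ (i) through the series-parallel machinery: each block, being loopless and edge-based, is a GSP graph by Lemma~\ref{lem:GSP_edgebased}, hence an SP graph by Lemma~\ref{lem:GSP_blocks_SP}, and then Lemma~\ref{lem:GSP_blocks_SP} applied in the other direction shows $N$ is GSP and thus edge-based. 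You instead close the cycle (iii) $\Rightarrow$ (ii) $\Rightarrow$ (i) entirely inside the $K_4$-minor characterization, with the key localization fact that a $K_4$-subdivision is $2$-connected and therefore lies within a single block of $N$ (and, having at least four vertices and no bridges, within a single non-trivial blob). Your justification of that localization is sound: for a $2$-connected subgraph $S$, the graph $S-v$ stays connected for every vertex $v$, so $S$ cannot straddle a cut vertex, and $S$ contains no cut edge of $N$ since every edge of $S$ lies on a cycle. What your approach buys: it is self-contained modulo results already proved in the paper, dispensing with both the external citation for (i) $\Leftrightarrow$ (ii) and the GSP/SP structure theorem of \cite{Ho1999}, and it makes transparent that all three conditions say the same thing, namely that no block, no blob, and hence not $N$ itself contains a $K_4$-minor. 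What the paper's route buys: brevity by reuse of known results, and the (iii) $\Rightarrow$ (i) direction exhibits the structural link between edge-based networks and series-parallel graphs that the paper exploits elsewhere. Your attention to the degenerate case $N=K_2$ (no non-trivial blobs, so (ii) holds vacuously) is a point the paper leaves implicit.
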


\begin{proof}
A proof for the equivalence of (i) and (ii) is given in \cite[Proposition 1]{Fischer2020}, where \enquote{(i) $\Rightarrow$ (ii)} is shown by way of contradiction, and \enquote{(ii) $\Rightarrow$ (i)} is shown by induction on the number of non-trivial blobs.

We now show that (i) implies (iii). Therefore, let $N$ be edge-based. By Proposition \ref{prop:K4-minorfree}, $N$ is $K_4$-minor free. Thus, clearly all subgraphs of $N$, and therefore in particular all blocks of $N$, are $K_4$-minor free (cf. Remark \ref{Rem_subgraph_edgebased}) and hence edge-based by Proposition \ref{prop:K4-minorfree}.

Finally, to see that (iii) implies (i), suppose that every block of $N$ is edge-based. As every block of $N$ is loopless (since $N$ is loopless), every such block is a GSP graph by Lemma \ref{lem:GSP_edgebased}. Moreover, as every block of a GSP graph is an SP graph (Lemma \ref{lem:GSP_blocks_SP}), each block of $N$ is an SP graph. Again using Lemma \ref{lem:GSP_blocks_SP}, this implies that $N$ is a GSP graph and thus, by Lemma \ref{lem:GSP_edgebased}, $N$ is edge-based. This completes the proof.
\end{proof}

\subsubsection{Computational complexity of vertex-deletion, edge-deletion, and edge-contraction problems}
\label{sec:computational_complexity}
In this section, we recall some results on the computational complexity of so-called vertex-deletion, edge-deletion, and edge-contraction problems.

Suppose $\pi$ is a property on graphs. Then, the corresponding \emph{vertex-deletion problem} is the following: Given a graph $G$, find a set of vertices of minimum cardinality whose deletion results in a graph satisfying property $\pi$. Note that an equivalent formulation of this problem is the \emph{maximum subgraph} problem: Given a graph $G$, find an induced subgraph of $G$ of maximum order that satisfies $\pi$. If this induced subgraph is additionally required to be connected, the problem is called the \emph{connected maximum subgraph} (or \emph{connected vertex-deletion}) problem \cite{Yannakakis1979}.

Similarly, the \emph{edge-deletion} (\emph{edge-contraction}) problem is to find a set of edges of $G$ of minimum cardinality whose deletion (contraction) results in a graph satisfying property $\pi$.

\paragraph{Vertex-deletion problems.}
In order to state a result from \cite{Yannakakis1979} on the connected vertex-deletion problem, we require the following definitions (taken from \cite{Yannakakis1979}). We first remark that the input graphs for the connected vertex-deletion problem considered by \cite{Yannakakis1979} are connected simple graphs (personal communication). Now, a graph property $\pi$ is called \emph{non-trivial} (on some domain $D$ of graphs) if it is true for some graph but not for all graphs in $D$. Moreover, $\pi$ is called \emph{interesting} if there are arbitrarily large graphs in $D$ satisfying $\pi$. Finally, $\pi$ is called \emph{hereditary on induced subgraphs} if, whenever $G$ is a graph satisfying $\pi$, then the deletion of any vertex does not result in a graph violating $\pi$.
Based on this, we have:
\begin{Theo}[{\cite[Theorem 1]{Yannakakis1979}}] \label{theo:vertex_deletion_NP}
The connected maximum subgraph problem for graph properties that are hereditary on induced subgraphs, and non-trivial and interesting on connected graphs, is NP-hard.
\end{Theo}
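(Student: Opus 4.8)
The plan is to prove NP-hardness by a polynomial-time reduction from a known NP-complete problem into the connected maximum subgraph problem for an arbitrary but fixed property $\pi$ satisfying the three hypotheses. The starting point is the standard observation that any property hereditary on induced subgraphs is characterized by a (possibly infinite) family $\mathcal{F}$ of minimal forbidden induced subgraphs: a graph $G$ satisfies $\pi$ if and only if it contains no member of $\mathcal{F}$ as an induced subgraph. Non-triviality of $\pi$ on connected graphs guarantees that $\mathcal{F} \neq \emptyset$, so we may fix some $H \in \mathcal{F}$ of minimum order; minimality means that deleting any single vertex of $H$ yields a graph satisfying $\pi$. This $H$ is the basic building block from which the hardness gadgets will be assembled.

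Next I would take the source problem to be one whose feasible solutions correspond naturally to local "keep/delete" decisions inside each gadget; Independent Set (equivalently Vertex Cover) is the canonical choice. Given an instance $(G',k)$ of Independent Set, I would construct a graph $\widehat{G}$ by attaching one copy of a gadget derived from $H$ to each vertex (and/or edge) of $G'$, wired so that (a) any large connected induced $\pi$-subgraph of $\widehat{G}$ is forced, within each gadget, into a local configuration that is globally consistent only when the selected vertices of $G'$ form an independent set, and (b) the order of an optimal connected $\pi$-subgraph is a strictly monotone function of the size of the independent set. The \emph{interesting} hypothesis enters crucially here: since arbitrarily large connected graphs satisfy $\pi$, I can fix one such connected $\pi$-graph $C$ and use it as a backbone to which every gadget attaches. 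This backbone simultaneously enforces connectivity of the optimal solution and lets me pad the instance so that the optimum is dominated by the backbone together with the correct local choices. One then verifies both directions: a size-$k$ independent set in $G'$ yields a connected $\pi$-subgraph of $\widehat{G}$ of a prescribed order, and conversely any connected $\pi$-subgraph of that order projects back to an independent set of size $k$.

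The main obstacle is the \emph{uniformity} of the construction over the whole family of admissible properties. Because $\pi$ is constrained only to be hereditary, non-trivial and interesting, the forbidden graph $H$ and the surrounding family $\mathcal{F}$ can be essentially arbitrary, so there is no single concrete gadget; the gadget must be built generically from $H$ and then shown to behave correctly regardless of what $\mathcal{F}$ looks like. Concretely, one must ensure that wiring gadgets together and gluing them to the backbone $C$ never inadvertently creates a forbidden induced subgraph (which would wrongly exclude an intended solution) nor destroys one that is needed to force a local decision, all while keeping $\widehat{G}$ connected. Proving that the only obstructions to $\pi$ in $\widehat{G}$ are exactly the ones the gadgets are designed to produce is the delicate, case-sensitive heart of the argument, and is precisely where a careful split into cases — according to which small cliques and independent sets $\pi$ does or does not admit, in the style of Lewis and Yannakakis — appears to be unavoidable.
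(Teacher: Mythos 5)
First, note that the paper does not prove this statement at all: it is imported verbatim as \cite[Theorem 1]{Yannakakis1979} and used as a black box, so the only meaningful benchmark for your proposal is Yannakakis's original argument. Measured against that (or against any standard of completeness), your text is a plan rather than a proof. The entire mathematical content --- the construction of the gadget from the minimal forbidden subgraph $H$, the wiring of gadgets to the backbone, and the two-directional verification that optimal connected $\pi$-subgraphs correspond to independent sets --- is described only by its intended properties ((a) and (b)) and never carried out. You acknowledge this yourself in the final paragraph, where you concede that the ``delicate, case-sensitive heart of the argument'' still has to be supplied; but that heart \emph{is} the theorem, since the whole difficulty is uniformity over every admissible $\pi$.

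Beyond incompleteness, the specific uniform scheme you sketch provably cannot work as stated, which is why the case analysis you hope to avoid is genuinely unavoidable. Take $\pi$ to be ``every connected component is a path'' (a linear forest), which is hereditary on induced subgraphs and non-trivial and interesting on connected graphs; on connected graphs it reads ``is a path,'' and the connected maximum subgraph problem becomes the longest induced path problem. Your step (b) requires an optimal connected $\pi$-subgraph to contain a large fixed backbone $C$ \emph{together with} vertices chosen inside gadgets attached to $C$; but any induced subgraph containing an interior vertex of the path $C$ plus an attached gadget vertex has a vertex of degree $3$ and so violates $\pi$. The same collapse occurs for $\pi = $ ``every component is a clique.'' Hereditary properties are closed under vertex deletion, not under attachment, so no single backbone-plus-gadgets template can serve all $\pi$ simultaneously; Yannakakis's proof instead splits into cases (roughly, according to whether $\pi$ contains all cliques, all stars, or all paths) and uses different source problems and gadgets in each case. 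As it stands, your proposal identifies the right ingredients (forbidden-subgraph characterization, reduction from an NP-complete problem, use of the interestingness hypothesis) but does not constitute a proof, and its central structural idea fails on concrete admissible properties.
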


\paragraph{Edge-deletion and edge-contraction problems.}
In order to discuss edge-deletion and edge-contraction problems, we recall some definitions and results from \cite{Asano1983} and \cite{ElMallah1988}.

We begin by considering edge-deletion problems for graph properties characterizable by a set $\mathbf{F}$ of forbidden graphs. Adapting notation from \cite{ElMallah1988}, we say that a graph $G$ is $\mathbf{F}$-minor free if $G$ does not contain a minor isomorphic to any member of $\mathbf{F}$. Moreover, we use $P_{\text{ED}}(\mathbf{F})$ to denote the edge-deletion problem corresponding to a class of graphs in which each member is an $\mathbf{F}$-minor free graph. In other words, given an arbitrary graph $G$ and a set of forbidden graphs $\mathbf{F}$, $P_{\text{ED}}(\mathbf{F})$ is the problem of finding the minimum number of edges of $G$ whose deletion results in a subgraph $G'$ such that $G'$ is $\mathbf{F}$-minor free. Now, \cite{ElMallah1988} obtained the following result:

\begin{Theo}[{adapted from \cite[Theorem 1]{ElMallah1988}}] \label{Theo_Hardness_EdgeDeletion}
Let $\mathbf{F}$ be a set of graphs in which each member is a simple biconnected graph of minimum degree at least three. Then, the edge-deletion problem $P_{\text{ED}}(\mathbf{F})$ is NP-hard.
\end{Theo}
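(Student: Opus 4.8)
The plan is to follow the reduction strategy underlying \cite{ElMallah1988} and establish NP-hardness via a polynomial-time reduction from a known NP-complete problem. First I would reduce the general family case to the single-forbidden-minor case. Since the minor relation is well-founded, $\mathbf{F}$ has a minor-minimal member $H$, and by hypothesis $H$ is simple, biconnected, and of minimum degree at least three. Note that, since $H$ is simple with minimum degree at least three, none of the leaf-shrink operations (i)--(iv) applies to it, so $\LS(H)=H$; in particular $H$ is a genuine irreducible obstruction to edge-basedness (the canonical example being $H=K_4$, cf.\ Proposition~\ref{prop:K4-minorfree}). The idea is to design every instance produced by the reduction so that the \emph{only} member of $\mathbf{F}$ that can arise as a minor is a copy of $H$; on such instances $P_{\text{ED}}(\mathbf{F})$ then coincides with the edge-deletion problem $P_{\text{ED}}(\{H\})$ for the single obstruction $H$, and it suffices to prove that the latter is NP-hard.

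For the single-minor problem I would reduce from a covering-type NP-complete problem (such as Vertex Cover or a Hitting Set variant, following \cite{ElMallah1988}). Given a source instance, I would build a host graph $G$ assembled from many copies of a subdivision of $H$ --- a subdivision still has $H$ as a topological minor, and hence as a minor, by the reasoning behind Lemma~\ref{lem:k4minor_top} --- glued together at single vertices or along single edges according to the incidence structure of the source instance. The gluing is arranged so that each copy of $H$ encodes an element that must be ``covered,'' while the shared edges correspond to the covering sets: deleting a shared edge simultaneously destroys every $H$-minor running through it. One then shows that $G$ can be made $H$-minor free by deleting $k$ edges if and only if the source instance admits a cover of size $k$, which yields the reduction; the construction is clearly polynomial in the size of the source instance.

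The correctness of this reduction hinges on a \emph{locality} property of $H$-minors, and this is exactly where the two hypotheses on $\mathbf{F}$ do the essential work. Because $H$ is biconnected, no minor model of $H$ can be split across a cut vertex of $G$, and because $H$ has minimum degree at least three (so $\LS(H)=H$, as above) it cannot be obtained by leaf-shrinking across a low-degree interface. Together these facts should guarantee that every $H$-minor of $G$ is confined to a single glued copy: no spurious $H$-minors are created across gadget boundaries, and, since any minor confined to one copy is a minor of a subdivision of $H$ and $H$ is minor-minimal in $\mathbf{F}$, no member of $\mathbf{F}$ other than $H$ ever appears. Establishing this locality claim rigorously --- controlling precisely how the branch sets of a putative minor may interact with the cut vertices and cut edges used in the gluing --- is the main obstacle. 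Once it is in place, both directions of the equivalence and the remaining bookkeeping are routine.
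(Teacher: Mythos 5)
First, a point of comparison that matters here: the paper does \emph{not} prove this statement. As the attribution \enquote{adapted from \cite[Theorem 1]{ElMallah1988}} indicates, Theorem~\ref{Theo_Hardness_EdgeDeletion} is quoted as a known hardness result of El-Mallah and Colbourn and is used purely as a black box (together with Theorem~\ref{Theo_Hardness_EdgeContraction}) to derive NP-hardness of the edge-deletion proximity measures. So there is no proof in the paper to measure your sketch against; it has to stand on its own.

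On its own merits, your sketch has a genuine gap, and it sits exactly where you yourself place \enquote{the main obstacle}: the locality claim is not merely deferred, but the justification you offer for it is incorrect, and the claim in the form your construction needs is false. Your covering reduction requires distinct gadget copies of (a subdivision of) $H$ to \emph{share edges} -- otherwise a single edge deletion could never destroy several gadgets at once, and nothing like Vertex Cover gets encoded. But copies glued along a shared edge do not meet at a cut vertex, so the biconnectivity of $H$ gives you no separation there, and cross-copy minors really do arise. Concretely, take $H=K_4$ and glue two copies, on vertex sets $\{u,v,a,b\}$ and $\{u,v,c,d\}$, along the edge $\{u,v\}$. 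The branch sets $\{a,u\}$, $\{b,v\}$, $\{c\}$, $\{d\}$ form a $K_4$-minor model that spans both copies and avoids the edge $\{u,v\}$ entirely. In particular, deleting the shared edge destroys both \enquote{official} copies yet leaves this third $K_4$-minor intact, so the equivalence \enquote{$k$ deletions iff a cover of size $k$} breaks down already on the smallest instance of your gadget scheme; conversely, if you glue only at single vertices, biconnectivity does localize all $H$-minors to single copies, but then no edge is shared and the minimum number of deletions is trivially the number of copies, encoding nothing. These two requirements of your design pull against each other, and resolving that tension by a precise analysis of how minor models cross gadget boundaries is the actual substance of El-Mallah and Colbourn's proof, not routine bookkeeping. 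A smaller inaccuracy: the hypothesis of minimum degree at least three is not doing the work you assign to it (that $\LS(H)=H$ is a notion of this paper, not an ingredient of the cited hardness proof); its natural role is to ensure that a minimum-degree-$3$ graph is a minor of a subdivision of $H$ only if it is a minor of $H$ itself, which is what makes subdivision-based gadgets compatible with minor-freeness at all.
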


We now consider edge-contraction problems as studied by \cite{Asano1983}. First, let $G$ be a multigraph. Then, the \emph{simple graph} of $G$ is obtained by replacing every multiple edge of $G$ with a single edge and deleting all loops of $G$. Moreover, if $\pi$ is a graph property, then it is called \emph{hereditary on contractions} if, for any graph $G$ satisfying $\pi$, all contractions of $G$ also satisfy $\pi$. Moreover, $\pi$ is called \emph{non-trivial on connected graphs} if it is true for infinitely many connected graphs and false for infinitely many graphs. Furthermore, a property $\pi$ is \emph{determined by the simple graph} if, for any graph $G$, $G$ satisfies $\pi$ if and only if its underlying simple graph satisfies $\pi$. Finally, $\pi$ is \emph{determined by the biconnected components} if, for any graph $G$, $G$ satisfies $\pi$ if and only if all biconnected components of $G$ satisfy $\pi$. Now, let $P_{\text{EC}}(\pi)$ denote the edge-contraction problem of, given any graph $G$, finding a set of edges of minimum cardinality whose contractions results in a graph satisfying property $\pi$. Then, we have the following result from \cite{Asano1983}.

\begin{Theo}[{adapted from \cite{Asano1983}}] \label{Theo_Hardness_EdgeContraction}
The edge-contraction problem $P_{\text{EC}}(\pi)$ is NP-hard for a property $\pi$ satisfying the following four conditions:
    \begin{enumerate}[(C1)]
        \item $\pi$ is non-trivial on connected graphs;
        \item $\pi$ is hereditary on contractions;
        \item $\pi$ is determined by the simple graph; and
        \item $\pi$ is determined by the biconnected components.
    \end{enumerate}
\end{Theo}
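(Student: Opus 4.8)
The statement is a meta-theorem in the spirit of Yannakakis' node-deletion result (Theorem~\ref{theo:vertex_deletion_NP}) and the El-Mallah--Colbourn edge-deletion result (Theorem~\ref{Theo_Hardness_EdgeDeletion}): it asserts NP-hardness uniformly for \emph{every} property $\pi$ meeting conditions (C1)--(C4), so the plan is to give a single polynomial-time reduction from a fixed NP-complete problem (Vertex Cover, say) whose gadgets are assembled from graphs that the conditions guarantee to exist. The first step is to extract a canonical obstruction from the hypotheses. By non-triviality (C1) there is a connected graph failing $\pi$, and by (C4) its failure is already witnessed by a single biconnected component, so I may take a \emph{biconnected} graph $B$ with $\neg\pi(B)$. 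Since contracting edges of $B$ eventually collapses it to a graph satisfying $\pi$ (via the small-graph conventions implicit in (C1) and (C4)), heredity on contractions (C2) lets me walk down a contraction sequence from $B$, staying in the failing region as long as possible, until $\pi$ first becomes true. This yields a biconnected \emph{critical obstruction} $H$ together with a distinguished edge $e_H$ such that $\neg\pi(H)$ but $\pi(H/e_H)$.

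With the pair $(H,e_H)$ in hand, the reduction replaces each element of the source instance by a copy of $H$ and wires the copies together so that a globally consistent set of contractions corresponds exactly to a feasible solution of the source problem. Concretely, for a Vertex Cover instance $G$ I would let the vertices of $G$ correspond to contractible ``repair'' edges shared among the $H$-gadgets attached to incident source-edges, so that forcing every gadget to satisfy $\pi$ is tantamount to choosing a vertex cover, and conversely any cover induces a contraction set of the same cardinality. Condition (C4) is what makes this modular: because $\pi$ is determined by the biconnected components, the constructed graph satisfies $\pi$ precisely when every gadget-block does, the cost decomposes blockwise, and no unintended long-range interaction can help. Condition (C3) plays a supporting but essential role, since contractions routinely create parallel edges and loops; as $\pi$ is determined by the underlying simple graph, I may analyse the effect of a contraction purely on the simplified graph without fear that multiplicities spuriously flip the $\pi$-status. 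The two directions of correctness should then be routine: a cover of size $k$ gives $k$ contractions repairing all gadgets, and any contraction set achieving $\pi$ projects back to a cover of matching size by the criticality of $H$.

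The step I expect to be the genuine obstacle is the gadget design together with the accompanying ``no cheating'' argument. Because contracting an edge merges its endpoints, a single contraction can simultaneously alter several biconnected components and even create new adjacencies, so I must guarantee that contractions performed inside one gadget cannot accidentally repair a neighbouring gadget, and that the optimum can never undercut the intended blockwise count. Pinning $H$ down so that its \emph{only} cheap repairs are the intended ones---that is, controlling exactly which single contractions convert $\neg\pi$ into $\pi$---is where the argument is most delicate, and it is precisely here that the abstract conditions (C1)--(C4) must be converted into concrete structural control over $H$ and over how the gadget copies may legitimately share their repair edges.
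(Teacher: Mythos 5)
First, a point of reference: the paper does not prove this statement at all. Theorem \ref{Theo_Hardness_EdgeContraction} is imported (with adapted terminology) from Asano and Hirata \cite{Asano1983} and is used downstream purely as a black box, so the benchmark for your attempt is the original proof in \cite{Asano1983}, which is a full, paper-length reduction. Measured against that, your proposal is a plan rather than a proof: the only step you actually carry out is the extraction of an obstruction from (C1), (C2), (C4), and even that should be done differently. Walking down a contraction sequence of a non-$\pi$ biconnected graph $B$ only yields one pair $(H,e_H)$ with $\neg\pi(H)$ and $\pi(H/e_H)$; what a reduction needs is a \emph{contraction-critical} gadget, i.e., a graph $H$ with $\neg\pi(H)$ such that \emph{every} single contraction of $H$ satisfies $\pi$. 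This exists by a cleaner argument: take a connected graph failing $\pi$ with the minimum number of edges (contractions preserve connectivity and strictly decrease the edge count, and $K_1$ satisfies $\pi$ by (C2) applied to any connected $\pi$-graph contracted completely). Your weaker pair gives no control over which contractions other than $e_H$ repair $H$, and that control is exactly what your own \enquote{no cheating} paragraph demands and does not supply.

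Second, the reduction you sketch is internally inconsistent, and the inconsistency sits at the load-bearing joint. You want source-vertices of the Vertex Cover instance to become \enquote{repair edges shared among the $H$-gadgets}, so that one contraction can repair several gadgets at once; but you simultaneously invoke (C4) to claim that the constructed graph satisfies $\pi$ precisely when every gadget-block does and that \enquote{the cost decomposes blockwise}. These two requirements exclude each other. If two gadgets share an edge, they lie in a common biconnected component, so they are not blocks and (C4) tells you nothing about them individually; the asserted modularity evaporates. If instead gadgets meet only in cut vertices, then they are blocks, but contracting an edge inside one block leaves every other block unchanged up to isomorphism, so a single contraction can never serve two gadgets; the minimum cost is then just the sum of per-gadget repair costs, no Vertex Cover structure is encoded, and the cardinality correspondence collapses. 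Any correct proof must resolve precisely this tension between cross-gadget interaction and blockwise decomposition --- this is where the real work in \cite{Asano1983} lies --- so as it stands the central step of the argument, the reduction itself, is missing.
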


Having recalled all relevant notation and known results, we are now in a position to turn our attention to the main aspect of the present manuscript, namely the introduction and analysis of proximity measures for edge-basedness.

\section{Results}
\subsection{Proximity measures for edge-basedness based on phylogenetic networks} \label{sec_nwbased}
As not all phylogenetic networks are edge-based, we now introduce the first four measures that can be used to assess the proximity of a phylogenetic network to being edge-based. The measures presented in this section are all based on the given network itself, whereas the measures in the following section will be based on the network's leaf shrink graph.

\begin{Def} \label{def:proximity_networkbased}
Let $N=(V,E)$ be a phylogenetic network on $X$ with $|X| \geq 2$. 
    \begin{enumerate}[(1)]
        \item Let $d^{ED}(N) \coloneqq \min\{k \,\vert\, G' = (V,E') \text{ with } E' \subseteq E, |E'| = |E|-k \text{, $G'$ edge-based}\}$.
        \item Let $d^{ER}(N)\coloneqq\min \{k \,\vert\, G'=(V,E') \text{ with } |E'| = |E| \text{ and } |E \cap E'|=|E|-k \text{, $G'$ edge-based}\}$.
        \item Let $d^{EC}(N) \coloneqq \min\{k \,\vert\, k = \text{number of edges that are contracted in $N$ to obtain $G'$, $G'$ edge-based}\}$.
       \item Let $d^{VD}(N) \coloneqq \min \{k \,\vert\, G' = N[V'] \text{ with } V'\subseteq V, |V'|=|V|-k \text{, $G'$ edge-based}\}$.
    \end{enumerate}
\end{Def}

In words, (1) is the minimum number of edges of $N$ that need to be deleted in order to obtain an edge-based graph $G'$. Similarly, (2) is the minimum number of edges of $N$ which need to be relocated to obtain an edge-based graph $G'$, and (3) is the minimum number of edges of $N$ that need to be contracted to obtain an edge-based graph $G'$. For (3), when we contract edges, we do not introduce multiple edges or loops, i.e., we keep the graph simple (note that deleting loops and copies of parallel edges are valid operations of the leaf-shrinking procedure and thus the proximity measure is not affected by this convention). Finally, (4) is the minimum number of vertices of $N$ that need to be deleted to obtain an edge-based graph $G'$, i.e., $G'$ is an induced subgraph of $N$ of maximum order that is edge-based.\\

In any case, we clearly have $d^{ED}(N) = d^{ER}(N) = d^{EC}(N) = d^{VD}(N)=0$ if and only if $N$ is edge-based. If $N$ is not edge-based, all four measures are strictly positive.\\

We remark that for technical reasons, we sometimes apply these proximity measures to general connected graphs (e.g., in the proof of Proposition \ref{N:dElowerbound}), for which they are defined analogously.\\

Before we can analyze the introduced proximity measures more in-depth, note that all of them  measure the distance from $N$ to an edge-based graph (and not necessarily to an edge-based phylogenetic network) as the operations used (edge deletions, edge relocations, edge contractions, and vertex deletions) in some cases inevitably lead to graphs that violate the definition of a phylogenetic network. For instance, the resulting edge-based graphs may contain degree-2 vertices, parallel edges, or loops. An example is depicted in Figure \ref{Fig_LSNoPhylo}. Here, it suffices to delete one edge of $N$ to make it edge-based, but the resulting graph inevitably contains a degree-2 vertex. In order to stay in the space of phylogenetic networks, we could continue to modify the graph by applying the leaf-shrinking procedure. However, it is easily seen that this will lead to a single edge in this example. Thus, it is not always possible to obtain a phylogenetic network distinct from $K_2$ this way. Alternatively, if $N$ is a non-edge-based phylogenetic network and $G$ is its closest edge-based graph (according to one of the proximity measures introduced above), we can simply turn $G$ into a phylogenetic network $N'$ by attaching additional leaves to all degree-2 vertices, parallel edges, or loops that $G$ may contain. Clearly, $N'$ will still be edge-based. Thus, it is possible to measure the distance from $N$ to an edge-based phylogenetic network $N'$, i.e., it is possible to stay in the space of phylogenetic networks. However, for simplicity, we measure the distance from $N$ to an edge-based graph in the following.

\begin{figure}[htbp]
    \centering
    \includegraphics[scale=0.3]{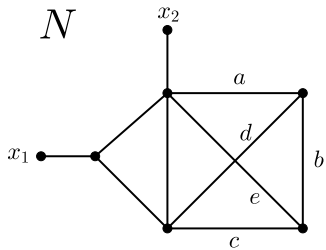}
    \caption{Non-edge-based phylogenetic network $N$ with $d^{ED}(N)=1$. It suffices to delete \emph{one} of the edges $a, b, c, d$, or $e$ to make $N$ edge-based, but the resulting graph will contain a degree-2 vertex and is thus no longer a phylogenetic network. The same applies if more than one edge gets deleted.}
    \label{Fig_LSNoPhylo}
\end{figure}

We conclude this section with the following remark concerning the edge deletion and replacement proximity measures.

\begin{Rem} \label{rem_dER_nocutedge} It can be easily seen that concerning $d^{ED}$ and $d^{ER}$, due to the minimization, no cut edge gets ever deleted, respectively replaced. This is due to the fact that we want to reach an edge-based graph $G'$, so in particular a connected graph, and that each $K_4$ subdivision of $N$ (if any) must be contained in a blob of $N$, as cut edges are not contained in any cycle. Thus, deleting, respectively moving a cut edge will never have any impact of the $K_4$ minors of $N$, which shows that such moves are never necessary or helpful in any way (deleting a cut edge is even harmful since it destroys connectivity) in order to reach an edge-based graph. We will use this insight later on.
\end{Rem}

\subsubsection{Relationships among the network-based proximity measures for edge-basedness}
\noindent In the following, we analyze the relationships among the four different network-based proximity measures for edge-basedness. We begin by showing that $d^{ED}(N) = d^{ER}(N)$.

\begin{Theo}\label{N:E=U}
Let $N$ be a phylogenetic network on $X$ with $|X| \geq 2$. Then, $d^{ED}(N) = d^{ER}(N)$.
\end{Theo}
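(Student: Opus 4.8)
The plan is to show both inequalities $d^{ED}(N) \leq d^{ER}(N)$ and $d^{ER}(N) \leq d^{ED}(N)$.

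The inequality $d^{ER}(N) \leq d^{ED}(N)$ should be the easier direction, at least morally. Suppose $d^{ED}(N) = k$, so there is an edge set $E' \subseteq E$ with $|E'| = |E| - k$ such that $G' = (V, E')$ is edge-based. The natural idea is to interpret the $k$ deleted edges as edges that have instead been \emph{relocated}, thereby producing a graph with the same number of edges $|E|$ that agrees with $N$ on $|E| - k$ edges. The subtlety is that a relocation must place each of the $k$ edges somewhere (on the same vertex set $V$), and we must guarantee that the resulting graph $G''$ is still edge-based. So I would argue that the $k$ relocated edges can be attached in a way that does not create any $K_4$-minor, using the characterization from Proposition \ref{prop:K4-minorfree}. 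For instance, one can relocate each such edge to become parallel to an existing edge of $G'$ (or to hang off as a pendant-type edge), since adding a parallel copy of an edge to an edge-based graph keeps it edge-based (a parallel edge can always be removed by leaf-shrinking operation (iii)); this is consistent with Remark~\ref{Rem_subgraph_edgebased}-style reasoning. This yields an edge-based graph differing from $N$ in exactly $k$ edges, so $d^{ER}(N) \leq k = d^{ED}(N)$.

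The reverse inequality $d^{ED}(N) \leq d^{ER}(N)$ is where I expect the main obstacle. Suppose $d^{ER}(N) = k$, witnessed by an edge-based graph $G' = (V, E')$ with $|E'| = |E|$ and $|E \cap E'| = |E| - k$. So $G'$ is obtained from $N$ by removing $k$ edges (those in $E \setminus E'$) and adding $k$ new edges (those in $E' \setminus E$). I want to conclude that deleting just the $k$ removed edges from $N$ — without adding anything back — already yields an edge-based graph, which would give $d^{ED}(N) \leq k$. The key claim is that adding edges can never destroy edge-basedness in the direction we need: more precisely, since $G'$ is edge-based and hence $K_4$-minor free, and since deleting edges can only remove minors, the graph $(V, E \cap E')$ obtained by deleting the $k$ edges of $E \setminus E'$ from $N$ is a subgraph of $G'$ and is therefore also $K_4$-minor free, hence edge-based by Proposition~\ref{prop:K4-minorfree}. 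The one technical point to handle carefully is connectivity: the proximity measures implicitly require the target graph to be a connected edge-based graph, so I must verify that $(V, E \cap E')$ is connected. Here I would invoke the reasoning of Remark~\ref{rem_dER_nocutedge}: in an optimal relocation no cut edge of $N$ is ever moved, so all cut edges of $N$ survive in $E \cap E'$, and this together with the structure of blobs should guarantee connectivity of $(V, E\cap E')$.

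Putting the two inequalities together gives $d^{ED}(N) = d^{ER}(N)$, completing the proof. The crux of the argument is the second inequality, and specifically the observation that $(V, E\cap E')$ sits as a subgraph inside the edge-based (equivalently $K_4$-minor free) graph $G'$; the monotonicity of $K_4$-minor freeness under edge deletion, furnished by Proposition~\ref{prop:K4-minorfree} and Lemma~\ref{lem:k4minor_top}, is what makes the deletion-only operation suffice. I would make sure to state the connectivity verification explicitly, as it is the step most easily overlooked.
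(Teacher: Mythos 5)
Your proof is correct and matches the paper's argument essentially step for step: the direction $d^{ER}(N)\le d^{ED}(N)$ is shown by relocating the deleted edges into parallel positions (exactly the paper's construction), and the direction $d^{ED}(N)\le d^{ER}(N)$ by deleting the relocated edges --- none of which are cut edges by Remark~\ref{rem_dER_nocutedge} --- and observing that the resulting connected subgraph of the edge-based graph $G'$ is itself edge-based. The only cosmetic difference is that you appeal directly to the $K_4$-minor-free characterization (Proposition~\ref{prop:K4-minorfree}) where the paper cites Remark~\ref{Rem_subgraph_edgebased}, its immediate consequence.
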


\begin{proof}
We first show that $d^{ER}(N) \leq d^{ED}(N)$. Suppose that $d^{ED}(N)=k$, i.e., $N$ contains a set of $k$ non-cut edges whose deletion leads to an edge-based graph $G'$. We now argue that a suitable relocation of these $k$ edges also leads to an edge-based graph. To see this, note that we can iteratively relocate these $k$ non-cut edges such that they become parallel edges. Let $\widehat{G}$ denote the graph resulting from this procedure. Clearly, the $k$ parallel edges of $\widehat{G}$ can be deleted during the leaf-shrinking procedure, resulting in the graph $G'$, which by assumption is edge-based. Thus, $\widehat{G}$ is edge-based, and we have $d^{ER}(N) \leq d^{ED}(N)$ as claimed.

Now, we show that $d^{ED}(N) \leq d^{ER}(N)$. Suppose that $d^{ER}(N)=k$, i.e., $N$ contains a set of $k$ edges whose relocation leads to an edge-based graph $G'$. By Remark \ref{rem_dER_nocutedge}, none of these edges is a cut edge. Now, consider the connected graph $\widehat{G}$ obtained from deleting these $k$ edges from $N$ instead of relocating them. Then, $\widehat{G}$ is a subgraph of $G'$. As $G'$ is edge-based by assumption, by Remark \ref{Rem_subgraph_edgebased} we conclude that $\widehat{G}$ is edge-based, too (note that as $|X| \geq 2$, $\widehat{G}$ contains at least two vertices).
Thus, $\widehat{G}$ is edge-based, and we have $d^{ED}(N) \leq d^{ER}(N)$ as claimed.

In summary, $d^{ER}(N) = d^{ED}(N)$, which completes the proof.
\end{proof}

While, $d^{ED}(N)=d^{ER}(N)$, there is no direct relationship among the other proximity measures. In particular, we have the following:
\begin{itemize}
    \item There exist non-edge-based phylogenetic networks $N$ such that $d^{ED}(N)=d^{ER}(N)=d^{EC}(N)=d^{VD}(N)$. As an example, for the phylogenetic network $N_2$ depicted in Figure \ref{Fig_3Networks}, it is easily verified that $d^{ED}(N)=d^{ER}(N)=d^{EC}(N)=d^{VD}(N)=1$.
    
    \item There exist non-edge-based phylogenetic networks $N$ such that $d^{EC}(N) < d^{ED}(N)=d^{ER}(N)$. An example is depicted in Figure \ref{Fig_N_contract_vs_delete}(i), where $d^{EC}(N)=2$, whereas $d^{ED}(N)=d^{ER}(N)=3$.
    
    \item There exist non-edge-based phylogenetic networks $N$ such that $d^{EC}(N) > d^{ED}(N)=d^{ER}(N)$. An example is depicted in Figure \ref{Fig_N_contract_vs_delete}(ii), where $d^{EC}(N)=2$, whereas $d^{ED}(N)=d^{ER}(N)=1$.
    
    \item There exist non-edge-based phylogenetic networks $N$ such that $d^{VD}(N) < d^{EC}(N)=d^{ED}(N)=d^{ER}(N)$. An example is depicted in Figure \ref{Fig_N_vertexdelete_vs_rest}(i), where $d^{VD}(N)=1$, whereas $d^{EC}(N)=d^{ED}(N)=d^{ER}(N)=2$.
    
    \item There exist non-edge-based phylogenetic networks $N$ such that $d^{VD}(N) > d^{EC}(N)=d^{ED}(N)=d^{ER}(N)$. An example is depicted in Figure \ref{Fig_N_vertexdelete_vs_rest}(ii), where $d^{VD}(N)=8$, whereas $d^{EC}(N)=d^{ED}(N)=d^{ER}(N)=5$.
\end{itemize}

\begin{figure}[htbp]
    \centering
    \includegraphics[scale=0.35]{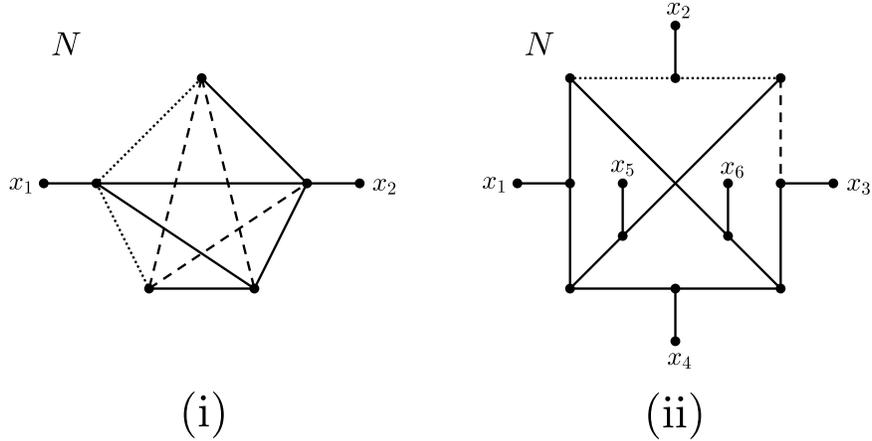}
    \caption{(i) Phylogenetic network $N$ with $d^{EC}(N)=2$ and $d^{ED}(N)=d^{ER}(N)=3$. To see that $d^{EC}(N)=2$, note that contracting for instance the two dotted edges yields an edge-based graph, whereas contracting only one edge of $N$ yields a graph containing $K_4$ as a minor. Similarly, to see that $d^{ED}(N)=d^{ER}(N)=3$, it is easily checked that deleting/relocating for instance the three dashed edges yields an edge-based graph, whereas deleting/relocating strictly fewer than three edges of $N$ yields a graph containing $K_4$ as a minor.
    (ii) Phylogenetic network $N$ with $d^{EC}(N)=2$ and $d^{ED}(N)=d^{ER}(N)=1$. Again, it is easily checked that contracting for instance the two dotted edges, respectively deleting/relocating for instance the dashed edge of $N$ yields an edge-based graph, whereas contracting, respectively deleting/relocating, strictly fewer edges results in a graph containing $K_4$ as a minor.}
    \label{Fig_N_contract_vs_delete}
\end{figure}

\begin{figure}[htbp]
    \centering
    \includegraphics[scale=0.25]{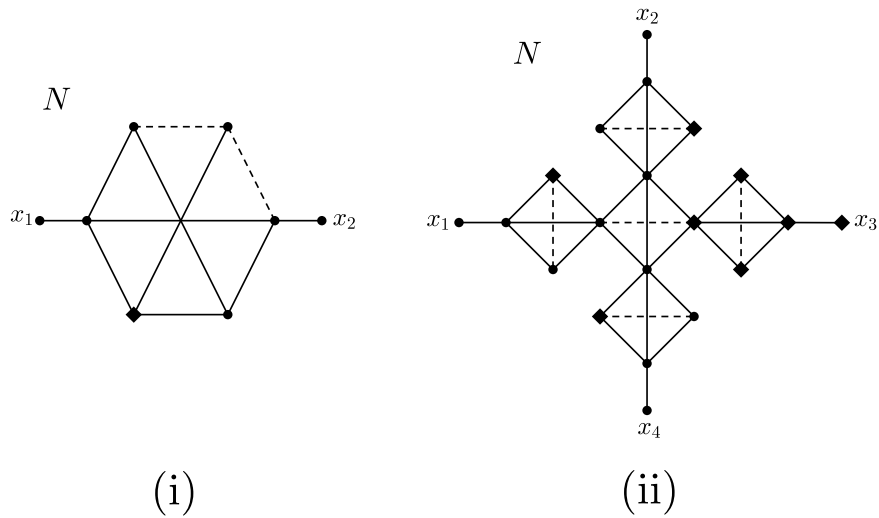}
    \caption{(i) Phylogenetic network $N$ with $d^{VD}(N)=1$ and $d^{EC}(N)=d^{ED}(N)=d^{ER}(N)=2$. It is easily seen that at least one vertex needs to be deleted from $N$ to obtain an edge-based graph and a possible choice is the vertex depicted as a diamond. Similarly, it can easily be verified that at least two edges need to be deleted/relocated/contracted to obtain an edge-based graph and a possible choice are the two dashed edges.
    (ii) Phylogenetic network $N$ with $d^{VD}(N)=8$ and $d^{EC}(N)=d^{ED}(N)=d^{ER}(N)=5$. Here, $d^{VD}(N)=8$, because in order to obtain an edge-based graph, at least one vertex in each of the 5 induced $K_4$'s needs to be deleted; however, in order to obtain a connected graph, a vertex of the \enquote{central} $K_4$ can only be deleted, if one of the non-trivial blocks bordering it is completely deleted. Thus, a total of 8 vertices needs to be deleted to obtain a connected $K_4$-minor free and thus edge-based graph. A possible choice is given by the vertices depicted as diamonds. On the other hand, $d^{ED}(N)=d^{ER}(N)=d^{EC}(N)=5$, because one edge out of each induced $K_4$ needs to be deleted/relocated/contracted to obtain a connected $K_4$-minor free and thus edge-based graph. A possible choice is given by the dashed edges.}
    \label{Fig_N_vertexdelete_vs_rest}
\end{figure}

\subsubsection{Upper and lower bounds for some of the network-based proximity measures for edge-basedness}
In this section, we derive upper and lower bounds for the proximity measures $d^{ED}=d^{ER}$.
We begin by stating an upper bound for $d^{ED}(N)=d^{ER}(N)$ based on the tier of $N$ that follows from Proposition \ref{prop:tier_edgebased}.

\begin{Cor}\label{N:dEleqtierG}
Let $N$ be a phylogenetic network on $X$ with $|X| \geq 2$. Then, $d^{ER}(N)=d^{ED}(N) \leq \text{tier}(N)$. Moreover, $d^{ER}(N)=d^{ED}(N)=\text{tier}(N)=0$ if $N$ is a phylogenetic tree, and $d^{ER}(N)=d^{ED}(N) \leq \text{tier}(N)-2$ if $N$ is not edge-based.
\end{Cor}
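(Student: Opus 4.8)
The plan is to work throughout with $d^{ED}$, since Theorem \ref{N:E=U} already gives $d^{ED}(N)=d^{ER}(N)$; every bound proved for $d^{ED}$ then transfers verbatim to $d^{ER}$. Write $t \coloneqq \text{tier}(N)$ and recall that, by definition, there is a set $S\subseteq E$ with $|S|=t$ such that $N-S$ is a tree $T$ on the full vertex set $V$. For the first inequality I would simply observe that $T$ is a connected tree on $|V|\geq 2$ vertices (using $|X|\geq 2$), hence $K_4$-minor free and therefore edge-based by Proposition \ref{prop:K4-minorfree}. Since deleting the $t$ edges of $S$ from $N$ produces the edge-based graph $T$, this witnesses $d^{ED}(N)\leq t=\text{tier}(N)$.

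For the tree case, if $N$ is itself a phylogenetic tree then no edges need to be removed to obtain a tree, so $\text{tier}(N)=0$; moreover $N$ is $K_4$-minor free and hence edge-based, so $d^{ED}(N)=0$ (as noted after Definition \ref{def:proximity_networkbased}, the measures vanish exactly on edge-based networks). Thus $d^{ER}(N)=d^{ED}(N)=\text{tier}(N)=0$.

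The main point, and the only real obstacle, is the improved bound $d^{ED}(N)\leq t-2$ when $N$ is not edge-based. First I would note that non-edge-basedness forces $t\geq 3$: if we had $t=\text{tier}(N)\leq 2$, then $N$ would be edge-based by Proposition \ref{prop:tier_edgebased}, a contradiction. Hence $t-2\geq 1$ and, in particular, $|S|=t\geq 2$. The idea is then not to delete all of $S$ (which over-shoots, landing on the tree $T$), but to stop two edges early. Concretely, I would choose any subset $S'\subseteq S$ with $|S'|=t-2$ and set $G'\coloneqq N-S'$. Since $T=N-S$ is a subgraph of $G'$ on the same vertex set, $G'$ is connected; and since deleting the two remaining edges $S\setminus S'$ from $G'$ yields the tree $T$, we get $\text{tier}(G')\leq 2$. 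Because $G'$ is connected with $|V|\geq 2$ vertices, Proposition \ref{prop:tier_edgebased} then yields that $G'$ is edge-based. As $G'$ is obtained from $N$ by deleting the $|S'|=t-2$ edges of $S'$, this witnesses $d^{ED}(N)\leq t-2=\text{tier}(N)-2$.

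The whole argument is short; the one place to be careful is making sure the intermediate graph $G'$ is both connected and has tier at most $2$, so that Proposition \ref{prop:tier_edgebased} applies. Connectivity is guaranteed because $G'$ retains the spanning tree $T$, and the tier bound is guaranteed because only two extra edges remain to be deleted to reach $T$. The hypothesis that $N$ is not edge-based enters solely to secure $t\geq 3$, which is what makes the two retained edges available and the bound $t-2$ non-negative.
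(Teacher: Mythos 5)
Your proposal is correct and follows essentially the same route as the paper: both arguments hinge on Proposition \ref{prop:tier_edgebased} to force $\text{tier}(N)\geq 3$ in the non-edge-based case and to certify that the graph obtained by keeping two of the tier-defining edges (your $G'=N-S'$, the paper's $\widehat{G}$ obtained by re-introducing two deleted edges) is edge-based, yielding $d^{ED}(N)\leq \text{tier}(N)-2$, with Theorem \ref{N:E=U} transferring everything to $d^{ER}$. The only cosmetic difference is that you establish the unconditional bound $d^{ED}(N)\leq\text{tier}(N)$ directly via the spanning tree being edge-based, whereas the paper handles it through the edge-based case plus the stronger bound; this changes nothing of substance.
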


\begin{proof}
First, suppose that $N$ is an edge-based phylogenetic network. Then, $d^{ED}(N)=0$. Moreover, by definition, $\text{tier}(N) \geq 0$, and so $d^{ED}(N) \leq \text{tier}(N)$ if $N$ is edge-based. In the special case that $N$ is a phylogenetic tree, $d^{ED}(N)=\text{tier}(N)=0$ (because trees are trivially edge-based and the tier of a tree is zero).

Now, suppose that $N$ is not edge-based. Let $k = \text{tier}(N)$. By Proposition \ref{prop:tier_edgebased}, we have $k > 2$ (as otherwise, $N$ would be edge-based). Let $G'$ denote the tree obtained from $N$ by deleting $k$ suitable edges. If we now re-introduce two of the $k$ edges, we obtain a graph $\widehat{G}$ with tier$(\widehat{G})=2$. By Proposition \ref{prop:tier_edgebased}, $\widehat{G}$ is edge-based. In particular, this implies that we can turn $N$ into an edge-based graph by deleting at most $k-2$ edges. Hence, $d^{ED}(N) \leq \text{tier}(N)-2$. This completes the proof for $d^{ED}(N)$, and the same statements for $d^{ER}(N)$ follow by Theorem \ref{N:E=U}.
\end{proof}

We now derive a lower bound for $d^{ED}(N)=d^{ER}(N)$.

\begin{Prop}\label{N:dElowerbound}
Let $N=(V(N),E(N))$ be a phylogenetic network on $X$ with $|X| \geq 2$. Then,
$d^{ER}(N)= d^{ED}(N) \geq \max\{0, |E(N)| - 2|V(N)| + |X| + 3\}$.
\end{Prop}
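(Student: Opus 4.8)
The plan is to prove only the inequality for $d^{ED}$, since the equality $d^{ER}(N)=d^{ED}(N)$ is already supplied by Theorem \ref{N:E=U}, and then transfer the bound. The outer $\max\{0,\cdot\}$ is immediate because all four proximity measures are non-negative, so the substance is the estimate $d^{ED}(N)\geq |E(N)|-2|V(N)|+|X|+3$, which only carries information once the right-hand side is positive. I would begin by fixing an optimal solution: let $k=d^{ED}(N)$ and let $G'=(V(N),E')$ with $E'\subseteq E(N)$ and $|E'|=|E(N)|-k$ be an edge-based graph obtained from $N$ by deleting $k$ edges. By Proposition \ref{prop:K4-minorfree}, $G'$ is $K_4$-minor free, and by the definition of edge-basedness $G'$ is connected on the full vertex set $V(N)$.

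The central idea, and the reason the bound improves on the naive estimate $|E'|\leq 2|V(N)|-3$ by an additive $|X|$, is to strip off the leaves before counting. Since each $x\in X$ has degree $1$ in $N$ and we only ever delete edges, its degree in $G'$ is at most $1$; connectivity of $G'$ then forces it to be exactly $1$. Provided $N\neq K_2$, no two leaves are adjacent, so the $|X|$ leaf-edges are pairwise distinct. Consequently the induced subgraph $G''\coloneqq G'[V(N)\setminus X]$ has exactly $|V(N)|-|X|$ vertices and exactly $|E'|-|X|$ edges, and as a subgraph of $G'$ it is again $K_4$-minor free.

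I would then apply Lemma \ref{lem:K4free_maxedges} to $G''$: by completing $G''$ to an edge-maximal $K_4$-minor-free supergraph on the same vertex set, any $K_4$-minor-free graph on $m\geq 2$ vertices has at most $2m-3$ edges. Assuming $|V(N)|-|X|\geq 2$, this gives $|E'|-|X|\leq 2(|V(N)|-|X|)-3$, which rearranges to $|E'|\leq 2|V(N)|-|X|-3$. Substituting $|E'|=|E(N)|-k$ yields $k\geq |E(N)|-2|V(N)|+|X|+3$, as required; the statement for $d^{ER}$ then follows from Theorem \ref{N:E=U}.

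The main obstacle is the side condition $|V(N)|-|X|\geq 2$ required for the edge-maximality count, together with the genuinely degenerate cases (e.g.\ $K_2$ and stars) where there are too few interior vertices for the stripping argument. I would dispose of these by observing that the bound is vacuous unless its right-hand side is positive, and that in the only substantive situation, namely $N$ not edge-based, Proposition \ref{prop:K4-minorfree} forces a $K_4$ minor and hence at least four interior vertices of degree at least three, so that $|V(N)|-|X|\geq 4$ and the edge-maximality estimate applies; the remaining small cases are absorbed by the $\max$ with $0$.
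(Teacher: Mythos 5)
Your main argument is essentially the paper's own proof, just organized from the other end: the paper first deletes all leaves of $N$ and argues via Remark~\ref{rem_dER_nocutedge} that this does not change $d^{ED}$, then applies Lemma~\ref{lem:K4free_maxedges} to the stripped graph; you instead keep $N$ intact, pass to an optimal edge-based $G'$, and use connectivity of $G'$ to see that all $|X|$ leaf edges survive, so that stripping the leaves from $G'$ costs exactly $|X|$ vertices and $|X|$ edges. That part of your write-up is correct, and your observation that non-edge-basedness forces (via Proposition~\ref{prop:K4-minorfree} and Lemma~\ref{lem:k4minor_top}) a $K_4$-subdivision whose four branch vertices are interior, hence $|V(N)|-|X|\geq 4$, cleanly secures the side condition $|V(N)|-|X|\geq 2$ in the case that $N$ is not edge-based.

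The gap is in your final sentence: the degenerate cases are \emph{not} absorbed by the $\max$ with $0$. When $N$ is edge-based the left-hand side is $0$, so you must show the quantity $|E(N)|-2|V(N)|+|X|+3$ is itself $\leq 0$; the $\max$ does nothing for you there. This is precisely where the counting argument needs $|V(N)|-|X|\geq 2$, and it genuinely fails otherwise: for a star $N=K_{1,n}$ with $n\geq 3$ one has $|E(N)|-2|V(N)|+|X|+3 = n-2(n+1)+n+3 = 1$ while $d^{ED}(N)=0$, and for $N=K_2$ the right-hand side is $2$ while $d^{ED}(N)=0$. So these are not merely cases your case analysis misses -- they are counterexamples to the proposition as stated, which is only true when $N$ has at least two non-leaf vertices (a hypothesis that, as you showed, is automatic whenever $N$ is not edge-based). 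You should not feel bad about this: the paper's own proof has the same blind spot, since it applies the $2|V|-3$ count of Lemma~\ref{lem:K4free_maxedges} to the leaf-stripped graph without checking that this graph has at least two vertices. The honest conclusion is that your proof is correct exactly on the domain where the statement is correct, and that both the statement and the paper's proof should carry the extra hypothesis $|V(N)|-|X|\geq 2$ (equivalently, should exclude $K_2$ and stars).
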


\begin{proof}
By Lemma \ref{lem:K4free_maxedges}, every edge-maximal $K_4$-minor free graph $G=(V(G),E(G))$ has $2|V(G)|-3$ edges. Thus, if $N$ is not edge-based, at least $|E(N)|-(2|V(N)| -3)$ ges need to be deleted to obtain an edge-based graph. In particular, $d^{ED}(N) \geq \max\{0, |E(N)|-(2|V(N)| -3)\}$ (where $d^{ED}(N)=0$ if $N$ is edge-based). However, it is easily seen that this bound can be improved by noting that we can obtain a graph, $G=(V(G),E(G))$ say, from $N$ by deleting all elements of $X$ together with their incident edges such that $d^{ED}(N)=d^{ED}(G)$ (note that the equality is due to the fact that cut edges never get deleted when turning a non-edge-based graph into an edge-based one (Remark \ref{rem_dER_nocutedge}), and thus $d^{ED}(N)$ is not determined by their number). Now, $|V(G)|=|V(N)|-|X|$ and $|E(G)|=|E(N)|-|X|$, and again using Lemma \ref{lem:K4free_maxedges}, we obtain
\begin{align*}
    d^{ED}(N) = d^{ED}(G) &\geq \max\{0, |E(N)|-|X| - (2 (|V(N)|-|X|)-3) \} \\
    &= \max\{0, |E(N)| - 2 |V(N)|+ |X| + 3\}.
\end{align*}
The same statement for $d^{ER}(N)$ follows by Theorem \ref{N:E=U}, which completes the proof.
\end{proof}

Next, we will turn our attention to a different class of proximity measures.

\subsection{Proximity measures for edge-basedness based on leaf shrink graphs}
While all proximity measures for edge-basedness introduced so far were based on the underlying phylogenetic network, we now introduce analogous measures based on the leaf shrink graph ($\LS$ graph-based, for short) of the network.

\begin{Def} \label{def:proximity_LSbased}
Let $N = (V,E)$ be a phylogenetic network on $X$ with $|X| \geq 2$ and let $\LS(N)=(V_{\LS},E_{\LS})$ denote its leaf shrink graph.

\begin{enumerate}[(1)]
    \item Let $d_{ED}(N) \coloneqq \min\{k \,\vert\, G' = (V_{\LS},E') \text{ with } E' \subseteq E_{\LS}, |E'| = |E_{\LS}|-k \text{, $G'$ edge-based}\}$.
    \item Let $d_{ER}(N)\coloneqq\min \{k \,\vert\, G'=(V_{\LS},E') \text{ with } |E'| = |E_{\LS}| \text{ and } |E_{\LS} \cap E'|=|E_{\LS}|-k \text{, $G'$ edge-based}\}$.
    \item Let $d_{EC}(N) \coloneqq \min\{k \,\vert\, k = \text{number of edges that are contracted in $\LS(G)$ to obtain $G'$, $G'$ edge-based}\}$.
    \item Let $d_{VD}(N) \coloneqq \min \{k \,\vert\, G'=\LS(N)[V'] \text{ with } V' \subseteq V_\LS, |V'|=|V_{\LS}|-k \text{, $G'$ edge-based}\}$.
\end{enumerate}
\end{Def}

Analogously to the network-based proximity measures for edge-basedness introduced in Definition \ref{def:proximity_networkbased}, (1) and (2) refer to the minimum number of edges of $\LS(N)$ that need to be deleted, respectively relocated, to obtain an edge-based graph, and by the same arguments used in Section \ref{sec_nwbased}, these edges are no cut edges (cf. Remark \ref{rem_dER_nocutedge}).
Similarly, (3) is the minimum number of edges of $\LS(N)$ that need to be contracted to obtain an edge-based graph (where we again keep the graph simple, i.e., where we do not introduce parallel edges or loops). Finally, (4) is the minimum number of vertices that need to be deleted from $\LS(N)$ to obtain an edge-based graph, i.e., an induced subgraph of $\LS(N)$ of maximum order that is edge-based.\\

We clearly have $d_{ED}(N)=d_{ER}(N)=d_{EC}(N)=d_{VD}(N)=0$ if and only if $N$ is edge-based. If $N$ is not edge-based, all four measures are strictly positive.\\

In the following, we will explore the relationships among the different $\LS$ graph-based proximity measures, before relating them to the network-based proximity measures introduced in the previous section.

\subsubsection{Relationships among the \texorpdfstring{$\LS$}{LS} graph-based proximity measures for edge-basedness}
Recall that for the network-based proximity measures, we obtained the identity $d^{ED}(N)=d^{ER}(N)$ (Theorem \ref{N:E=U}). It immediately follows from the proof of this theorem that the same identity holds for the corresponding $\LS$ graph-based proximity measures.

\begin{Cor}\label{LS:E=U}
Let $N$ be a phylogenetic network on $X$ with $|X| \geq 2$. Then, $d_{ED}(N) = d_{ER}(N)$.
\end{Cor}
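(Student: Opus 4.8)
The plan is to observe that Corollary \ref{LS:E=U} is a statement of exactly the same form as Theorem \ref{N:E=U}, but applied to the graph $\LS(N)$ in place of $N$. The key point is that the proof of Theorem \ref{N:E=U} never used any property specific to phylogenetic networks beyond those shared by $\LS(N)$: it relied only on the facts that the underlying graph is connected, has at least two vertices, and that edge-basedness is preserved under taking connected subgraphs with at least two vertices (Remark \ref{Rem_subgraph_edgebased}), together with the observation that deleting parallel edges is a valid leaf-shrinking operation. All of these hold for $\LS(N)$ just as well as for $N$.

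Concretely, I would first recall that by definition $\LS(N)$ is a connected graph with $|V_{\LS}| \geq 2$ (the leaf-shrinking procedure always leaves at least two vertices, and the smallest possible result is $K_2$). Hence the $\LS$ graph-based measures $d_{ED}(N)$ and $d_{ER}(N)$ are precisely the network-based measures $d^{ED}$ and $d^{ER}$ evaluated on the connected graph $\LS(N)$, which is legitimate since, as remarked after Definition \ref{def:proximity_networkbased}, these proximity measures are defined analogously for arbitrary connected graphs. The second step is simply to invoke the argument of Theorem \ref{N:E=U} verbatim with $\LS(N)$ playing the role of $N$: for the inequality $d_{ER}(N) \leq d_{ED}(N)$ one relocates a minimal set of $k$ non-cut edges so that they become parallel, and notes these can be removed by leaf-shrinking; for the reverse inequality $d_{ED}(N) \leq d_{ER}(N)$ one deletes rather than relocates a minimal set of $k$ edges (which are non-cut edges by Remark \ref{rem_dER_nocutedge}), obtaining a connected subgraph of the edge-based graph $G'$, which is itself edge-based by Remark \ref{Rem_subgraph_edgebased}.

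I do not expect any genuine obstacle here, since the content is entirely inherited. The only thing requiring a moment's care is the justification that the proof of Theorem \ref{N:E=U} is really phrased in terms that survive the substitution $N \rightsquigarrow \LS(N)$ — in particular that Remark \ref{Rem_subgraph_edgebased} and Remark \ref{rem_dER_nocutedge} are stated for general (connected) graphs and not specifically for phylogenetic networks. A quick check confirms that Remark \ref{Rem_subgraph_edgebased} is indeed stated for edge-based graphs $G$ in general, and the cut-edge observation of Remark \ref{rem_dER_nocutedge} depends only on the structural fact that $K_4$-subdivisions live inside blobs, which is again a purely graph-theoretic statement. Thus the cleanest write-up is to state that the proof of Theorem \ref{N:E=U} applies \emph{mutatis mutandis} to $\LS(N)$, making explicit that $\LS(N)$ is connected with at least two vertices so that all invoked lemmas remain applicable, and to conclude $d_{ED}(N) = d_{ER}(N)$.
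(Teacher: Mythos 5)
Your proposal is correct and takes exactly the paper's route: the paper's proof of Corollary \ref{LS:E=U} simply states that the argument of Theorem \ref{N:E=U} is repeated with $\LS(N)$ in place of $N$. Your additional care in checking that $\LS(N)$ is connected with at least two vertices and that Remarks \ref{Rem_subgraph_edgebased} and \ref{rem_dER_nocutedge} apply to general connected graphs is a sound (and slightly more explicit) justification of the same argument.
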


\begin{proof}
The proof is analogous to the proof of Theorem \ref{N:E=U}; we simply repeat the argument for deleting, respectively relocating, edges for $\LS(N)$ (instead of $N$).
\end{proof}

We now show that under certain circumstances, we also have an equality of $d_{EC}(N)$ and $d_{VD}(N)$.

\begin{Cor} \label{cor:LS_dZ_dV_equality}
Let $N$ be a phylogenetic network on $X$ with $|X| \geq 2$ such that $\LS(N)=K_i$ with $i=2$ or $i \geq 4$. Then $d_{EC}(N) = d_{VD}(N) = 0$ if $i = 2$ and $d_{EC}(N) = d_{VD}(N) = i-3$ if $i \geq 4$.
\end{Cor}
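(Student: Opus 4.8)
The plan is to reduce everything to a single structural fact about complete graphs and then read off both measures directly from the definition, since both $d_{EC}$ and $d_{VD}$ act on $\LS(N)=K_i$ rather than on $N$. The fact I would establish first is that \emph{$K_j$ is edge-based if and only if $j \leq 3$}: for $j \leq 3$ the graph has fewer than four vertices and hence cannot contain $K_4$ as a minor, while for $j \geq 4$ it contains $K_4$ as a subgraph and therefore as a minor, so by Proposition~\ref{prop:K4-minorfree} it fails to be edge-based. The case $i=2$ is then immediate: $\LS(N)=K_2$ is already edge-based, so $N$ is edge-based and both measures vanish, as recorded right after Definition~\ref{def:proximity_LSbased}.

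For $i \geq 4$ I would treat $d_{VD}$ first, as it is the most transparent. The key observation is that every induced subgraph of $K_i$ on $j$ vertices is again the complete graph $K_j$. Hence deleting $k$ vertices produces $K_{i-k}$, which by the fact above is edge-based precisely when $i-k \leq 3$. Minimising $k$ subject to $i-k \leq 3$ gives $k = i-3$, attained by deleting down to a $K_3$, so $d_{VD}(N) = i-3$.

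For $d_{EC}$ I would argue along parallel lines, but now using that any graph obtained from $K_i$ by contracting edges (keeping it simple, per the convention of Definition~\ref{def:proximity_networkbased}) is again a complete graph: contraction merely identifies the endpoints of the contracted edges into vertex classes, and since every pair of distinct vertices of $K_i$ is adjacent, every pair of distinct classes remains adjacent. Thus each contraction reduces the vertex count by exactly one, and contracting $k$ edges that form a forest yields $K_{i-k}$; contracting edges that close cycles is never helpful, as it uses more contractions than the resulting drop in vertex count. Requiring $K_{i-k}$ to be edge-based again forces $i-k \leq 3$, so $k = i-3$ suffices, giving the upper bound $d_{EC}(N) \leq i-3$.

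The main obstacle, and the step I would write most carefully, is the matching lower bound for $d_{EC}$: that no sequence of fewer than $i-3$ contractions can reach an edge-based graph. The clean argument is that \emph{every} contraction minor of $K_i$ is a complete graph, that a single contraction decreases the number of vertices by at most one, and that a complete graph is edge-based only when it has at most three vertices; hence at least $i-3$ contractions are necessary. A minor point to address along the way is the ``keep the graph simple'' convention: contracting an edge whose endpoints share neighbours creates parallel edges, which are suppressed, so the contracted graph genuinely is the smaller complete graph and no spurious multi-edges or loops alter the count. Combining the upper and lower bounds yields $d_{EC}(N) = i-3 = d_{VD}(N)$, completing the proof.
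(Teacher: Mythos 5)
Your proposal is correct and follows essentially the same route as the paper's proof: both arguments rest on the fact that vertex deletions and (simplified) edge contractions turn $K_i$ into smaller complete graphs, which are edge-based precisely when they have at most three vertices, so one must reduce to $K_3$ in exactly $i-3$ steps. The paper treats the contraction case by declaring it ``completely analogous'' to the deletion case, whereas you spell out the contraction lower bound explicitly; this is a welcome sharpening but not a different method.
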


\begin{proof}
First, assume that $\LS(N)=K_2$. Then, $N$ is edge-based and thus $d_{EC}(N)=d_{VD}(N)=0$.

Now, suppose that $\LS(N)=K_i$ with $i \geq 4$, which implies that $N$ is not edge-based. 
We first show that $d_{VD}(N)=i-3$. As $N$ is not edge-based, we begin by deleting one vertex in $\LS(N) = K_i$, resulting in the complete graph $K_{i-1}$. If $K_{i-1}$ is not edge-based, we delete another vertex and obtain $K_{i-2}$. In particular, in order to obtain an edge-based (and thus $K_4$-minor free) graph, we have to delete as many vertices as we need to obtain $K_3$ from $\LS(N) = K_i$. These are $i-3$ many and thus $d_{VD}(N) = i-3$.
Recalling the convention that the contraction of edges does not lead to loops or parallel edges, the proof that $d_{EC}(N) = i-3$ is completely analogous to the proof of $d_{VD}(N)=i-3$.

This completes the proof.
\end{proof}

Apart from the identities stated in Corollaries \ref{LS:E=U} and \ref{cor:LS_dZ_dV_equality}, there is no direct relationship among the four $\LS$ graph-based proximity measures (analogous to what we have seen for the network-based proximity measures). In particular, we have:
\begin{itemize}
    \item There exist non-edge-based phylogenetic networks $N$ such that $d_{ED}(N)=d_{ER}(N)=d_{EC}(N)=d_{VD}(N)$. As an example, consider the phylogenetic network $N_2$ depicted in Figure \ref{Fig_3Networks}. It is easily checked that $\LS(N_2)=K_4$ and $d_{ED}(N)=d_{ER}(N)=d_{EC}(N)=d_{VD}(N)=1$.
    \item There exist non-edge-based phylogenetic networks $N$ such that $d_{EC}(N) < d_{ED}(N)=d_{ER}(N)$. As an example, consider $\LS(N)=K_5$ depicted in Figure \ref{Fig_LS_contract_vs_delete}(i), where $d_{EC}(N)=2$, whereas $d_{ED}(N)=d_{ER}(N)=3$.
    \item There exist non-edge-based phylogenetic networks $N$ such that $d_{EC}(N) > d_{ED}(N)=d_{ER}(N)$. As an example, consider $\LS(N)$ depicted in Figure \ref{Fig_LS_contract_vs_delete}(ii), where $d_{EC}(N)=4$, whereas $d_{ED}(N)=d_{ER}(N)=2$.
    \item There exist non-edge-based phylogenetic networks $N$ such that $d_{VD}(N) < d_{EC}(N)=d_{ED}(N)=d_{ER}(N)$. As an example, consider $\LS(N)$ depicted in Figure \ref{Fig_LS_vertexdelete_vs_rest}(i), where $d_{VD}(N)=1$, whereas $d_{EC}(N)=d_{ED}(N)=d_{ER}(N)=2$.
    \item There exist non-edge-based phylogenetic networks $N$ such that $d_{VD}(N) > d_{EC}(N)=d_{ED}(N)=d_{ER}(N)$. As an example, consider $\LS(N)$ depicted in Figure \ref{Fig_LS_vertexdelete_vs_rest}(ii), where $d_{VD}(N)=7$, whereas $d_{EC}(N)=d_{ED}(N)=d_{ER}(N)=5$.
\end{itemize}

\begin{figure}[htbp]
    \centering
    \includegraphics[scale=0.35]{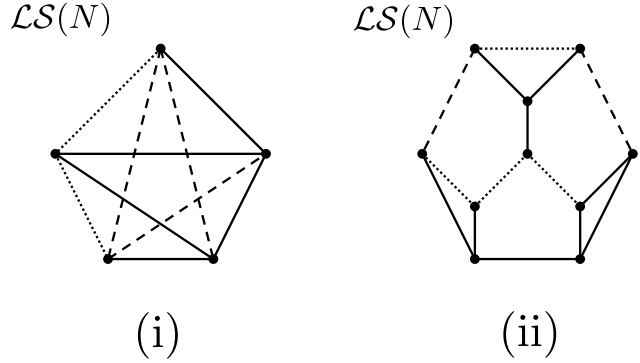}
    \caption{(i) Leaf shrink graph $\LS(N)=K_5$ yielding $d_{EC}(N)=2$ and $d_{ED}(N)=d_{ER}(N)=3$. It is easily checked that at least two edges need to be contracted (three edges need to be deleted/relocated) to obtain an edge-based graph from $\LS(N)$, and a possible choice is given by the two dotted (three dashed) edges. (ii) Leaf shrink graph $\LS(N)$ yielding $d_{EC}(N)=4$ and $d_{ED}(N)=d_{ER}(N)=2$. It is easily checked that at least two edges need to be deleted/relocated to obtain an edge-based graph from $\LS(N)$, and a possible choice is given by the two dashed edges. For, $d_{EC}(N)$, using the computer algebra system Mathematica \cite{Mathematica}, we exhaustively verified that $\LS(N)$ does not contain any subset of up to three edges whose contraction yields an edge-based graph. However, it suffices to contract for instance the four dotted edges.}
    \label{Fig_LS_contract_vs_delete}
\end{figure}

\begin{figure}[htbp]
    \centering
    \includegraphics[scale=0.35]{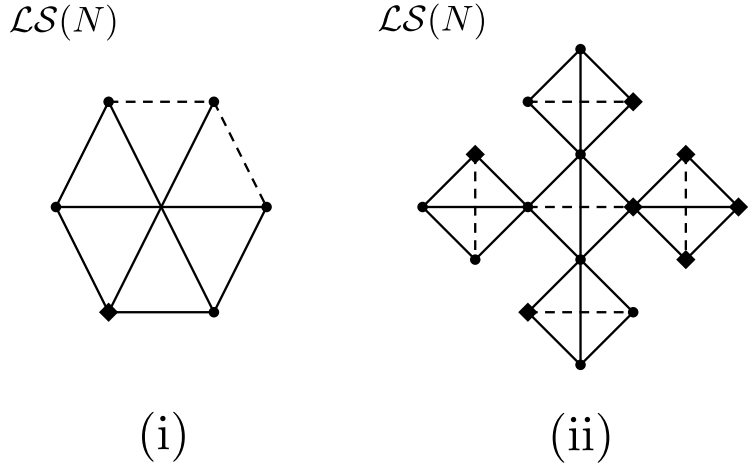}
    \caption{(i) Leaf shrink graph $\LS(N)$ yielding $d_{VD}(N)=1$ and $d_{EC}(N)= d_{ED}(N)=d_{ER}(N)=2$. It is easily checked that at least one vertex needs to be deleted from $\LS(N)$ to obtain an edge-based graph and a possible choice is the vertex depicted as a diamond. Similarly, it can easily be verified that at least two edges need to be contracted/deleted/relocated to obtain an edge-based graph and a possible choice is given by the two dashed edges. (ii) Leaf shrink graph $\LS(N)$ yielding $d_{VD}(N)=7$ and $d_{EC}(N)= d_{ED}(N)=d_{ER}(N)=5$. With the same reasoning as for the network depicted in Figure \ref{Fig_N_vertexdelete_vs_rest}(ii), at least one vertex in each of the five induced $K_4$'s needs to be deleted; however, in order to obtain a connected graph, this means that one of the \enquote{outer} $K_4$'s needs to be removed completely, yielding $d_{VD}(N)=7$ (a possible choice of seven vertices is given by the vertices depicted as diamonds). On the other hand, $d_{EC}(N)=d_{ED}(N)=d_{ER}(N)=5$, because one edge out of each of the five induced $K_4$'s needs to be contracted, deleted, or relocated, to obtain an edge-based graph. A possible choice is given by the five dashed edges.}
    \label{Fig_LS_vertexdelete_vs_rest}
\end{figure}

\subsubsection{Upper and lower bounds for some of the \texorpdfstring{$\LS$}{LS} graph-based proximity measures for edge-basedness}
For the network-based proximity measures for edge-basedness, we obtained lower and upper bounds for $d^{ED}(N) =d^{ER}(N)$. We now show that analogous bounds can be obtained for $d_{ED}(N)=d_{ER}(N)$.

\begin{Cor}\label{LS:dEleqtierLS}
Let $N$ be a phylogenetic network on $X$ with $|X| \geq 2$, and let $\LS(N)$ denote its leaf shrink graph. Then, $d_{ER}(N)=d_{ED}(N) \leq \text{tier}(\LS(N))$. In particular, $d_{ER}(N)=d_{ED}(N)=\text{tier}(\LS(N))=0$ if $N$ is edge-based, and $d_{ER}(N)=d_{ED}(N) \leq \text{tier}(\LS(N))-2$ if $N$ is not edge-based.
\end{Cor}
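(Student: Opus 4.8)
The plan is to mirror the proof of Corollary~\ref{N:dEleqtierG}, replacing the network $N$ everywhere by its leaf shrink graph $\LS(N)$, and then to invoke the corresponding identity $d_{ED}(N)=d_{ER}(N)$ from Corollary~\ref{LS:E=U} at the very end. The only structural fact I need is that the four $\LS$-graph-based measures are defined by performing exactly the same operations (edge deletion, relocation, contraction, vertex deletion) on $\LS(N)$ that the network-based measures perform on $N$; hence any statement about $d^{ED}$ derived purely from edge-deletions and the tier carries over verbatim with $N$ replaced by $\LS(N)$.

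First I would handle the edge-based case. If $N$ is edge-based, then by definition $\LS(N)=K_2$, so $\LS(N)$ is a single edge, which is trivially edge-based and has tier $0$. Consequently $d_{ED}(N)=0=\text{tier}(\LS(N))$, establishing the ``in particular'' equality and the upper bound simultaneously in this case.

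Next I would treat the non-edge-based case, which is where Proposition~\ref{prop:tier_edgebased} does the real work. Set $k=\text{tier}(\LS(N))$. Since $N$ is not edge-based, neither is $\LS(N)$ (it is not $K_2$), so by Proposition~\ref{prop:tier_edgebased} applied to the connected graph $\LS(N)$ we must have $k>2$. Deleting $k$ suitable edges from $\LS(N)$ yields a spanning tree, which is edge-based; re-introducing two of these edges gives a graph $\widehat{G}$ with $\text{tier}(\widehat{G})=2$, and Proposition~\ref{prop:tier_edgebased} again guarantees $\widehat{G}$ is edge-based. Thus deleting at most $k-2$ edges from $\LS(N)$ already produces an edge-based graph, which gives $d_{ED}(N)\le \text{tier}(\LS(N))-2$, and \emph{a fortiori} $d_{ED}(N)\le\text{tier}(\LS(N))$. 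Finally, the identical statements for $d_{ER}(N)$ follow from Corollary~\ref{LS:E=U}.

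I expect no genuine obstacle here: the argument is a direct transcription of the proof of Corollary~\ref{N:dEleqtierG}. The one point worth a moment's care is confirming that Proposition~\ref{prop:tier_edgebased} is legitimately applicable to $\LS(N)$ rather than only to phylogenetic networks; this is fine because that proposition is stated for arbitrary connected graphs $G$ with $|V(G)|\ge 2$, and $\LS(N)$ is always such a graph (the leaf-shrinking procedure keeps at least two vertices). The only subtlety to phrase carefully is that $\LS(N)$ may contain parallel edges or be a non-phylogenetic graph, but since the measures $d_{ED}$, $d_{ER}$ are defined directly on $\LS(N)$ and Proposition~\ref{prop:tier_edgebased} imposes no phylogenetic constraints, this causes no difficulty.
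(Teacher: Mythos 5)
Your proposal is correct and follows essentially the same route as the paper's own proof: the edge-based case via $\LS(N)=K_2$, the non-edge-based case via Proposition~\ref{prop:tier_edgebased} (tier $>2$, delete tier-many edges, re-introduce two to get a tier-2 and hence edge-based graph), and Corollary~\ref{LS:E=U} to transfer the bound to $d_{ER}$. Your added remark that Proposition~\ref{prop:tier_edgebased} applies to $\LS(N)$ because it is stated for arbitrary connected graphs on at least two vertices is exactly the (implicit) justification the paper relies on.
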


\begin{proof}
First, suppose that $N$ is edge-based. Then, $d_{ED}(N)=0$. Moreover, since $N$ is edge-based, we have $\LS(N)=K_2$ and thus clearly tier$(\LS(N))=0$.

Now, suppose that $N$ is not edge-based. Let $k = \text{tier}(\LS(N))$. Then, by Proposition \ref{prop:tier_edgebased}, we have $k > 2$. Let $G'$ denote the tree obtained from $\LS(N)$ by deleting $k$ suitable edges. If we now re-introduce two of these $k$ edges, we obtain a graph $\widehat{G}$ with tier$(\widehat{G})=2$, which is edge-based by Proposition \ref{prop:tier_edgebased}. In particular, we can obtain an edge-based graph from $\LS(N)$ by deleting at most $k-2$ edges. Thus, $d_{ED}(N) \leq \text{tier}(\LS(N))-2$. Using Corollary \ref{LS:E=U} to derive the same statements for $d_{ER}(N)$, this completes the proof.
\end{proof}

We now derive a lower bound for $d_{ED}(N)=d_{ER}(N)$.

\begin{Cor}\label{LS:dElowerbound}
Let $N$ be a phylogenetic network on $X$ with $|X| \geq 2$, and let $\LS(N)=(V_\LS, E_\LS)$ denote its leaf shrink graph. Then, $d_{ED}(N) = d_{ER}(N) \geq \max\{0, |E_\LS| - 2(|V_\LS|-3)\}$.
\end{Cor}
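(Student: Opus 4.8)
The plan is to establish the lower bound by applying the edge-count argument from Proposition~\ref{N:dElowerbound} directly to the leaf shrink graph $\LS(N)$ rather than to $N$ itself. The key observation is that the $\LS$ graph-based proximity measures are defined by performing deletions/relocations on $\LS(N)=(V_\LS,E_\LS)$, so the relevant quantities are $|E_\LS|$ and $|V_\LS|$, not $|E(N)|$ and $|V(N)|$. By Lemma~\ref{lem:K4free_maxedges}, any edge-maximal $K_4$-minor free graph on $|V_\LS|$ vertices has exactly $2|V_\LS|-3$ edges. Since by Proposition~\ref{prop:K4-minorfree} a connected graph is edge-based if and only if it is $K_4$-minor free, any edge-based graph $G'=(V_\LS,E')$ obtained from $\LS(N)$ by deleting edges can retain at most $2|V_\LS|-3$ of the original edges. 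Hence at least $|E_\LS|-(2|V_\LS|-3)$ edges must be deleted, giving $d_{ED}(N)\geq |E_\LS|-(2|V_\LS|-3) = |E_\LS|-2(|V_\LS|-3)-3$.

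Here I would need to be slightly careful with the algebra: $|E_\LS|-(2|V_\LS|-3) = |E_\LS|-2|V_\LS|+3$, whereas the claimed bound is $|E_\LS|-2(|V_\LS|-3) = |E_\LS|-2|V_\LS|+6$. The stated bound is therefore \emph{larger} by $3$ than the naive edge-maximality bound, so the proof cannot rest on Lemma~\ref{lem:K4free_maxedges} alone. The natural way to recover the extra $+3$ is to mimic the refinement used in the proof of Proposition~\ref{N:dElowerbound}, where one first removes structure that cannot contribute to the $K_4$-minors. In the network-based setting this was the removal of the $|X|$ leaves (each contributing one vertex and one edge, changing the bound by exactly the term $+|X|$). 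For the $\LS$ graph, however, $\LS(N)$ has no leaves or degree-$2$ vertices by construction, so a different source for the improvement is required.

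The main obstacle will be justifying precisely this additional $+3$, and I expect the cleanest route is via the blob/block decomposition together with Proposition~\ref{prop:blobs_blocks_edgebased}. By Remark~\ref{rem_dER_nocutedge} no cut edge is ever deleted, so all deletions happen inside the non-trivial blobs (equivalently, the edge-maximality bound should be applied blockwise rather than globally). Applying Lemma~\ref{lem:K4free_maxedges} to a single biconnected component and accounting for the shared/cut structure yields a tighter count than the global estimate, because the global bound $2|V_\LS|-3$ overcounts when $\LS(N)$ decomposes into several blocks joined at cut vertices. I would formalize this by observing that the relevant $K_4$-minors live in a single block, so restricting attention to that block (whose vertex set is a proper subset once the rest of $\LS(N)$ is peeled away, analogous to deleting the leaves in the network case) produces exactly the extra $+3$. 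Alternatively, and perhaps more transparently, one can invoke Corollary~\ref{LS:E=U} to reduce the $d_{ER}$ statement to the $d_{ED}$ statement, and then derive the bound by applying Proposition~\ref{N:dElowerbound} to $\LS(N)$ viewed as a connected graph in its own right; since $\LS(N)$ has no vertices playing the role of labeled leaves, the term $|X|+3$ of Proposition~\ref{N:dElowerbound} collapses appropriately. I would check the small cases $\LS(N)=K_4$ and $\LS(N)=K_5$ against Figure~\ref{Fig_LS_contract_vs_delete}(i) to confirm the constant is correct before committing to the final form.
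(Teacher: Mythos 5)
Your first paragraph is precisely the paper's proof: the paper derives this corollary as a \enquote{direct consequence} of Lemma~\ref{lem:K4free_maxedges} together with the non-negativity of $d_{ED}=d_{ER}$ (cf.\ Corollary~\ref{LS:E=U} and the proof of Proposition~\ref{N:dElowerbound}), and that argument establishes exactly the bound you computed, namely $d_{ED}(N)\geq \max\{0,\,|E_\LS|-(2|V_\LS|-3)\}=\max\{0,\,|E_\LS|-2|V_\LS|+3\}$. So up to that point you and the paper agree.

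The gap lies in how you handle the discrepancy you correctly spotted. The resolution is not a finer argument: the inequality with the extra $+3$, i.e.\ $d_{ED}(N)\geq |E_\LS|-2|V_\LS|+6$, is simply false, and the printed expression $2(|V_\LS|-3)$ has to be read as a typo for $(2|V_\LS|-3)$ --- the parenthesization the paper itself uses inside the proof of Proposition~\ref{N:dElowerbound}, where it writes that at least $|E(N)|-(2|V(N)|-3)$ edges must be deleted. Counterexamples are already in the paper: for $N_2$ of Figure~\ref{Fig_3Networks} one has $\LS(N_2)=K_4$ and $d_{ED}(N_2)=1$, while $|E_\LS|-2(|V_\LS|-3)=6-2=4$; for the network of Figure~\ref{Fig_LS_contract_vs_delete}(i) one has $\LS(N)=K_5$ and $d_{ED}(N)=3$, while the formula gives $10-4=6$. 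Consequently your proposed block-decomposition repair cannot succeed, and it is instructive to see why: the $+|X|+3$ improvement in Proposition~\ref{N:dElowerbound} comes from peeling off the $|X|\geq 2$ leaves of $N$ (each removing one vertex and one edge), but $\LS(N)$ has no leaves and no degree-2 vertices, and it may itself be biconnected (e.g.\ $K_5$), in which case there is nothing to peel away and the naive bound is already tight: $d_{ED}=3=10-(2\cdot 5-3)$ for $K_5$, and $d_{ED}=1=6-(2\cdot 4-3)$ for $K_4$. The sanity check you postponed to the very end --- testing $K_5$ against Figure~\ref{Fig_LS_contract_vs_delete}(i) --- would have settled this immediately; the correct conclusion in a blind setting is that the statement as written contains an error and that your first paragraph proves the corrected version, rather than that an additional $+3$ remains to be justified.
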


\begin{proof}
The statement is a direct consequence of the fact that every edge-maximal $K_4$-minor free graph $G=(V,E)$ has $2|V|-3$ edges (Lemma \ref{lem:K4free_maxedges}) and that $d_{ED}= d_{ER}$ is a non-negative function (see also proof of Proposition \ref{N:dElowerbound}).
\end{proof}

\subsection{Relationship between network-based and \texorpdfstring{$\LS$}{LS} graph-based proximity measures for edge-basedness}
In this section, we analyze how the network-based and the $\LS$ graph-based proximity measures relate to each other. We begin by showing that for all measures using edge-operations (edge deletion, edge relocation, and edge contraction), the  proximity measure based on the $\LS$ graph is a lower bound for the corresponding network-based proximity measure.

\begin{Prop}\label{prop:network_bound_for_LS}
Let $N$ be a phylogenetic network on $X$ with $|X| \geq 2$. Then, $d_\bullet (N) \leq d^\bullet (N)$ for $\bullet \in \{ED,ER,EC\}$.
\end{Prop}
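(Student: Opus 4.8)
The plan is to recognize that, by definition, the $\LS$ graph-based measure $d_\bullet(N)$ is exactly the network-based measure $d^\bullet$ applied to the graph $\LS(N)$ (recall that the $d^\bullet$ are defined analogously for general connected graphs). Since $\LS(N)$ is obtained from $N$ by a finite sequence of leaf-shrink operations of types (i)--(iv), it suffices to show that \emph{no single leaf-shrink operation can increase the measure}, and then to chain the resulting inequalities along the whole reduction sequence. Concretely, I would prove the following claim for $\bullet \in \{ED, EC\}$: if $G'$ arises from a connected graph $G$ (with at least two vertices) by one operation of type (i)--(iv), then $d^\bullet(G') \le d^\bullet(G)$. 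Given this, chaining over a reduction $N = G_0 \to G_1 \to \cdots \to G_m = \LS(N)$ yields $d_\bullet(N) = d^\bullet(\LS(N)) \le d^\bullet(N)$. The case $\bullet = ER$ then follows immediately, since $d^{ED} = d^{ER}$ (Theorem \ref{N:E=U}) and $d_{ED} = d_{ER}$ (Corollary \ref{LS:E=U}), giving $d_{ER}(N) = d_{ED}(N) \le d^{ED}(N) = d^{ER}(N)$.

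To prove the per-operation claim, I would fix an optimal solution $S$ for $G$ (a set of edges to delete, resp.\ contract, of size $d^\bullet(G)$, so that $G-S$, resp.\ the contraction $G/S$, is edge-based) and construct from it a solution $S'$ for $G'$ with $|S'| \le |S|$. The operations of deleting a leaf (i), deleting a parallel edge (iii), and deleting a loop (iv) are the straightforward ones. For the deletion measure, an optimal $S$ can be assumed to avoid the affected edges: a leaf edge is a cut edge and is never deleted (Remark \ref{rem_dER_nocutedge}), while parallel edges and loops are irrelevant to the $K_4$-minor structure (Proposition \ref{prop:K4-minorfree}). Hence $S \subseteq E(G')$, and $G' - S$ arises from the edge-based graph $G - S$ by performing the same leaf-shrink operation; since such operations preserve connectivity and $K_4$-minor-freeness (cf.\ Remark \ref{Rem_subgraph_edgebased} and Proposition \ref{prop:K4-minorfree}), $G' - S$ is again edge-based, so $S' = S$ works. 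For the contraction measure the same goes through after at most a trivial re-routing (contracting a pendant edge just deletes the leaf; contracting a parallel edge can be replaced by contracting its partner and dropping the resulting loop), again yielding $|S'| \le |S|$.

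The crux, and the step I expect to require the most care, is the suppression of a degree-2 vertex $v$ with neighbours $u$ and $w$ (operation (ii)), where the edges $\{u,v\}, \{v,w\}$ of $G$ are replaced by a single edge $\{u,w\}$ in $G'$. Here I would distinguish cases according to how many of $\{u,v\}, \{v,w\}$ lie in $S$. If neither does, then $S \subseteq E(G')$ and $S' = S$ works, since suppressing $v$ commutes with the operations in $S$ and preserves edge-basedness. If exactly one, say $\{u,v\}$, lies in $S$, I would transfer it onto the merged edge by setting $S' = (S \setminus \{\{u,v\}\}) \cup \{\{u,w\}\}$ in the deletion case and $S' = S \setminus \{\{u,v\}\}$ in the contraction case, using the identity that suppressing $v$ is exactly the contraction of one of its two incident edges; a short check shows that $G' - S'$, resp.\ $G'/S'$, is isomorphic to a leaf-deletion of the edge-based graph $G - S$, resp.\ to $G/S$ itself, with $|S'| \le |S|$. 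If both incident edges lie in $S$, then in the deletion case $v$ would become isolated, contradicting connectivity of the edge-based graph $G - S$, so this subcase does not occur; in the contraction case one of the two contractions becomes redundant and may be dropped, giving $|S'| = |S| - 1$.

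Combining the four operations gives the per-operation monotonicity, and chaining it along the reduction sequence completes the proof, with the $ER$ statement obtained from the $ED$ statement as above. The only genuinely delicate point is the bookkeeping in case (ii): one must verify that re-routing a deletion or contraction from an incident edge of $v$ onto the merged edge $\{u,w\}$ produces exactly the same reduced graph (up to deleting the now-redundant vertex $v$), which is precisely where the identity \emph{``suppression $=$ contraction of an incident edge''} does the work.
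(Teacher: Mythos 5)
Your proof is correct, but it takes a genuinely different route from the paper's. You identify $d_\bullet(N)$ with $d^\bullet(\LS(N))$ (valid, since the measures are defined analogously for general connected graphs) and then prove that no single leaf-shrink operation can increase $d^\bullet$, chaining the inequalities along the reduction $N = G_0 \to \cdots \to G_m = \LS(N)$, with the case $\bullet = ER$ obtained for free from Theorem \ref{N:E=U} and Corollary \ref{LS:E=U}. The paper instead argues in one shot: since $\LS(N)$ is a topological minor of $N$, it fixes a subdivision $\mathcal{S}$ of $\LS(N)$ inside $N$, so that each edge of $\LS(N)$ corresponds to a path in $\mathcal{S}$; any set of deletions/relocations/contractions turning $N$ into a $K_4$-minor free graph must in particular destroy all $K_4$-subdivisions inside $\mathcal{S}$, and hence induces, edge-by-path, a set of operations on $\LS(N)$ of at most the same size. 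The paper's argument is shorter and treats all three measures (including $ER$) uniformly, but it is informal about how operations on $N$ project down to $\LS(N)$; yours is more modular, with the only delicate step being the suppression of a degree-2 vertex, which you resolve correctly via the identity \enquote{suppression $=$ contraction of an incident edge} (for $EC$) and by rerouting a deleted incident edge onto the merged edge (for $ED$). The price is extra bookkeeping: you must justify that optimal solutions avoid cut edges and the affected pendant/parallel/loop edges for arbitrary connected graphs rather than networks (the reasoning of Remark \ref{rem_dER_nocutedge} does extend), and that all intermediate graphs retain at least two vertices. In exchange, your stepwise claim is slightly stronger than the proposition itself, as it yields $d^\bullet(G_{i+1}) \le d^\bullet(G_i)$ for every intermediate graph of the reduction, not just for its endpoints.
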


\begin{proof} The crucial ingredients for this proof are the following three aspects: First, recall that by Lemma \ref{lem:k4minor_top}, for $K_4$ the concepts of minors and topological minors coincide. Second, by definition, $\LS(N)$ is a topological minor of $N$, which is why $N$ contains at least one subdivision of $\LS(N)$. We fix one such subdivision $\mathcal{S}$ of $\LS(N)$ in $N$. As a third step, note that this implies that every path in $\mathcal{S}$ corresponds to a unique edge in $\LS(N)$.

Now assume that we have a set of edges that need to be deleted/relocated/contracted in $\LS(N)$ in order to make this graph edge-based, i.e., $K_4$-minor free. In order to turn $N$ into a $K_4$-minor free graph, at least its subgraph $\mathcal{S}$, the subdivision of $\LS(N)$, needs to be made $K_4$-minor free, and so all operations applied to edges of $K_4$ need to be applied to their subdivided counterparts, i.e., their corresponding paths, in $\mathcal{S}$, too. For instance, if an edge from $\LS(N)$ needs to be deleted, at least the path corresponding to this edge in $\mathcal{S}$ needs to be cut by removing one edge. Similarly, if an edge needs to be contracted in $\LS(N)$, we need to contract at least one edge (but possibly more) in the respective path in $\mathcal{S}$. And if an edge from $\LS(N)$ needs to be relocated, then the subdivided version of this edge in $\mathcal{S}$ needs to be relocated by moving at least one edge from it, too, because otherwise the $K_4$ caused by this edge would still be present in $\mathcal{S}$ and thus also in $N$.

Thus, in summary, this shows $d_\bullet (N) \leq d^\bullet (N)$ for $\bullet \in \{ED,ER,EC\}$ as required and thus completes the proof.
\end{proof}

We remark that Proposition \ref{prop:network_bound_for_LS} does not hold for the two proximity measures based on vertex-deletions. In particular, there exist phylogenetic networks $N$ such that $d^{VD}(N) < d_{VD}(N)$. An example is the network $N_1$ depicted in Figure \ref{Fig_dvN_vs_dVLS}, where $d^{VD}(N_1)=6$, whereas $d_{VD}(N_1)=7$. Here, the network-based proximity measure has a smaller value than the $\LS$ graph-based one, because deleting the vertex incident with the edge leading to $x_5$ \enquote{breaks} the \enquote{interior} $K_4$. In $\LS(N_1)$ this vertex is not present anymore and thus cannot be deleted. In particular, in $\LS(N_1)$ one of the \enquote{outer} $K_4$'s has to be deleted completely in order to break the interior $K_4$. Based on this idea, it is in fact possible to construct non-edge-based phylogenetic networks, for which the difference between $d^{VD}(N)$ and $d_{VD}(N)$ is arbitrarily large. An example is depicted in Figure \ref{Fig_dvN_vs_dVLS_large}. Here, $d^{VD}(N)=6$ (again, deleting the vertex incident with the edge leading to leaf $x_5$ breaks the interior $K_4$), whereas $d_{VD}(N)=2m+8$ (because in order to break the interior $K_4$, one of the \enquote{arms} of $\LS(N)$ has to be deleted completely).
On the other hand, there exist phylogenetic networks $N$ such that $d^{VD}(N) > d_{VD}(N)$. For instance, consider the network $N_2$ depicted in Figure \ref{Fig_dvN_vs_dVLS}, where $d^{VD}(N_2)=8$, whereas $d_{VD}(N_2)=7$.

\begin{figure}[htbp]
    \centering
    \includegraphics[scale=0.3]{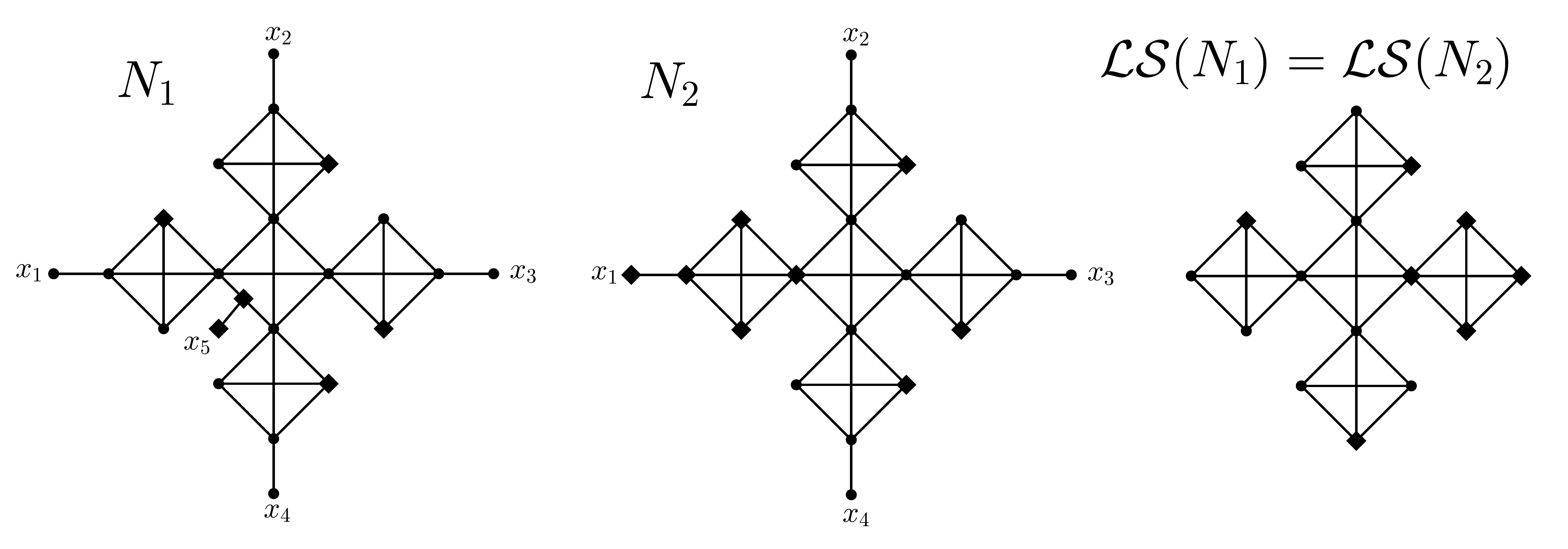}
    \caption{Phylogenetic network $N_1$ on $X=\{x_1, \ldots, x_5\}$ with $d^{VD}(N_1)=6$, whereas $d_{VD}(N_1)=7$, and phylogenetic network $N_2$ on $X=\{x_1, \ldots, x_4\}$ with $d^{VD}(N_2)=8$, whereas $d_{VD}(N_2)=7$. A possible choice of vertices to delete is given by the vertices depicted as diamonds in $N_1$, $N_2$, and $\LS(N_1)=\LS(N_2)$, respectively.}
    \label{Fig_dvN_vs_dVLS}
\end{figure}

\begin{figure}[htbp]
    \centering
    \includegraphics[scale=0.275]{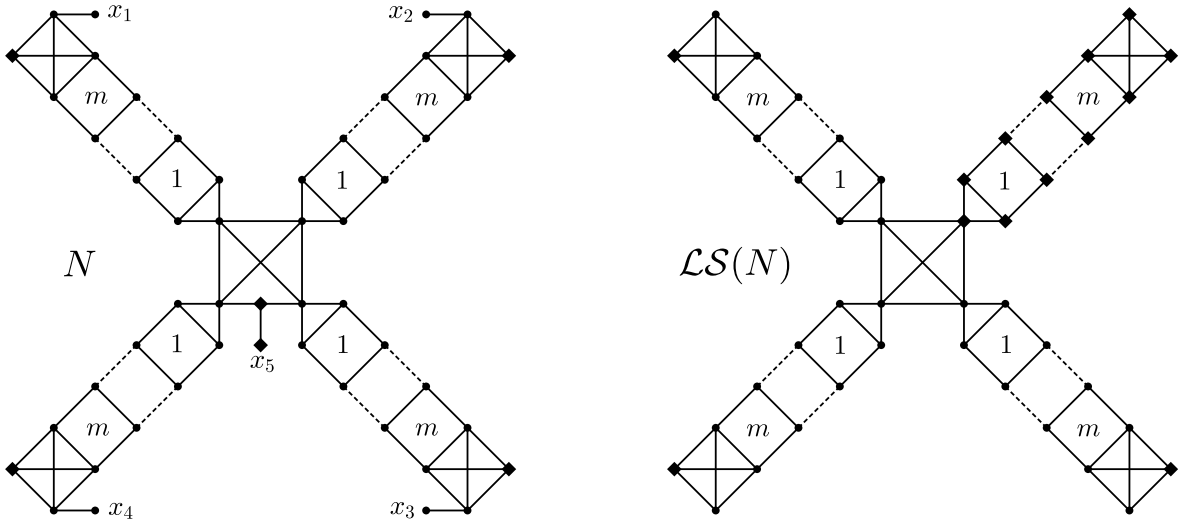}
    \caption{Phylogenetic network $N$ on $X=\{x_1, \ldots, x_5\}$ with $d^{VD}(N)=6$, whereas $d_{VD}(N)=2m+8$. Here, $m$ refers to the number of \enquote{squares} in each of the four \enquote{arms} of $N$, respectively $\LS(N)$, and a possible choice of vertices to delete is given by the vertices depicted as diamonds in $N$, respectively $\LS(N)$.}
    \label{Fig_dvN_vs_dVLS_large}
\end{figure}

Given the relatedness of $d_{\bullet}(N)$ and $d^\bullet(N)$ for $\bullet \in \{ED,ER,EC\}$ stated in Proposition \ref{prop:network_bound_for_LS}, it might seem redundant to consider both network-based and $\LS$ graph-based proximity measures. It turns out, however, that the network-based and $\LS$ graph-based proximity measures can induce different \enquote{rankings} of networks (where we rank networks in terms of their proximity to an edge-based graph). More precisely, for all types of proximity measures, there exist phylogenetic networks, $N_1$ and $N_2$ say, such that $d_\bullet(N_1) > d_\bullet(N_2)$ but $d^\bullet(N_1) < d^\bullet(N_2)$ (with $\bullet \in \{ED,ER,EC,VD\}$ fixed). Examples are given in Figures \ref{Fig_EdgeDeletion}, \ref{Fig_EdgeContraction}, and \ref{Fig_VertexDeletion}. This justifies considering proximity to edge-based graphs both on the level of the network as well as on the level of the $\LS$ graph.

\begin{figure}[htbp]
    \centering
    \includegraphics[scale=0.3]{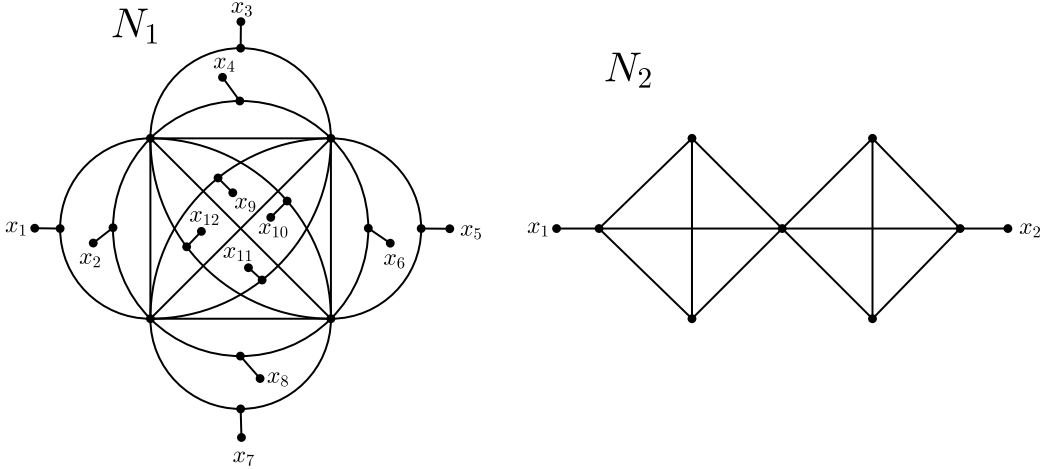}
    \caption{Phylogenetic networks $N_1$ and $N_2$. Note that $N_1$ can be constructed from a $K_4$ by copying each edge twice (to give three copies in total) and adding a leaf to each new copy. So in order to turn $N_1$ into a $K_4$-minor free graph, three edges need to be deleted; whereas because  $\LS(N_1)$ is isomorphic to $K_4$, only one edge needs to be deleted from $\LS(N_1)$. For $N_2$, there are basically two copies of $K_4$ that need to be broken. This leads to $d^{ED}(N_1)=d^{ER}(N_1)=3$, $d_{ED}(N_1)=d_{ER}(N_1)=1$, and $d^{ED}(N_2)=d^{ER}(N_2)=d_{ED}(N_2)=d_{ER}(N_2)=2$. In particular, $d^{ED}(N_1) > d^{ED}(N_2)$, whereas $d_{ED}(N_1) < d_{ED}(N_2)$ (and analogously $d^{ER}(N_1) > d^{ER}(N_2)$ and $d_{ER}(N_1) < d_{ER}(N_2)$).}
    \label{Fig_EdgeDeletion}
\end{figure}

\begin{figure}[htbp]
    \centering
    \includegraphics[scale=0.3]{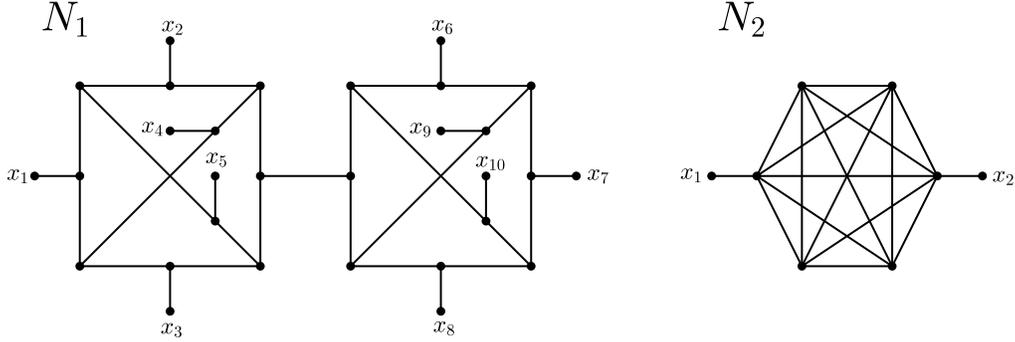}
    \caption{Phylogenetic networks $N_1$ and $N_2$.
    Note that $d^{EC}(N_1)=4$ because we have to \enquote{break} both $K_4$ minors in $N_1$ by contracting sufficiently many edges to turn each $K_4$ into a $K_3$. As each $K_4$ edge is subdivided, this is only possible by two contractions for each $K_4$ minor. Moreover, $d^{EC}(N_2)=3$, because the $K_6$ minor in the center can be reduced to $K_3$ by three contractions, which makes the graph edge-based; and fewer contractions are not sufficient to make the graph $K_4$-minor free. So, in summary, we get $d^{EC}(N_1)=4 > 3 = d^{EC}(N_2)$, whereas $d_{EC}(N_1)=2 < 3 = d_{EC}(N_2)$. The latter can again easily be seen as $\LS(N_2)$ is isomorphic to $K_6$, so we need to contract at least 3 edges to get to $K_3$ (cf. Corollary \ref{cor:LS_dZ_dV_equality}). $\LS(N_1)$, on the other hand, consists of two copies of $K_4$ connected by a cut edge, and it suffices to break each $K_4$ by contracting one edge each.}
    \label{Fig_EdgeContraction}
\end{figure}

\begin{figure}[htbp]
    \centering
    \includegraphics[scale=0.3]{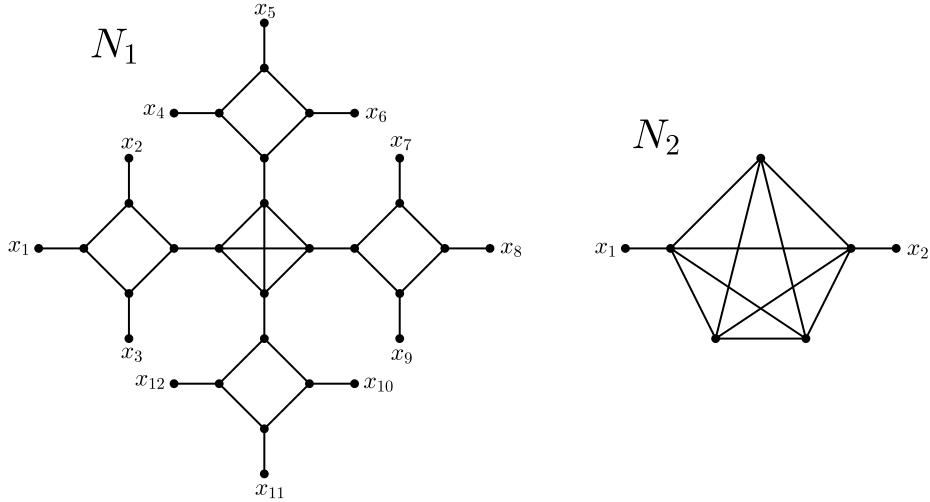}
    \caption{Phylogenetic networks $N_1$ and $N_2$. Here, in order to turn $N_1$ into a $K_4$-minor free graph, first one of the \enquote{arms} of the inner $K_4$ needs to be completely deleted (7 vertices), before a vertex from the $K_4$ itself can be deleted. However, $N_2$ clearly has $K_5$ as a minor, of which two vertices need to be deleted to make $N_2$ edge-based. This leads to $d^{VD}(N_1) = 8 > d^{VD}(N_2)=2$. On the other hand, $\LS(N_1)$ is isomorphic to $K_4$ and $\LS(N_2)$ is isomorphic to $K_5$, which shows that $d_{VD}(N_1) = 1 < 2 = d_{VD}(N_2)$.}
    \label{Fig_VertexDeletion}
\end{figure}

\subsection{Computational complexity of computing network-based and \texorpdfstring{$\LS$}{LS} graph-based proximity measures for edge-basedness}

In this section, we show that the calculation of all eight proximity measures introduced in the present manuscript is NP-hard.
We do so by using the results stated in Section  \ref{sec:computational_complexity} and by exploiting the fact that a connected graph (with $|V| \geq 2$) is edge-based if and only if it is $K_4$-minor free (cf. Proposition \ref{prop:K4-minorfree}).

\begin{Theo}
The calculation of $d^{VD}$ and $d_{VD}$ is NP-hard.
\end{Theo}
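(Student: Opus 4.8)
The plan is to prove NP-hardness of computing $d^{VD}$ and $d_{VD}$ by reducing from the connected maximum subgraph problem, using Theorem \ref{theo:vertex_deletion_NP}. The key observation is that both proximity measures are, by definition, solving exactly a connected vertex-deletion problem: $d^{VD}(N)$ asks for the minimum number of vertices to delete from $N$ so that the remaining induced subgraph is edge-based (equivalently $K_4$-minor free by Proposition \ref{prop:K4-minorfree}), and this induced subgraph must be connected in order to qualify as edge-based in the sense of our definition. So the central task is to verify that the property $\pi = $ \enquote{being edge-based} (equivalently, \enquote{being connected and $K_4$-minor free}) satisfies the three hypotheses of Theorem \ref{theo:vertex_deletion_NP}: that $\pi$ is hereditary on induced subgraphs, and that it is non-trivial and interesting on connected graphs.

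First I would verify the three properties of $\pi$. Hereditariness on induced subgraphs is immediate from Remark \ref{Rem_subgraph_edgebased} together with Proposition \ref{prop:K4-minorfree}: since edge-basedness coincides with $K_4$-minor freeness, and deleting a vertex from a $K_4$-minor free graph cannot create a $K_4$-minor, any induced subgraph of an edge-based graph is again $K_4$-minor free (hence edge-based, provided it has at least two vertices and is connected). Non-triviality on connected graphs holds because some connected graph is edge-based (e.g.\ any tree, or $K_2$) while some connected graph is not (e.g.\ $K_4$ itself). Interestingness holds because there are arbitrarily large edge-based connected graphs (for instance arbitrarily large paths or trees, which are trivially edge-based). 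Thus $\pi$ meets all hypotheses of Theorem \ref{theo:vertex_deletion_NP}, so the connected maximum subgraph problem for $\pi$ is NP-hard.

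Next I would carefully bridge the gap between the abstract connected maximum subgraph problem and our specific measures. For $d^{VD}$, the input to Theorem \ref{theo:vertex_deletion_NP} is a connected simple graph, and the output is the maximum-order connected induced subgraph satisfying $\pi$; minimizing the number of deleted vertices is equivalent to maximizing the order of the retained induced subgraph, which is precisely $d^{VD}$. One subtlety is that Definition \ref{def:proximity_networkbased} applies to phylogenetic networks, whereas the reduction produces general connected simple graphs; I would address this using the remark (made after Definition \ref{def:proximity_networkbased}) that the measures extend to general connected graphs, or alternatively by attaching leaves to turn an arbitrary connected simple graph into a phylogenetic network without affecting the edge-basedness computation. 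For $d_{VD}$, the argument is analogous but requires one extra step: since $d_{VD}$ acts on $\LS(N)$ rather than on $N$, I would argue that any connected simple graph $G$ arising from the reduction can be realized as (or is already equal to) the $\LS$ graph of some phylogenetic network—indeed, one may take $N$ to be $G$ with a leaf attached suitably, or simply note that $G$ with no reducible structure equals its own $\LS$ graph—so that computing $d_{VD}(N)$ solves the same connected maximum subgraph instance.

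The main obstacle I anticipate is the careful handling of the connectivity constraint and the simple-graph input assumption. Theorem \ref{theo:vertex_deletion_NP} specifically concerns the \emph{connected} maximum subgraph problem on connected simple graphs, and edge-basedness is only defined for connected graphs with at least two vertices; I must ensure the correspondence between \enquote{retained induced subgraph is edge-based} and \enquote{retained induced subgraph is a connected $K_4$-minor free graph} is exact, including the edge cases where deleting vertices might disconnect the graph or leave fewer than two vertices. The cleanest route is to observe that an edge-based induced subgraph is by definition connected, so requiring the result to be edge-based automatically enforces connectivity, making our minimization problem literally an instance of the connected vertex-deletion problem for $\pi$. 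Once this identification is pinned down, NP-hardness of both $d^{VD}$ and $d_{VD}$ follows directly from Theorem \ref{theo:vertex_deletion_NP}.
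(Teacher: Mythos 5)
Your overall strategy---reduce via Theorem \ref{theo:vertex_deletion_NP} using the equivalence of edge-basedness and $K_4$-minor freeness from Proposition \ref{prop:K4-minorfree}---is the same as the paper's, and your verification of non-triviality and interestingness is fine. But there are two genuine problems. First, a technical one: you propose to take $\pi=$ \enquote{edge-based} as the property to which Theorem \ref{theo:vertex_deletion_NP} is applied, but this property is \emph{not} hereditary on induced subgraphs, since deleting a vertex can disconnect the graph; your own parenthetical \enquote{provided it has at least two vertices and is connected} concedes exactly this. The correct move (the paper's) is to take $\pi=$ \enquote{$K_4$-minor free}, which is genuinely hereditary, and let the \emph{connected} maximum subgraph problem supply the connectivity requirement; the resulting problem then coincides with minimizing vertex deletions to reach an edge-based graph. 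You gesture at this identification in your last paragraph, so this is repairable, but as written the hypotheses of Theorem \ref{theo:vertex_deletion_NP} are not verified for the property you actually named.

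Second, and more seriously, your treatment of $d_{VD}$ has a real gap. The hard instances produced by Theorem \ref{theo:vertex_deletion_NP} are arbitrary connected simple graphs $G$; such a $G$ is in general \emph{not} an $\LS$ graph (it may contain degree-1 or degree-2 vertices), and attaching a leaf to $G$ does not make $G$ the $\LS$ graph of the resulting network---the leaf-shrinking procedure removes the attached leaf and then continues to reduce $G$ itself, so $\LS(N)$ can differ substantially from $G$ and the vertex-deletion optimum can change. You need an explicit gadget that embeds an arbitrary hard instance into the class of $\LS$ graphs while preserving the optimum up to an efficiently computable correction. The paper does this by replacing each vertex of a phylogenetic network (for which hardness has already been established) by a copy of $K_4$, checking that the resulting graph equals its own $\LS$ graph and is realized as $\LS(N')$ for some network $N'$, and then proving a one-to-one correspondence between optimal deletion sets. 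A correspondence argument of the same kind is also missing from your $d^{VD}$ step: attaching leaves does change the value of the measure, because deleting an interior vertex forces the deletion of its now-isolated attached leaves to preserve connectivity, so you must argue---as the paper does with its two-leaves-per-vertex construction---that the optimum for $N$ is an explicit function of the optimum for $G$. Finally, your first suggested route for $d^{VD}$, namely invoking the extension of the measures to general connected graphs, does not prove hardness of the problem as stated, whose inputs are phylogenetic networks.
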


\begin{proof} 
We first show that the property $\pi \coloneqq$ \enquote{$K_4$-minor free} is hereditary on induced subgraphs, and non-trivial and interesting on connected graphs. If $G$ is $K_4$-minor free, then clearly any subgraph of $G$ is also $K_4$-minor free; in particular, $\pi$ is hereditary on induced subgraphs. Moreover, $\pi$ is non-trivial on connected graphs: for instance, every tree is connected and $K_4$-minor free, whereas all complete graphs with at least four vertices are not, i.e., there are connected graphs that are not $K_4$-minor free. Finally, $\pi$ is interesting on connected graphs since there are arbitrarily large connected graphs satisfying $\pi$ (e.g., arbitrarily large trees).

Now, by Theorem \ref{theo:vertex_deletion_NP}, this implies that given a connected simple graph $G$, it is an NP-hard problem to find a set of vertices of minimum cardinality whose deletion results in a connected and $K_4$-minor free subgraph of $G$. Importantly, by Proposition \ref{prop:K4-minorfree}, this implies that it is an NP-hard problem to find a set of vertices of minimum cardinality whose deletion results in an edge-based subgraph of $G$.

We next show that this particular connected vertex-deletion problem is also NP-hard for phylogenetic networks. In order to see this, take a connected simple graph $G$ and attach two additional leaves to each of its vertices, resulting in a phylogenetic network $N$. Now, if it was possible to efficiently find a set of vertices of minimum cardinality whose deletion results in an edge-based subgraph of $N$, the corresponding problem could also be solved efficiently for $G$; a contradiction. This is simply due to the fact that every triple consisting of an interior vertex of $N$, $u$ say, and its two attached leaves that needs to be deleted in $N$ to obtain an edge-based subgraph, corresponds to one vertex of $G$, namely $u$, that needs to be deleted to obtain an edge-based subgraph of $G$ (since the deletion of leaves in $N$ can only be necessary to keep the resulting graph connected, but not to destroy a subdivision of $K_4$ in $N$).

Finally, we show that the problem is also NP-hard when starting with an $\LS$ graph of some phylogenetic network. In order to see this, take a phylogenetic network $N$ and replace each of its vertices by a $K_4$ such that only one vertex of each new $K_4$ is incident to edges of $N$, i.e., we identify each vertex of $N$ with one vertex of a $K_4$ (cf. Figure \ref{Fig_GadgetVertices}). This leads to a graph $G$ that coincides with its own $\LS$ graph, as no leaves can be deleted and there are no degree-2 vertices or parallel edges. In fact, $G$ is an $\LS$ graph of some phylogenetic network, e.g., the one obtained from attaching a leaf to each of the newly added $K_4$'s. Now, if it was possible to efficiently find a set of vertices of minimum cardinality whose deletion results in an edge-based subgraph of $G$, the corresponding problem could also be solved efficiently for $N$. More precisely, each newly added $K_4$ in $G$ that needs to be deleted completely, corresponds to precisely one vertex of $N$ that needs to be deleted. If only one vertex in a newly added $K_4$ needs to be deleted in $G$, the corresponding vertex does not need to be deleted in $N$. Note that the key idea here is that in order to destroy a $K_4$, it is sufficient to delete one of its vertices. Thus, whenever a $K_4$ needs to be completely deleted, this indicates that this is necessary to keep the remaining graph connected.
Thus, there is a one-to-one correspondence between an optimal set of vertices to delete in $G$ and an optimal set of vertices to delete in $N$. If finding the former was easy, so would be the latter; a contradiction to the fact that the problem is NP-hard for phylogenetic networks.

Thus, both for phylogenetic networks and $\LS$ graphs finding a set of vertices of minimum cardinality whose deletion results in an edge-based graph is NP-hard. This implies that the calculation of $d^{VD}$ and $d_{VD}$ is NP-hard, too, which completes the proof.
\end{proof}

\begin{figure}[htbp]
    \centering
    \includegraphics[scale=0.35]{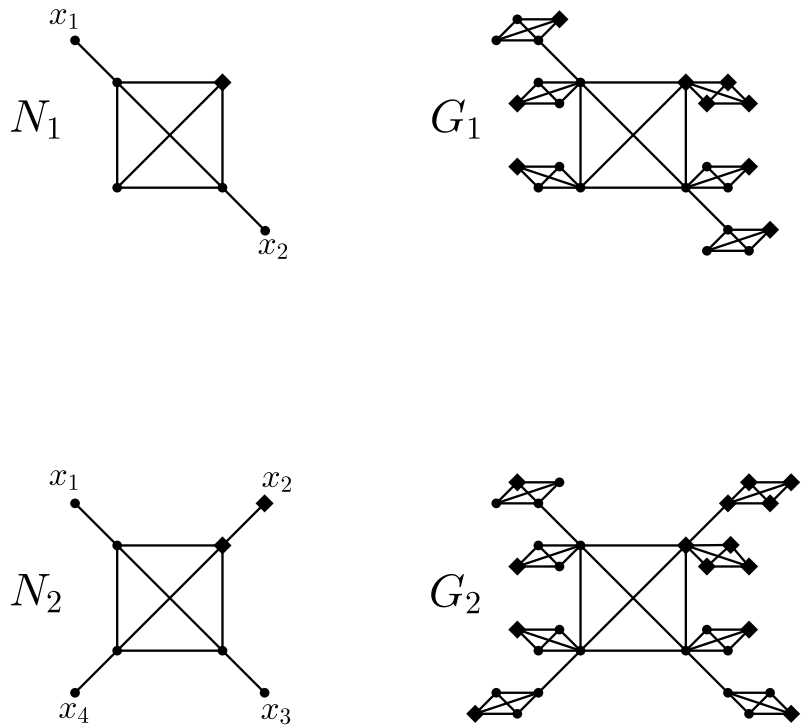}
    \caption{Two phylogenetic networks $N_1$ and $N_2$ and the graphs $G_1$, respectively $G_2$, obtained from them by identifying each vertex with a vertex of the complete graph $K_4$. In all cases, the vertices depicted as diamonds are part of a set of vertices of minimum cardinality whose deletion results in an edge-based graph.}
    \label{Fig_GadgetVertices}
\end{figure}

We now take a closer look at the other proximity measures.

\begin{Theo}
The calculation of $d^\bullet$ and $d_\bullet$ with $\bullet \in \{ED, ER, EC\}$ is NP-hard.
\end{Theo}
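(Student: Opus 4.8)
The plan is to establish NP-hardness separately for the edge-deletion/relocation measures and for the edge-contraction measures, in each case first treating general graphs (where the cited theorems from Section~\ref{sec:computational_complexity} apply directly to the property ``$K_4$-minor free''), and then transferring the hardness to phylogenetic networks and to $\LS$ graphs via gadget constructions analogous to the one used for $d^{VD}$ and $d_{VD}$. Since $d^{ED}=d^{ER}$ (Theorem~\ref{N:E=U}) and $d_{ED}=d_{ER}$ (Corollary~\ref{LS:E=U}), it suffices to handle $d^{ED}$ and $d_{ED}$ for the deletion/relocation case, and then $d^{EC}$ and $d_{EC}$ separately.

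For the edge-deletion measures, I would invoke Theorem~\ref{Theo_Hardness_EdgeDeletion}: taking $\mathbf{F} = \{K_4\}$, note that $K_4$ is a simple biconnected graph of minimum degree three, so $P_{\text{ED}}(\{K_4\})$ --- the problem of deleting a minimum number of edges to reach a $K_4$-minor-free graph --- is NP-hard. By Proposition~\ref{prop:K4-minorfree}, on connected graphs this is exactly the problem of reaching an edge-based graph, giving hardness of computing the edge-deletion distance for general connected graphs. First I would transfer this to phylogenetic networks by attaching two leaves to each vertex of the input graph $G$, as in the $d^{VD}$ proof; the key observation (invoking Remark~\ref{rem_dER_nocutedge}) is that edge deletions in the resulting network $N$ never touch the newly added cut edges, so an optimal edge-deletion set for $N$ coincides with one for $G$. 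Then I would transfer to $\LS$ graphs by replacing each vertex of a phylogenetic network by a $K_4$ gadget, exactly as in the previous theorem, using the fact that such a $G$ equals its own $\LS$ graph; here I must argue that optimal edge deletions never need to enter the gadget $K_4$'s (since they are attached by cut edges and cannot contain a shared $K_4$-subdivision with the rest of the graph), so the correspondence between optimal solutions is preserved.

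For the edge-contraction measures, I would instead invoke Theorem~\ref{Theo_Hardness_EdgeContraction} and verify that $\pi = $ ``$K_4$-minor free'' satisfies conditions (C1)--(C4). Condition (C1) (non-trivial on connected graphs) holds because arbitrarily large trees satisfy $\pi$ while arbitrarily large complete graphs do not. Condition (C2) (hereditary on contractions) holds since contracting an edge cannot create a $K_4$-minor where none existed --- a minor of a minor is a minor. Condition (C3) (determined by the simple graph) holds because the $K_4$-minor relation is unaffected by loops and parallel edges. Condition (C4) (determined by the biconnected components) is the content of Proposition~\ref{prop:blobs_blocks_edgebased} combined with Proposition~\ref{prop:K4-minorfree}: a graph is $K_4$-minor free iff every block is. Hence $P_{\text{EC}}(\pi)$ is NP-hard on general graphs, and by Proposition~\ref{prop:K4-minorfree} this is the edge-contraction distance to an edge-based graph. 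The transfer to networks and to $\LS$ graphs proceeds through the same two gadget constructions, again arguing that optimal contractions never need to contract the added leaf-edges (for networks) or disturb the $K_4$ gadgets (for $\LS$ graphs).

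The main obstacle I anticipate is the careful verification, in each gadget reduction, that an optimal solution for the enlarged instance projects bijectively onto an optimal solution for the original, and in particular that the gadget modifications never ``help'' the adversary. For the leaf-attachment reduction this is handled cleanly by Remark~\ref{rem_dER_nocutedge} for deletion/relocation, but for contraction one must additionally check that contracting a leaf-edge is never strictly beneficial (it only suppresses a degree-one vertex and does nothing toward destroying a $K_4$-minor). For the $K_4$-gadget reduction used to reach $\LS$ graphs, the delicate point is arguing that edge operations inside a gadget $K_4$ are wasteful because each gadget is joined to the rest by a single cut edge and therefore shares no $K_4$-subdivision with the body of the graph; thus any $K_4$-minor that must be destroyed lies wholly in the original structure, and the correspondence of optimal solutions follows exactly as in the vertex-deletion argument. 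Once these structural correspondences are in place, NP-hardness of all six remaining measures follows immediately.
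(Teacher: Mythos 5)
Your overall strategy coincides with the paper's: reduce to $d^{ED}$, $d_{ED}$, $d^{EC}$, $d_{EC}$ via Theorem~\ref{N:E=U} and Corollary~\ref{LS:E=U}; obtain hardness of the underlying edge-deletion and edge-contraction problems for the property \enquote{$K_4$-minor free} from Theorem~\ref{Theo_Hardness_EdgeDeletion} (with $\mathbf{F}=\{K_4\}$) and Theorem~\ref{Theo_Hardness_EdgeContraction} (with the same verification of (C1)--(C4)); and then transfer to phylogenetic networks and to $\LS$ graphs by leaf-attachment and $K_4$-gadget constructions, using the fact that cut edges never lie on a $K_4$-subdivision so that the optimum over $K_4$-minor-free targets is attained at a connected, hence edge-based, graph. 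Your $\LS$-graph gadget (a $K_4$ at every vertex, as in the $d^{VD}$ proof, rather than at every leaf as the paper does here) is equally valid: each gadget is a block meeting the rest of the graph only in a cut vertex, so it contributes a fixed, computable offset to the optimum.

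There is, however, one link that you assert rather than argue, and which the paper treats at some length. Theorems~\ref{Theo_Hardness_EdgeDeletion} and~\ref{Theo_Hardness_EdgeContraction} establish hardness for \emph{arbitrary} graphs, which under this paper's conventions may be disconnected and may contain loops and parallel edges; your leaf-attachment step only produces a phylogenetic network when applied to a \emph{connected simple} graph. The sentence \enquote{on connected graphs this is exactly the problem of reaching an edge-based graph, giving hardness for general connected graphs} uses Proposition~\ref{prop:K4-minorfree} to identify the target class but does not justify that hardness survives the restriction of the input class. The paper supplies the missing reductions: loops are irrelevant; for contraction, parallel edges may be collapsed to a single edge without changing the optimum; for deletion, parallel edges must instead be \emph{subdivided}, and one checks via the degree-2 subdivision vertices that the optimum is preserved; disconnected inputs are handled component-wise. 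None of this is deep, but the edge-deletion case is not a one-liner, and without it your chain of reductions does not start from an input on which the gadget constructions are applicable. With that intermediate step added, your argument matches the paper's.
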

\begin{proof}
First recall that by Theorem \ref{N:E=U}, we have $d^{ED} = d^{ER}$, and by Corollary \ref{LS:E=U}, we have $d_{ED}=d_{ER}$. Therefore, it suffices to show the NP-hardness for $d^{EC}$ and $d_{EC}$ as well as for $d^{ED}$ and $d_{ED}$.

However, we begin by showing that both the edge-deletion and edge-contraction problems (see Section \ref{sec:computational_complexity}) for the graph property $\pi \coloneqq$ \enquote{$K_4$-minor free} are NP-hard.

For the edge-deletion problem, this follows directly from Theorem~\ref{Theo_Hardness_EdgeDeletion}, since in this case, the set of forbidden graphs consists precisely of $K_4$, i.e., $\mathbf{F}=\{K_4\}$, and $K_4$ is a simple and biconnected graph of minimum degree at least three.

For the edge-contraction problem, the statement follows from Theorem~\ref{Theo_Hardness_EdgeContraction} by noting that $\pi \coloneqq$ \enquote{$K_4$-minor free} satisfies conditions (C1)--(C4) therein. More precisely, $\pi$ is non-trivial on connected graphs as there are infinitely many connected graphs satisfying $\pi$ (e.g., trees of arbitrary size) and there are infinitely many connected graphs violating $\pi$ (e.g., the family of complete graphs $K_n$ with $n \geq 4$). Moreover, if a graph $G$ is $K_4$-minor free, then all contractions of $G$ are also $K_4$-minor free, and thus $\pi$ is hereditary on contractions. Furthermore, neither loops nor parallel edges influence whether a graph is $K_4$-minor free. In particular, a graph $G$ is $K_4$-minor free if and only if its simple graph is $K_4$-minor free, and thus $\pi$ is determined by the simple graph. Finally, clearly a graph $G$ is $K_4$-minor free if and only if its biconnected components are $K_4$-minor free (since $K_4$ is biconnected), and thus $\pi$ is determined by the biconnected components.

Thus, determining the minimum number of edge deletions or contractions required to turn an arbitrary graph into to a $K_4$-minor free graph must be NP-hard. Otherwise, the corresponding edge-deletion, respectively edge-contraction, problems would not be NP-hard, contradicting Theorem~\ref{Theo_Hardness_EdgeDeletion}, respectively Theorem~\ref{Theo_Hardness_EdgeContraction}.

We now first note that this also implies that calculating the distance to any $K_4$-minor free graph is NP-hard for \emph{simple} graphs. In order to see this, suppose that $G$ is a multigraph containing loops and/or parallel edges. We first note that loops do not influence the number of edge deletions or contractions required to turn a graph into a $K_4$-minor free graph (as deleting or contracting a loop can never help in destroying a subdivision of $K_4$). This implies that we can simply delete all loops of $G$ without changing the number of edge deletions or contractions required to turn it into a $K_4$-minor free graph. Thus, we may assume that $G$ is a loopless multigraph. Now, in case of edge contractions, it is clear that parallel edges do not influence the number of steps required to make $G$ $K_4$-minor free, either (since if we need to contract a parallel edge, $e=\{u,v\}$ say, all copies of $e$ are simultaneously contracted, and thus only one contraction is required). Thus, for each parallel edge of $G$, we can simply delete all copies but one, and obtain a simple graph $G'$ with the property that turning $G'$ into a $K_4$-minor free graph requires the same number of edge contractions as turning $G$ into a $K_4$-minor free graph does. Thus, if the edge-contraction problem could be solved efficiently for simple graphs, it could also be solved efficiently for multigraphs; a contradiction. Now, in case of edge deletions, we turn $G$ into a simple graph $G'$ by subdividing each copy of a parallel edge $e=\{u,v\}$ existing $k \geq 2$ times in $G$ with a degree-2 vertex $w_i$ for $i=1, \ldots, k$. However, this does not change the number of edge deletions required to reach a $K_4$-minor free graph: If some copy of $e=\{u,v\}$ needs to be deleted in $G$ to destroy a subdivision of $K_4$, it is sufficient to delete one of $e_1 = \{u,w_i\}$ or $e_2 = \{w_i,v\}$ in $G'$ for some $i$, and conversely, if for some $i$, one of $e_1 = \{u,w_i\}$ or $e_2 = \{w_i,v\}$ (where $w_i$ is a degree-2 vertex) needs to be deleted in $G'$ to destroy a subdivision of $K_4$, the corresponding copy of $e=\{u,v\}$ in $G$ needs to be deleted, too. Note that it cannot be the case that both $e_1$ and $e_2$ need to be deleted in $G'$ to obtain a $K_4$-minor free graph since $w_i$ is a degree-2 vertex, and thus after deleting one of $e_1$ and $e_2$, $w_i$ has degree-1 and its remaining incident edge is a cut edge and thus cannot be part of a subdivision of $K_4$. In particular, $G$ and $G'$ require the same number of edge deletions to turn them into $K_4$-minor free graphs. Thus, if the edge-deletion problem could be solved efficiently for simple graphs, it could also be solved efficiently for multigraphs; a contradiction.

Next we show that calculating the distance to any $K_4$-minor free graph is also NP-hard for connected graphs. If this was not true, we could efficiently solve the problem individually for each connected component and make each one of them $K_4$-minor free, which would give an efficient optimal solution for the general problem (since all graphs considered here are finite).

Taking the preceding two arguments together, we can additionally conclude that the problem is NP-hard for connected simple graphs.
If this was not the case, we could efficiently solve the problem for all simple graphs by solving it individually for each connected component and turning each of them into a $K_4$-minor free graph, yielding an optimal solution to the general problem; a contradiction to the fact that the problem is NP-hard for simple graphs.

We now show that determining the minimum number of edge deletions or contractions required to reach a $K_4$-minor free graph when starting with a phylogenetic network is also NP-hard. In order to see this, simply take a connected simple graph and attach an extra leaf to each of its vertices of degree larger than 1. This leads to a phylogenetic network; however, the newly added edges are all cut edges and thus do not have an impact on the number of edges that need to be deleted or contracted in order to reach a $K_4$-minor free graph. Therefore, if the problem could be solved efficiently for such networks, it could thus be solved efficiently for all connected simple graphs, which would be a contradiction.

Next, we show that determining the minimum number of edge deletions or contractions required to reach a $K_4$-minor free graph when starting with an $\LS$ graph of a phylogenetic network $N$ is also NP-hard. In order to see this, simply take a phylogenetic network and replace each of its leaves with a $K_4$ (cf. Figure \ref{Fig_GadgetEdges}). This leads to a graph $G$ that coincides with its own $\LS$ graph, as no leaves can be deleted and there are no degree-2 vertices or parallel edges. In fact, $G$ is an $\LS$ graph of some phylogenetic network, e.g., the one that we get by attaching a leaf to each of the newly added $K_4$'s (cf. Figure \ref{Fig_GadgetEdges}). Thus, if the minimum number of edge deletions or contractions in order to turn $G$ into a $K_4$-minor free graph could be efficiently calculated, we could also immediately calculate the number of such operations to turn the original network $N$ into a $K_4$-minor free graph. This is due to the fact that each newly added $K_4$ contributes precisely one required step (as these $K_4$'s each form a block, the number of edge deletions or edge contractions needed to turn $G$ into a $K_4$-minor free graph simply equals the number of added $K_4$'s, i.e., the number of leaves of $N$, plus the number of such operations needed to turn $N$ into a $K_4$-minor free graph). This would imply that the problem could be solved efficiently for all phylogenetic networks; a contradiction.

So calculating the minimum number of edge deletions or contractions to turn a graph (independent of whether it is simple or not), a connected (simple) graph, a phylogenetic network, or the $\LS$ graph of a phylogenetic network into a $K_4$-minor free graph is NP-hard. This is a major interim step, which we now use to show that the same is true for edge-basedness.

Note that in case we start with any connected graph, in order to reach a $K_4$-minor free graph, we never need to contract or delete a cut edge. This is due to the fact that a cut edge never belongs to any $K_4$-subdivision (as it does not even belong to a cycle), and thus its deletion or contraction will never be required to destroy any subdivision of $K_4$. Thus, in case we start with a connected graph, or, more specifically, with a phylogenetic network or its $\LS$ graph, the minimum number of steps to reach a $K_4$-minor free graph using edge deletions or contractions will always be achieved by going to a graph that is edge-based, i.e., a graph that is $K_4$-minor free \emph{and} connected. This shows that the calculation of  $d^{EC}$ and $d_{EC}$ as well as of $d^{ED}$ and $d_{ED}$ is indeed NP-hard, which completes the proof.
\end{proof}

\begin{figure}[htbp]
    \centering
    \includegraphics[scale=0.275]{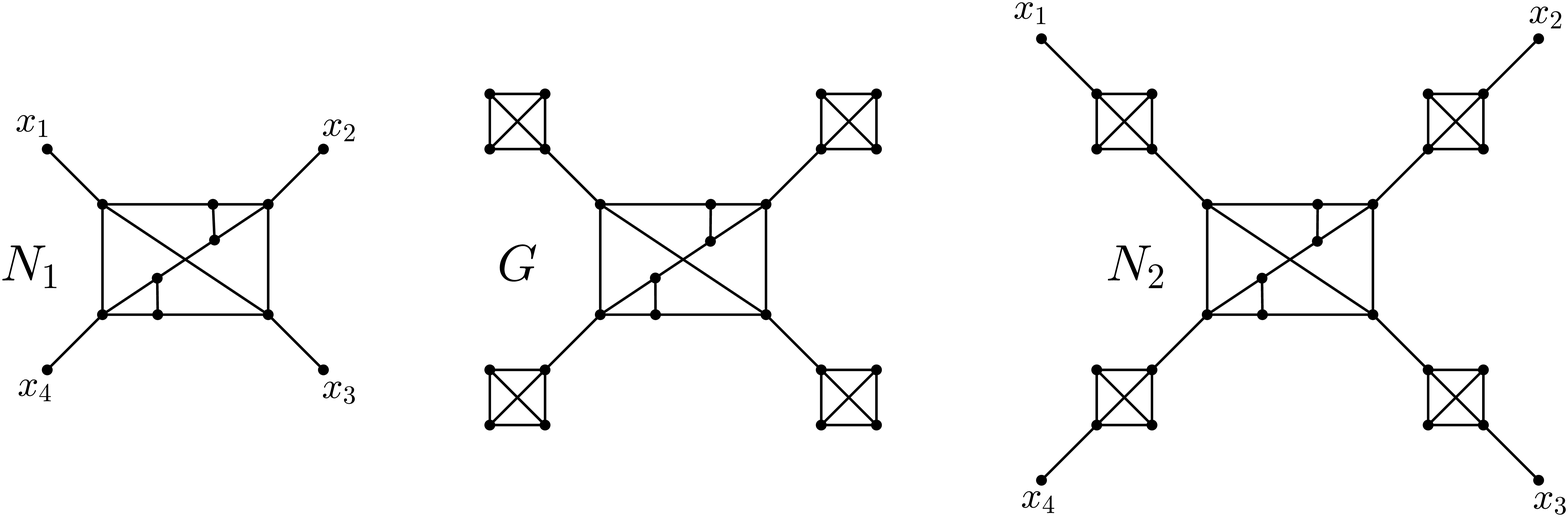}
    \caption{Graph $G$ obtained from the phylogenetic network $N_1$ by replacing each leaf of $N_1$ by the complete graph $K_4$. Note that $G$ is its own $\LS$ graph. Moreover, $G$ is the $\LS$ graph of the phylogenetic network $N_2$.}
    \label{Fig_GadgetEdges}
\end{figure}

\section{Discussion} 
In the present manuscript, we have introduced and analyzed two classes of proximity measures for edge-basedness of phylogenetic networks: The first one is based on the respective network itself, whereas the second one is based on its $\LS$ graph. Note that all of the measures we have introduced also work with general graphs and not only with phylogenetic networks, which might make them also interesting for graph theorists.

Furthermore, we have shown that some of these measures have substantially different properties, as they can lead to different rankings of networks -- i.e., they can differ in their decision concerning which one of two given networks is \enquote{closer} to being edge-based. Therefore, it is an interesting question for future research to determine which one of the introduced measures leads to biologically more plausible results. Moreover, it might be possible that another class of proximity measures operating on the blobs or blocks of $N$ or $\LS(N)$ instead of on the entire graph leads to better results; this will have to be investigated by future research.

As we have shown, the mere fact that deleting edges makes a network more likely to be edge-based, whereas adding edges makes it more likely to be tree-based, highlights that edge-basedness might be a concept that is biologically more relevant than tree-basedness. This is because networks with plenty of so-called reticulation events rarely occur in nature. Thus, edge-basedness and the distance of a network from it are very relevant for biological purposes.

We have also shown, however, that the concepts discussed here have an immense overlap with topics of classic graph theory. This makes edge-based networks also relevant for mathematicians. For instance, we have seen that while deciding whether a given network is edge-based is easy, it is generally NP-hard to determine the distance of a non-edge-based network to the nearest edge-based graph for all of the measures we have introduced. Thus, a very relevant problem for future research is to come up with good approximation algorithms for these measures. Last but not least, another intriguing mathematical challenge is finding good proximity measures to edge-basedness that can be calculated in polynomial time.

\section*{Acknowledgement} MF and TNH wish to thank the joint research project \textit{\textbf{DIG-IT!}} supported by the European Social Fund (ESF), reference: ESF/14-BM-A55-0017/19, and the Ministry of Education, Science and Culture of Mecklenburg-Vorpommerania, Germany. KW was supported by The Ohio State University’s President’s Postdoctoral Scholars Program.

\bibliographystyle{plainnat}
\bibliography{References}

\end{document}